\documentclass[11pt,onecolumn,letterpaper]{IEEEtran}
\def\TITReview{1}

\usepackage{amsmath,amssymb,amsfonts,amsthm,mathrsfs}
\usepackage{cite}
\usepackage{booktabs,array}
\usepackage{graphicx}
\usepackage{flafter}
\usepackage{tikz}
\usetikzlibrary{calc,positioning,patterns,decorations.pathreplacing}

\newtheorem{theorem}{Theorem}
\newtheorem{lemma}[theorem]{Lemma}
\newtheorem{proposition}[theorem]{Proposition}
\newtheorem{corollary}[theorem]{Corollary}

\theoremstyle{definition}
\newtheorem{definition}[theorem]{Definition}
\theoremstyle{remark}
\newtheorem{remark}[theorem]{Remark}

\newcommand{\X}{\mathcal{X}}
\newcommand{\F}{\mathcal{F}}
\newcommand{\W}{\mathcal{W}}
\newcommand{\Psh}{\mathcal{P}}
\newcommand{\Ssh}{\mathcal{S}}
\newcommand{\e}{\mathbf{e}}
\DeclareMathOperator{\pref}{pref}
\DeclareMathOperator{\suff}{suff}
\ifdefined\TITReview
  \def\WideMatrixFigure{1}
\newcommand{\matrixfigurewidth}{\textwidth}
\else
  \newcommand{\matrixfigurewidth}{0.98\columnwidth}
\fi

\usepackage[hidelinks,bookmarks=true,bookmarksnumbered=true,bookmarksopen=true,bookmarksopenlevel=1]{hyperref}
\hypersetup{pdftitle={The 3/4 Conjecture for q-Ary Fix-Free Codes With at Most Three Distinct Codeword Lengths},pdfauthor={Weiguo Gao and Zhi Shan}}
\begin{document}

\title{The 3/4 Conjecture for \(q\)-Ary Fix-Free Codes With at Most
Three Distinct Codeword Lengths}

\author{Weiguo~Gao and Zhi~Shan%
\thanks{The authors are with the School of Mathematics,
Fudan University, Shanghai, China.
E-mail: wggao@fudan.edu.cn; 25110180013@m.fudan.edu.cn.}}

\maketitle

\begin{abstract}
We prove the \(3/4\) conjecture for \(q\)-ary fix-free codes with at most
three distinct codeword lengths, for every integer \(q\geq2\).
Every prescribed length distribution with Kraft sum at most \(3/4\)
is realized by a deterministic construction.
We introduce a matrix approach based on an exact identity for the overlap
between forbidden prefix extensions and suffix residuals.
Natural numerical order fixes the shortest layer, and the remaining
selection problem is expressed through row and column counts.
A fixed-cardinality interpolation theorem supplies feasible sets of every
intermediate cardinality between nested endpoints, provided their
differences satisfy one-sided uniqueness and either acyclicity or integral
slack. Reverse-order selection handles the uniform cases directly;
in the remaining cases, layer completion and aligned groups provide
endpoints for interpolation. Together, these methods extend the binary
three-length result to arbitrary finite alphabets and give a deterministic
procedure for constructing the code.
\end{abstract}

\begin{quote}\small\noindent\textbf{Keywords:}
Bifix codes, fix-free codes, Kraft inequality, matrix methods,
variable-length coding.
\end{quote}

\section{Introduction}

A fix-free (or bifix) code has no codeword that is a
proper prefix or suffix of another. The suffix condition permits reverse
parsing and supports applications to bidirectional decoding and error
recovery~\cite{Schuetzenberger,GilbertMoore,Takishima}.

Kraft's theorem characterizes prefix-code length multisets by
\(\sum_i q^{-\lambda_i}\leq1\)~\cite{Kraft}. For fix-free codes, the two forbidden shadows
interact through the actual codewords. Ahlswede, Balkenhol, and
Khachatrian proved the binary \(1/2\) sufficient bound and conjectured
that \(3/4\) is the optimal universal constant~\cite{ABK}.
The \(q\)-ary conjecture asks whether
\begin{equation}
  \sum_{i=1}^k \mu_i q^{-\lambda_i}\leq\frac34,
  \qquad \lambda_1<\cdots<\lambda_k,
  \label{eq:kraft}
\end{equation}
guarantees a fix-free code with \(\mu_i\) words of each length \(\lambda_i\).

Harada and Kobayashi gave a deterministic two-length
construction~\cite{HaradaKobayashi}. Shadow methods, regular systems,
and directed de Bruijn graphs establish further \(q\)-ary
families~\cite{DeppeSchnettler,SchnettlerSurvey}. Binary sufficient
conditions include those of Kukorelly and Zeger~\cite{KukorellyZeger}
and Yekhanin~\cite{Yekhanin04}; approximately \(93.8\%\) of the Kraft-sum-\(3/4\)
vectors considered in~\cite{AghajanKhosravifard} satisfy the latter.
Savari, Yazdi, Abedini, and Khatri treated a restricted binary
three-length case~\cite{SavariEtAl}. Congero and Zeger proved the full
binary three-length theorem by explicit constructions and a random-coding
argument~\cite{CongeroZeger}.

We introduce a deterministic matrix approach that extends the
three-length theorem to every integer \(q\geq2\). Whereas the binary argument
in~\cite{CongeroZeger} averages over prescribed patterns to retain enough
longest words, we fix the shortest layer in natural order and count the
remaining longest words through an exact overlap-matrix identity.
Selecting a middle-layer word enlarges the prefix shadow and reduces
the suffix residual; the matrix formulation tracks both changes.

Theorem~\ref{thm:matrix-interpolation} converts this coupled selection
problem into the construction of nested feasible endpoints, provided
their differences satisfy one-sided uniqueness and either acyclicity
or integral slack. It supplies a deterministic greedy ordering between
the endpoints. In the uniform regimes, reverse order already controls
every cardinality. In the two remaining families, column-layer completion
and phase-aligned groups provide the endpoints to which interpolation
applies. These constructions accommodate the alphabet-dependent block
scales and parity. Section~VI gives the procedure and its implementation cost.

\begin{theorem}[Main result]
\label{thm:main-three-length}
Let \(q\geq2\), let \(1\leq s\leq3\), and let
\(\lambda_1<\cdots<\lambda_s\) and \(\mu_1,\ldots,\mu_s\) be positive
integers.  If
\[
 \sum_{i=1}^s\mu_iq^{-\lambda_i}\leq\frac34,
\]
then there exists a \(q\)-ary fix-free code having exactly \(\mu_i\)
codewords of length \(\lambda_i\) for every \(i\).  Moreover, the proof
gives a deterministic construction determined by the parameters and
the numerical indices of the codewords.
\end{theorem}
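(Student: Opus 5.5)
We prove the statement by reducing it to a single three-layer selection problem. We take the shortest layer in natural numerical order, encode the interaction between the middle and longest layers as an overlap matrix, and then build the middle layer by interpolating between nested feasible endpoints.

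\emph{Reductions.} First we dispose of \(s\le2\). For \(s=1\) any \(\mu_1\le q^{\lambda_1}\) words of one length are trivially fix-free. For \(s=2\) we invoke the deterministic two-length construction of Harada and Kobayashi, or rerun the argument below with an empty middle layer. For \(s=3\) we write \(a=\lambda_1<b=\lambda_2<c=\lambda_3\) and index \(q\)-ary words by their numerical values. The shortest layer \(W_1\) is taken to be the first \(\mu_1\) words of length \(a\) in natural order. Since \(\mu_1q^{-a}\le 3/4\), these words form an initial numerical interval, and the words of length \(b\) that avoid \(W_1\) as a prefix or suffix can be counted explicitly. Their prefix part is itself an interval.

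\emph{Counting.} Next, for a candidate middle set \(M\) of length-\(b\) words, we count the admissible length-\(c\) words exactly. These are the words with no prefix in \(W_1\cup M\) and no suffix in \(W_1\cup M\). Writing the length-\(c\) words as rows indexed by prefixes and columns indexed by suffixes, the overlap identity expresses this count as
\[
q^{c}-|\text{prefix shadow}|-|\text{suffix shadow}|+|\text{overlap}|.
\]
The overlap is read off from row and column counts of a 0/1 matrix. Each middle word contributes its prefix rows and its suffix columns. The condition to be met is then
\[
N_c(M)\ \ge\ \mu_3 \qquad\text{with}\qquad |M|=\mu_2,
\]
where \(N_c(M)\) is the number of admissible length-\(c\) words. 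We also need \(M\) to avoid \(W_1\) and to satisfy the internal fix-free conditions; the latter are automatic within one length.

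\emph{Interpolation.} The core step is to apply Theorem~\ref{thm:matrix-interpolation}. We exhibit nested feasible sets \(M_0\subseteq M_1\) with \(|M_0|\le\mu_2\le|M_1|\) whose difference satisfies one-sided uniqueness, together with either acyclicity or integral slack. The theorem then yields a feasible set of exact cardinality \(\mu_2\), chosen greedily.

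We split by regime according to the block scales \(q^{b-a}\) and \(q^{c-b}\) and the size of \(\mu_1q^{-a}\).
\begin{itemize}
\item In the uniform regimes, listing the candidate middle words in reverse numerical order makes every cardinality feasible at once. Here we check that the Kraft deficit
\[
\tfrac34-\mu_1q^{-a}-\mu_2q^{-b}
\]
dominates the overlap loss at each step. This should follow from a telescoping bound on the row and column counts.
\item In the remaining two families, we use column-layer completion to fill entire suffix columns, which produces one endpoint. Phase-aligned groups of \(q^{c-b}\) words give the other endpoint, and we then interpolate between the two.
\end{itemize}

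\emph{Main obstacle.} We expect the nonuniform case analysis to be the hardest step. There we must verify that the endpoints actually satisfy \(N_c\ge\mu_3\) whenever the total Kraft sum is at most \(3/4\). This needs the exact quadratic overlap term to be bounded against \(q\)-dependent rounding and parity effects. We would first attempt to prove the endpoint inequalities with \(q\) treated as a real parameter, reducing them to \(x+y-xy\)-type bounds in the normalized densities \(x=\mu_1q^{-a}\) and \(y=\mu_2q^{-b}\). After that we would handle the floors separately, using integral slack. Determinism follows because every choice is made by numerical order and by the greedy rule of Theorem~\ref{thm:matrix-interpolation}.
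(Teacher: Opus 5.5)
Your outline follows the paper's architecture: natural-order \(F_1\), the overlap identity, the quarter criterion, Theorem~\ref{thm:matrix-interpolation}, and reverse order in Regimes I, II, IV. But it leaves out the part that proves anything. You never construct or verify endpoints in Regimes III, V, VI; you describe that step only as something you ``would first attempt.'' Even in the uniform regimes, ``telescoping'' is asserted rather than shown. What is actually needed there is the order property of reverse truncation (right-loaded column counts and nondecreasing row counts, which uses \(\lambda_1\le\alpha\)), fed into Abel summation.

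Your sketch of the nonuniform regimes also misidentifies the mechanism. A single nested pair \(M_0\subseteq M_1\) cannot be interpolated in general. Two full column layers already hit every column twice, and in Regime III they have at most two row indices, so both \eqref{eq:column-unique} and \eqref{eq:row-unique} fail. This is why the paper advances one layer, one complete \(g\)-group, or one column-disjoint packet at a time, and certifies each intermediate endpoint separately:
\begin{itemize}
\item in III/VI: constant-profile anchors, Chebyshev bounds at group endpoints, row-block transitions, and the terminal residual;
\item in V: the phase-slack identity, the compensation of Appendix~\ref{app:regv-phase-counts}, the integral slack of Appendix~\ref{app:standard-integral-slack}, the entry bridges, and the critical odd-alphabet branch.
\end{itemize}
Column-layer completion and phase-aligned groups are not the two endpoints of one interpolation. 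The first treats III and VI, the second treats V, and the groups have size \(g=q^{\lambda_1-\alpha}\), not \(q^{\lambda_3-\lambda_2}\).

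Your proposed reduction to real-parameter inequalities in \(x=\mu_1q^{-\lambda_1}\) and \(y=\mu_2q^{-\lambda_2}\) cannot close this gap. With \(K=x+y\), the profile sums give
\[
 \Phi(F_2)=q^{\lambda_3}K(1-K)+\Bigl(a^{\mathsf T}c-\frac{(\e^{\mathsf T}a)(\e^{\mathsf T}c)}{d}\Bigr).
\]
So \(\Phi\le q^{\lambda_3}/4\) says exactly that the covariance term is at most \(q^{\lambda_3}(K-\tfrac12)^2\). Now take \(y=\tfrac12-x\) with total Kraft sum \(3/4\). Then the quarter criterion coincides with the exact criterion, and the covariance must be nonpositive with no slack at all. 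Its sign depends on which words are chosen relative to the natural-order cutoffs: the phase of \(\mu_1\) modulo \(g\), the block scales \(H,g,d\), and the parity of \(q\). Different choices at the same \(x,y\) give different signs, so no inequality in \(x\) and \(y\) alone can certify feasibility.

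The missing core of the proof is these arrangement-dependent certificates for every endpoint, together with differences that meet the interpolation hypotheses.
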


Sections II--IV develop extension counts, natural-order selection, and
interpolation. Section~V constructs the required sets in six regimes;
Section~VI completes the proof and gives the algorithm. The appendices
establish the exceptional endpoint estimates.

\section{Extension Counts and the Overlap Matrix}

We count admissible extensions through the overlap of prefix and suffix restrictions. The matrix identity yields the quarter-bound criterion for the construction.

Let \(\F\subseteq\X^*\) be a finite fix-free code over \(\X=\{0,1,\ldots,q-1\}\), with \(q\geq2\), and put
\[
  K(\F)=\sum_{f\in\F}q^{-|f|}.
\]
Write \(r=\max_{f\in\F}|f|\), with \(r=0\) for \(\F=\varnothing\).
For every integer \(n\geq r\), define the length-\(n\)
prefix and suffix shadows by
\begin{align}
  \Psh_n(\F)&=\bigcup_{f\in\F} f\X^{n-|f|},\\
  \Ssh_n(\F)&=\bigcup_{f\in\F} \X^{n-|f|}f.
\end{align}
Fix-freeness makes each shadow union disjoint, with \(q^{n-|f|}\)
words contributed by \(f\). Hence
\begin{equation}
  |\Psh_n(\F)|=|\Ssh_n(\F)|=q^nK(\F).
  \label{eq:shadow-size}
\end{equation}

Let \(\W_n(\F)\) denote the set of length-\(n\) words outside \(\F\) that can be added
to \(\F\) while preserving the fix-free property.

\subsection{The nonoverlap range}

\begin{proposition}[Nonoverlap extension]
\label{prop:nonoverlap}
If \(n\geq2r\), then
\begin{equation}
  |\W_n(\F)|=q^n\bigl(1-K(\F)\bigr)^2.
  \label{eq:nonoverlap-count}
\end{equation}
Consequently, if a length-\(n\) layer of Kraft contribution \(K_n\) is
required and \(K(\F)+K_n\leq3/4\), then
\(|\W_n(\F)|\geq q^nK_n\).
\end{proposition}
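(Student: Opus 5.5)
I will characterize \(\W_n(\F)\) directly and count it by independence of prefix and suffix positions. A word \(w\in\X^n\) with \(n\geq r\) can be added to \(\F\) while keeping the fix-free property exactly when three conditions hold. First, no codeword is a prefix of \(w\), i.e.\ \(w\notin\Psh_n(\F)\). Second, no codeword is a suffix of \(w\), i.e.\ \(w\notin\Ssh_n(\F)\). Third, \(w\) is not a proper prefix or suffix of any codeword. Since \(n\geq r\geq|f|\) for every \(f\in\F\), the third condition is automatic. If \(n=|f|\) for some \(f\), then \(w=f\) lies in \(\Psh_n(\F)\); so "outside \(\F\)" is also implied by the first condition. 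Hence
\[
  \W_n(\F)=\X^n\setminus\bigl(\Psh_n(\F)\cup\Ssh_n(\F)\bigr).
\]

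The key step is that for \(n\geq2r\) the two shadow conditions constrain disjoint coordinate blocks. Membership of \(w\) in \(\Psh_n(\F)\) depends only on the first \(r\) letters \(w_1\cdots w_r\), since each \(f\) has length at most \(r\). Define \(A\subseteq\X^r\) as the set of length-\(r\) words having some codeword as a prefix. Then \(w\in\Psh_n(\F)\) iff \(w_1\cdots w_r\in A\).

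Similarly, membership in \(\Ssh_n(\F)\) depends only on the last \(r\) letters. Define \(B\subseteq\X^r\) as the set of length-\(r\) words having some codeword as a suffix. Then \(w\in\Ssh_n(\F)\) iff \(w_{n-r+1}\cdots w_n\in B\).

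Because \(n\geq2r\), these two blocks do not overlap, and there are \(n-2r\) free middle letters. Applying \eqref{eq:shadow-size} at length \(r\) gives \(|A|=|B|=q^rK(\F)\). Therefore
\[
  |\W_n(\F)|=(q^r-|A|)(q^r-|B|)\,q^{n-2r}=q^n\bigl(1-K(\F)\bigr)^2 .
\]
The edge case \(\F=\varnothing\) (so \(r=0\), \(K=0\)) is consistent with this formula. The only point needing care is the characterization of \(\W_n\), in particular that \(w\notin\F\) and the reverse containments are automatic. This is routine rather than a real obstacle.

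For the consequence, write \(K=K(\F)\). We need \(q^n(1-K)^2\geq q^nK_n\), and it suffices to show \((1-K)^2\geq K_n\). From \(K+K_n\leq3/4\) we get \(K_n\leq 3/4-K\), and \(K\leq 3/4\) since \(K_n\geq0\). So it suffices to check
\[
  (1-K)^2-(3/4-K)=K^2-K+1/4=(K-1/2)^2\geq0 .
\]
This holds for all \(K\), with equality only at \(K=1/2\). This is exactly where the quarter bound enters, and the inequality \(|\W_n(\F)|\geq q^nK_n\) follows.
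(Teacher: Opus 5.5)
Your proof is correct and follows the paper's argument: for \(n\geq2r\) the prefix and suffix constraints act on disjoint length-\(r\) blocks, each with \(q^r(1-K(\F))\) admissible values, the middle block contributes \(q^{n-2r}\), and the consequence is the identity \((1-K)^2-(3/4-K)=(K-1/2)^2\). The only difference is that you spell out why \(\W_n(\F)\) is exactly the complement of the union of the two shadows, which the paper leaves implicit.
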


\begin{proof}
The length-\(r\) prefix and suffix occupy disjoint coordinates.  Each
has \(q^{r}(1-K(\F))\) admissible values, and the middle block has
\(q^{n-2r}\) values, which proves \eqref{eq:nonoverlap-count}.  Writing
\(x=K(\F)\),
\[
  (1-x)^2-\left(\frac34-x\right)
  =\left(x-\frac12\right)^2\geq0.
\]
Since \(K_n\leq3/4-x\), the second assertion follows.
\end{proof}

Henceforth let the target length \(L\) satisfy
\begin{equation}
  r<L<2r.
  \label{eq:overlap-range}
\end{equation}

\subsection{Matrices in the overlap range}

The matrix construction also applies when \(r\) is any integer upper bound on the codeword lengths. Assume \eqref{eq:overlap-range} and set
\begin{equation}
  h=2r-L,\qquad d=q^h.
  \label{eq:h-d}
\end{equation}
Rows and columns are indexed by the words in \(\X^h\), in base-\(q\)
numerical order.  Let \(\e\in\mathbb R^d\) be the all-ones vector.

\begin{definition}
\label{def:matrices}
Define matrices \(M,A,B\in\mathbb Z_{\geq0}^{d\times d}\), with rows
and columns indexed by \(a,b\in\X^h\), by
\begin{align}
M_{a,b}
 &=\bigl|\{x\in\X^r:\pref_h(x)=a,\ \suff_h(x)=b\}\bigr|,\\
A_{a,b}
 &=\bigl|\{x\in\Psh_r(\F):\pref_h(x)=a,\ \suff_h(x)=b\}\bigr|,\\
B_{a,b}
 &=\bigl|\{x\in\Ssh_r(\F):\pref_h(x)=a,\ \suff_h(x)=b\}\bigr|.
\end{align}
We call \(M,A,B\) the codeword, prefix-shadow, and suffix-shadow
matrices, respectively.  Since \(\Psh_r(\F),\Ssh_r(\F)\subseteq\X^r\),
we have \(0\leq A,B\leq M\) entrywise.  The suffix-residual matrix is
\begin{equation}
  C=M-B.
  \label{eq:C-def}
\end{equation}
\end{definition}

\begin{proposition}[Two forms of the codeword matrix]
\label{prop:two-forms}
Assume \(r<L<2r\) and let \(h=2r-L\).
\begin{enumerate}
\item If \(L\geq3r/2\), equivalently \(2h\leq r\), then
\begin{equation}
  M=q^{r-2h}\e\e^{\mathsf T}.
  \label{eq:dense-M}
\end{equation}
\item If \(L<3r/2\), equivalently \(2h>r\), put
\(s=2h-r=3r-2L\).  Then
\begin{equation}
  M_{a,b}=
  \begin{cases}
  1,&\suff_s(a)=\pref_s(b),\\
  0,&\text{otherwise}.
  \end{cases}
  \label{eq:sparse-M}
\end{equation}
Every row and every column then contains exactly
\(q^{r-h}=q^{L-r}\) ones.
\end{enumerate}
\end{proposition}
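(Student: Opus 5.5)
The plan is to count directly the words \(x\in\X^r\) whose length-\(h\) prefix is \(a\) and whose length-\(h\) suffix is \(b\). Everything depends on how these two windows sit inside the \(r\) coordinates of \(x\). The prefix occupies positions \(1,\dots,h\) and the suffix occupies positions \(r-h+1,\dots,r\). They are disjoint exactly when \(h\leq r-h\), that is \(2h\leq r\). Since \(h=2r-L\), this condition reads \(2(2r-L)\leq r\), which is equivalent to \(L\geq 3r/2\). This is the claimed equivalence in part~1. Part~2 is the complementary case \(2h>r\).

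In the disjoint case \(2h\leq r\), the word \(x\) splits as \(x=a\,w\,b\) with \(w\in\X^{r-2h}\) free. So \(M_{a,b}=q^{r-2h}\) for every pair \((a,b)\), which is \eqref{eq:dense-M}.

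In the overlapping case \(2h>r\), the two windows share the positions \(r-h+1,\dots,h\), which is \(s=2h-r\) coordinates. We have \(s\leq h\) because \(h\leq r\), and this follows from \(L>r\). On the shared positions, \(x\) must agree with both \(\suff_s(a)\) and \(\pref_s(b)\). Together, the two windows cover all \(r\) positions, since \(h+h-s=r\). Therefore \(x\) is uniquely determined as \(a\) followed by the last \(h-s\) letters of \(b\), and this word has the required suffix \(b\) precisely when \(\suff_s(a)=\pref_s(b)\). Hence \(M_{a,b}\in\{0,1\}\) with the stated condition. Substituting \(s=3r-2L\) is immediate arithmetic.

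For the row and column counts, fix \(a\). A column \(b\) has a one exactly when \(\pref_s(b)\) equals the fixed string \(\suff_s(a)\), and the remaining \(h-s\) letters of \(b\) are free. This gives \(q^{h-s}=q^{r-h}\) ones per row, and the column count follows symmetrically. Finally, \(r-h=r-(2r-L)=L-r\).

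No step is a real obstacle. The only care needed is the position bookkeeping, in particular verifying \(0<s\leq h\) so that the overlap is well defined.
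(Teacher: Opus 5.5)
Your proposal is correct and follows essentially the same route as the paper's proof. It separates the disjoint-window case, where the free middle block of length $r-2h$ gives $M=q^{r-2h}\e\e^{\mathsf T}$, from the overlapping case with $s=2h-r$ shared symbols, and it counts the $h-s=r-h$ free symbols for the row and column totals. Your explicit checks of $0<s<h$ and of the equivalence $L\geq 3r/2 \iff 2h\leq r$ only make precise what the paper leaves implicit.
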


\begin{proof}
If \(2h\leq r\), a word with prescribed length-\(h\) prefix \(a\) and
suffix \(b\) has the unique form \(azb\) after the middle word
\(z\in\X^{r-2h}\) is chosen.  This gives \eqref{eq:dense-M}.

If \(2h>r\), the prescribed prefix and suffix overlap in
\(s=2h-r\) symbols.  They determine a length-\(r\) word exactly when
their overlap symbols agree, which gives \eqref{eq:sparse-M}.  For
fixed \(a\), the remaining \(h-s=r-h\) symbols of \(b\) are free; the
column count is the same.
\end{proof}

Fig.~\ref{fig:matrix-examples} illustrates both forms of \(M\) for
length-\(r\) words with length-\(h\) prefix \(a\) and suffix \(b\).

\begin{figure}[!ht]
\centering
\resizebox{\matrixfigurewidth}{!}{\begin{tikzpicture}[
  x=1cm,y=1cm,
  every node/.style={font=\small},
  bit/.style={draw,minimum width=0.62cm,minimum height=0.54cm,inner sep=0pt},
  note/.style={align=center}
]
\node[anchor=west,font=\bfseries] at (0,6.55)
  {(a) \(q=2,\ r=3,\ L=5,\ h=1\)};
\node[bit,fill=black!8]  (d1) at (1.45,5.80) {\(a\)};
\node[bit,fill=black!22] (d2) at (2.07,5.80) {\(z\)};
\node[bit,fill=black!38] (d3) at (2.69,5.80) {\(b\)};
\node[anchor=west] at (3.25,5.80)
  {\(x=azb,\quad a,b,z\in\{0,1\}\)};
\node[anchor=west] at (0.20,4.60) {
  \(\displaystyle
  M=\begin{array}{c|cc}
      & b=0 & b=1\\ \hline
  a=0 & 2 & 2\\
  a=1 & 2 & 2
  \end{array}
  =2\mathbf e\mathbf e^{\mathsf T}\)};

\ifdefined\WideMatrixFigure
\draw[black!35] (8.25,1.8) -- (8.25,6.85);
\else
\draw[black!35] (0,3.30) -- (8.15,3.30);
\fi

\ifdefined\WideMatrixFigure
\begin{scope}[xshift=8.55cm,yshift=3.05cm]
\else
\begin{scope}[yshift=-1.0cm]
\fi
\node[anchor=west,font=\bfseries] at (0,3.50)
  {(b) \(q=2,\ r=3,\ L=4,\ h=2,\ s=1\)};
\node[anchor=east] at (0.66,2.88) {\(a=\)};
\node[bit,fill=black!8]  (s1)  at (1.05,2.88) {\(a_1\)};
\node[bit,fill=black!22] (s2a) at (1.67,2.88) {\(\gamma\)};
\node[anchor=east] at (0.66,2.22) {\(b=\)};
\node[bit,fill=black!22] (s2b) at (1.67,2.22) {\(\gamma\)};
\node[bit,fill=black!38] (s3)  at (2.29,2.22) {\(b_2\)};
\draw[densely dashed,black!55] (s2a.south) -- (s2b.north);
\node[anchor=west] at (3.05,2.88) {\(x=a_1\gamma b_2\)};
\node[anchor=west] at (3.05,2.22) {\(\gamma=a_2=b_1\)};
\node[anchor=west] at (0.10,0.25) {
  \(\displaystyle
  M=\begin{array}{c|cccc}
       & 00&01&10&11\\ \hline
  00&1&1&0&0\\
  01&0&0&1&1\\
  10&1&1&0&0\\
  11&0&0&1&1
  \end{array}\)};
\node[anchor=west] at (5.15,0.25)
  {\(M_{a,b}=1\iff a_2=b_1\)};
\end{scope}
\end{tikzpicture}}
\caption{Exact binary examples of Proposition~\ref{prop:two-forms}.
For \(r=3,L=5\), every pair \((a,b)\) leaves the middle bit \(z\)
free, so \(M_{a,b}=2\).  For \(r=3,L=4\), the prescriptions
\(a=a_1\gamma\) and \(b=\gamma b_2\) share one symbol.  They are
compatible exactly when \(\gamma=a_2=b_1\), and then determine the
unique word \(a_1\gamma b_2\).}
\label{fig:matrix-examples}
\end{figure}

\subsection{The exact extension identity}

\begin{theorem}[Matrix extension identity]
\label{thm:identity}
For \(r<L<2r\),
\begin{equation}
  |\W_L(\F)|
  =q^L\bigl(1-K(\F)\bigr)-\e^{\mathsf T}AC\e.
  \label{eq:extension-identity}
\end{equation}
\end{theorem}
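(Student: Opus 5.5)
The plan is to describe \(\W_L(\F)\) directly through the length-\(r\) prefix and the length-\(r\) suffix of a candidate word. Then we count by conditioning on the \(h\) overlap symbols that these two windows share. That conditioning is what produces the product \(AC\).

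\emph{Step 1 (characterization).} Let \(w\in\X^L\) with \(r<L<2r\). Every codeword has length at most \(r<L\), so \(w\notin\F\) and \(w\) cannot be a proper prefix or suffix of any codeword. Adding \(w\) therefore preserves fix-freeness iff no \(f\in\F\) is a prefix or a suffix of \(w\). A codeword \(f\) with \(|f|\leq r\) is a prefix of \(w\) iff \(x=\pref_r(w)\in f\X^{r-|f|}\). Hence no codeword is a prefix of \(w\) iff \(x\notin\Psh_r(\F)\). Symmetrically, no codeword is a suffix of \(w\) iff \(y=\suff_r(w)\notin\Ssh_r(\F)\).

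\emph{Step 2 (bijection).} Since \(L<2r\), the windows \(x\) and \(y\) overlap in exactly \(2r-L=h\) positions, so \(\suff_h(x)=\pref_h(y)\). Conversely, any pair \((x,y)\in\X^r\times\X^r\) with \(\suff_h(x)=\pref_h(y)\) glues to a unique \(w\in\X^L\). This is the same gluing used in Proposition~\ref{prop:two-forms}. Consequently
\[
|\W_L(\F)|=\bigl|\{(x,y):x\notin\Psh_r(\F),\ y\notin\Ssh_r(\F),\ \suff_h(x)=\pref_h(y)\}\bigr|.
\]

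\emph{Step 3 (complementary counting on the prefix side).} First drop the restriction on \(x\). For each fixed \(y\), the compatible \(x\) are obtained by choosing \(r-h\) free leading symbols, so there are \(q^{r-h}\) of them. By \eqref{eq:shadow-size} there are \(q^r(1-K(\F))\) words \(y\notin\Ssh_r(\F)\). The number of compatible pairs with \(y\notin\Ssh_r(\F)\) is therefore
\[
q^{r-h}\,q^r\bigl(1-K(\F)\bigr)=q^L\bigl(1-K(\F)\bigr),
\]
using \(2r-h=L\). From this we subtract the compatible pairs with \(x\in\Psh_r(\F)\) and \(y\notin\Ssh_r(\F)\).

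\emph{Step 4 (matrix form of the subtracted term).} Group these pairs by the common overlap word \(c=\suff_h(x)=\pref_h(y)\in\X^h\).
\begin{itemize}
\item The number of \(x\in\Psh_r(\F)\) with \(\suff_h(x)=c\) is the column sum \(\sum_aA_{a,c}=(\e^{\mathsf T}A)_c\).
\item The number of \(y\notin\Ssh_r(\F)\) with \(\pref_h(y)=c\) is the row sum \(\sum_b(M-B)_{c,b}=(C\e)_c\), since \(C=M-B\) from \eqref{eq:C-def} counts exactly the residual words in each prefix/suffix class.
\end{itemize}
Within each class the choices of \(x\) and \(y\) are independent, so summing over \(c\) gives \(\sum_c(\e^{\mathsf T}A)_c(C\e)_c=\e^{\mathsf T}AC\e\). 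This yields \eqref{eq:extension-identity}.

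The main point needing care is Step 2: we must confirm that prefix and suffix admissibility of \(w\) are each determined entirely by its length-\(r\) window. This relies on \(r\) being an upper bound on all codeword lengths, and the argument does not need \(r\) to equal the maximum length. We must also confirm that the gluing map is a bijection onto \(\X^L\). Once these are settled, the rest is bookkeeping of row and column sums.
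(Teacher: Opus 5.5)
Your proposal is correct and follows the same route as the paper: count the \(q^L(1-K(\F))\) words whose length-\(r\) suffix avoids \(\Ssh_r(\F)\), subtract those whose length-\(r\) prefix lies in \(\Psh_r(\F)\), and identify the subtracted count with \(\e^{\mathsf T}AC\e\) by grouping on the \(h\) shared overlap symbols. Your version makes explicit two things the paper's short proof leaves implicit: that admissibility is decided by the length-\(r\) windows, and that overlap-compatible window pairs glue bijectively to words of length \(L\).
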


\begin{proof}
There are \(q^L(1-K(\F))\) length-\(L\) words whose length-\(r\)
suffix lies outside \(\Ssh_r(\F)\).  Such a word is inadmissible
exactly when its length-\(r\) prefix belongs to \(\Psh_r(\F)\).
Multiplication of \(A\) and \(C\) matches the length-\(h\) suffix of
the first endpoint word with the length-\(h\) prefix of the second.
Hence \(\e^{\mathsf T}AC\e\) counts precisely these remaining
inadmissible words.
\end{proof}

For \(K_L=\mu q^{-L}\), Theorem~\ref{thm:identity} gives the exact criterion
\begin{equation}
  \e^{\mathsf T}AC\e
  \leq q^L\bigl(1-K(\F)-K_L\bigr).
  \label{eq:exact-criterion}
\end{equation}
Under the global \(3/4\) bound, the following multiplicity-independent
criterion suffices.

\begin{corollary}[Quarter criterion]
\label{cor:quarter}
If \(K(\F)+K_L\leq3/4\) and
\begin{equation}
  \e^{\mathsf T}AC\e\leq\frac{q^L}{4},
  \label{eq:quarter}
\end{equation}
then at least \(q^LK_L\) length-\(L\) words can be added.
\end{corollary}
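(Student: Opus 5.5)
We argue directly from the matrix extension identity, Theorem~\ref{thm:identity}. Write \(x=K(\F)\), and note that \eqref{eq:overlap-range} is in force because the matrices are only defined in the overlap range. Substituting the hypothesis \eqref{eq:quarter} into \eqref{eq:extension-identity} gives the lower bound
\[
  |\W_L(\F)|\geq q^L(1-x)-\frac{q^L}{4}=q^L\Bigl(\frac34-x\Bigr).
\]

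The hypothesis \(x+K_L\leq3/4\) says \(K_L\leq 3/4-x\), so the bound becomes
\[
  |\W_L(\F)|\geq q^LK_L .
\]
Equivalently, the exact criterion \eqref{eq:exact-criterion} holds. Its right side is \(q^L(1-x-K_L)\), which is at least \(q^L/4\) by the Kraft hypothesis, and \(q^L/4\geq\e^{\mathsf T}AC\e\) by \eqref{eq:quarter}.

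It remains to interpret ``can be added.'' The count \(|\W_L(\F)|\geq q^LK_L\) bounds the number of single words, each of which is individually admissible with respect to \(\F\). Adding \(q^LK_L=\mu\) words of the same length \(L\) at once creates no new conflicts among them, because distinct words of equal length are never proper prefixes or suffixes of each other. Hence any \(\mu\) words of \(\W_L(\F)\) can be adjoined simultaneously, and the enlarged code is still fix-free.

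The only real subtlety is this last point, that admissibility is pairwise and therefore unaffected by other words of the same length. Everything else is a one-line substitution into the identity.
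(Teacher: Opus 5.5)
Your proof is correct and follows the paper's route. The paper also derives this directly from Theorem~\ref{thm:identity}, as the exact criterion \eqref{eq:exact-criterion} combined with \(\e^{\mathsf T}AC\e\leq q^L/4\leq q^L(1-K(\F)-K_L)\), and your remark that admissible equal-length words can be adjoined simultaneously matches the paper's justification for forming \(F_1\cup F_2\).
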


At total Kraft sum \(3/4\), the quarter criterion is exact.

Define the two profiles
\begin{equation}
  a=A^{\mathsf T}\e,\qquad c=C\e.
  \label{eq:profiles}
\end{equation}
Then \(\e^{\mathsf T}AC\e=a^{\mathsf T}c\), while
\begin{equation}
  \e^{\mathsf T}a=q^rK(\F),\qquad
  \e^{\mathsf T}c=q^r(1-K(\F)).
  \label{eq:profile-sums}
\end{equation}

\begin{proposition}[Profile covariance criterion]
\label{prop:covariance}
If
\begin{equation}
  d\,a^{\mathsf T}c
  \leq(\e^{\mathsf T}a)(\e^{\mathsf T}c),
  \label{eq:covariance}
\end{equation}
then \eqref{eq:quarter} holds.
\end{proposition}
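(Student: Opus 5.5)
The argument is direct. We combine the covariance hypothesis with the two profile sums \eqref{eq:profile-sums} and bound the resulting product by AM--GM.

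Write \(x=K(\F)\). By \eqref{eq:profile-sums}, \((\e^{\mathsf T}a)(\e^{\mathsf T}c)=q^{2r}x(1-x)\). Hypothesis \eqref{eq:covariance} and the identity \(\e^{\mathsf T}AC\e=a^{\mathsf T}c\) then give
\[
  \e^{\mathsf T}AC\e=a^{\mathsf T}c\leq\frac{q^{2r}x(1-x)}{d}.
\]
By \eqref{eq:h-d}, \(d=q^{h}=q^{2r-L}\), so the right-hand side equals \(q^{L}x(1-x)\).

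The remaining step is the elementary inequality \(x(1-x)\leq1/4\), which holds for every real \(x\) because \(1/4-x(1-x)=(x-1/2)^2\geq0\). This yields \(\e^{\mathsf T}AC\e\leq q^L/4\), which is exactly \eqref{eq:quarter}.

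There is no real obstacle. The only points to check are bookkeeping. First, \(d=q^{2r-L}\) converts \(q^{2r}/d\) into \(q^L\). Second, the profile sums are correct: \(\e^{\mathsf T}a=\e^{\mathsf T}A\e=|\Psh_r(\F)|\) and \(\e^{\mathsf T}c=q^r-|\Ssh_r(\F)|\), both of which follow from \eqref{eq:shadow-size} with \(n=r\). Both are already stated, so the proof is a three-line computation.
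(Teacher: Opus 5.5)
Your proposal is correct and follows the paper's proof: substitute the profile sums \eqref{eq:profile-sums} into \eqref{eq:covariance}, use \(d=q^{2r-L}\) to obtain \(a^{\mathsf T}c\leq q^LK(\F)(1-K(\F))\), and then bound \(x(1-x)\leq1/4\). Your check that \(\e^{\mathsf T}a=|\Psh_r(\F)|\) and \(\e^{\mathsf T}c=q^r-|\Ssh_r(\F)|\) fills in a step that the paper leaves implicit.
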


\begin{proof}
Using \(d=q^{2r-L}\) and \eqref{eq:profile-sums},
\[
  a^{\mathsf T}c
  \leq q^LK(\F)(1-K(\F))
  \leq\frac{q^L}{4}.\qedhere
\]
\end{proof}

\section{Natural Order and the Two-Length Case}

The two-length theorem is known~\cite{HaradaKobayashi,DeppeSchnettler},
and has a short probabilistic proof~\cite[Sec.~III]{CongeroZeger}.
The following deterministic proof fixes the shortest layer used later.

Let \(n<L\), let \(\mu,\nu\) be nonnegative integers, and suppose
\begin{equation}
  \mu q^{-n}+\nu q^{-L}\leq\frac34.
  \label{eq:two-kraft}
\end{equation}
For \(x=x_1\cdots x_n\in\X^n\), write
\([x]_q=\sum_{j=1}^n x_jq^{n-j}\), and choose
\begin{equation}
  F_n(\mu)=\{x\in\X^n:[x]_q<\mu\}.
  \label{eq:natural-order}
\end{equation}

\begin{theorem}[Natural-order construction]
\label{thm:natural-order}
For every \(L>n\), the set \(F_n(\mu)\) leaves at least \(\nu\)
admissible length-\(L\) words whenever \eqref{eq:two-kraft} holds.
\end{theorem}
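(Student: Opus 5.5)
The plan is to reduce the statement to the profile covariance criterion, Proposition~\ref{prop:covariance}, with \(\F=F_n(\mu)\) and \(r=n\). The cases \(\mu=0\) or \(\nu=0\) are trivial. A set of words all of length \(n\) is automatically fix-free, and \(K(\F)=\mu q^{-n}\). If \(L\geq2n\), Proposition~\ref{prop:nonoverlap} applies directly with \(K_L=\nu q^{-L}\) and gives \(|\W_L(\F)|\geq\nu\). So assume \(n<L<2n\), put \(h=2n-L\) and \(d=q^h\), and note that \(\Psh_n(\F)=\Ssh_n(\F)=\F\). Once \eqref{eq:covariance} is established, Proposition~\ref{prop:covariance} gives \eqref{eq:quarter}. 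Corollary~\ref{cor:quarter}, with the hypothesis \eqref{eq:two-kraft}, then yields at least \(q^LK_L=\nu\) admissible words.

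Both profiles can be identified explicitly and indexed by the same set \(\X^h\) in numerical order. Since \(\Psh_n(\F)=\F\), the entry \(a_w\) counts the \(x\in\F\) with \(\suff_h(x)=w\). Since \(\Ssh_n(\F)=\F\), we have \(C=M-B\), and \(c_w\) counts the \(x\in\X^n\setminus\F\) with \(\pref_h(x)=w\). The pairing \(a^{\mathsf T}c=\sum_w a_wc_w\) is consistent with Theorem~\ref{thm:identity}: the length-\(h\) suffix of a shadow word is matched with the length-\(h\) prefix of a residual word.

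The key step is to prove that the two profiles are oppositely monotone in \([w]_q\).

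\textbf{Suffix profile.} The words \(x\) with \(\suff_h(x)=w\) are exactly those with \([x]_q\equiv[w]_q\pmod{q^h}\). Hence
\[
a_w=\#\{0\leq t<\mu:\ t\equiv[w]_q \pmod{q^h}\}=\Bigl\lceil\frac{\mu-[w]_q}{q^h}\Bigr\rceil^{+},
\]
which is nonincreasing in \([w]_q\).

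\textbf{Prefix residual profile.} The words with \(\pref_h(x)=w\) occupy the numerical interval \(\bigl[[w]_qq^{n-h},([w]_q+1)q^{n-h}\bigr)\). The complement of the initial segment \(\{t<\mu\}\) meets these consecutive equal-length intervals in nondecreasing amounts: first \(0\), then one partial block, then full blocks of size \(q^{n-h}\). Hence \(c_w\) is nondecreasing in \([w]_q\).

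Chebyshev's sum inequality for one nonincreasing and one nondecreasing sequence of length \(d\) then gives \(d\sum_wa_wc_w\leq\bigl(\sum_wa_w\bigr)\bigl(\sum_wc_w\bigr)\), which is \eqref{eq:covariance}.

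I expect the main obstacle to be only bookkeeping rather than a real difficulty. One must get the two index conventions right: suffix classes are residue classes modulo \(q^h\), while prefix classes are intervals of length \(q^{n-h}\). One must also check that natural order makes both monotonicities point in opposite directions along the \emph{same} indexing of \(\X^h\). This opposite alignment is exactly why natural order fixes the shortest layer well, and the argument should be written to make that dependence explicit for reuse in the three-length case.
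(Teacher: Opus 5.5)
Your proposal is correct and follows essentially the same route as the paper: it uses Proposition~\ref{prop:nonoverlap} for \(L\geq2n\), and for \(n<L<2n\) it uses \(A=B\), the nonincreasing suffix profile \(a\), the nondecreasing residual prefix profile \(c\), Chebyshev's (rearrangement) inequality, and Corollary~\ref{cor:quarter}. The only cosmetic difference is that you compute both profiles explicitly, whereas the paper obtains the monotonicity of \(c\) from that of \(A\e\) and the constant row sums of \(M\).
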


\begin{proof}
Put \(x=\mu q^{-n}\).  If \(L\geq2n\),
Proposition~\ref{prop:nonoverlap} applies directly.

Assume \(n<L<2n\) and put \(h=2n-L\), \(d=q^h\). Use the matrices of Section~II
with \(r=n\). Since
\[
  \Psh_n(F_n(\mu))=\Ssh_n(F_n(\mu))=F_n(\mu),
\]
the shadows coincide:
\begin{equation}
  A=B,\qquad C=M-A.
  \label{eq:two-length-A-equals-B}
\end{equation}
Put \(a=A^{\mathsf T}\e\), \(c=C\e\).
The entry \(a_b\) counts selected words with length-\(h\) suffix \(b\).
The initial numerical interval fills suffix residues cyclically
from \(0\), so \(a\) is nonincreasing.

The length-\(h\) prefix counts \(A\e\) fill complete blocks from left to
right; hence \(A\e\) is nonincreasing. Every row of \(M\) sums to \(q^{L-n}\),
so \(c=(M-A)\e\) is nondecreasing. Rearrangement and
\eqref{eq:profile-sums} give
\begin{align}
  \e^{\mathsf T}AC\e
  =a^{\mathsf T}c
  &\leq\frac{(\e^{\mathsf T}a)(\e^{\mathsf T}c)}{d}\\
  &=\frac{\mu(q^n-\mu)}{q^{2n-L}}
   =q^Lx(1-x)
  \leq\frac{q^L}{4}.
\end{align}
Apply Corollary~\ref{cor:quarter}.
\end{proof}

The proof also gives the terminal saturation estimate
\begin{equation}
  K(F_n(\mu))
  +q^{-L}|\W_L(F_n(\mu))|\geq\frac34.
  \label{eq:natural-order-saturation}
\end{equation}
For \(n<L<2n\), use the extension identity and the preceding bound;
for \(L\geq2n\), Proposition~\ref{prop:nonoverlap} gives
\begin{align*}
  K(F_n(\mu))+q^{-L}|\W_L(F_n(\mu))|
  &=x+(1-x)^2\\
  &=\frac34+\left(x-\frac12\right)^2.
\end{align*}

Fix \(F_n(\mu)\) and \(n\leq r<L<2r\). Form \(A\) and \(C\) for target \(L\) at level \(r\).
Then
\begin{equation}
  \e^{\mathsf T}AC\e\leq\frac{q^L}{4}.
  \label{eq:representation-independent}
\end{equation}
With \(x=\mu q^{-n}\), the extension identity gives
\begin{equation}
  \e^{\mathsf T}AC\e
  =q^L(1-x)-|\W_L(F_n(\mu))|,
  \label{eq:objective-invariant}
\end{equation}
independently of \(r\). The proof of Theorem~\ref{thm:natural-order}
for \(L<2n\), and Proposition~\ref{prop:nonoverlap} for \(L\geq2n\),
bound this expression by \(q^Lx(1-x)\leq q^L/4\).

\section{The Three-Length Selection Problem}

Fixing the shortest layer leaves a middle-layer selection problem: retain enough longest words at the prescribed cardinality. We express this requirement as a matrix bound and prove an interpolation theorem that recovers intermediate cardinalities from suitable endpoints.

Let
\begin{equation}
  \lambda_1<\lambda_2<\lambda_3,\qquad
  \sum_{i=1}^3\mu_iq^{-\lambda_i}\leq\frac34.
  \label{eq:three-kraft}
\end{equation}
Choose
\[
  F_1=F_{\lambda_1}(\mu_1)
\]
in natural order, and let \(\Omega_2\subseteq\X^{\lambda_2}\) be the
set of words compatible with \(F_1\).  By
Theorem~\ref{thm:natural-order},
\begin{equation}
  |\Omega_2|\geq\mu_2.
  \label{eq:omega-size}
\end{equation}

If \(\lambda_3\geq2\lambda_2\), choose any
\(F_2\subseteq\Omega_2\) of cardinality \(\mu_2\), and then apply
Proposition~\ref{prop:nonoverlap} at target length \(n=\lambda_3\).  Thus the
remaining analysis may assume
\begin{equation}
  \lambda_2<\lambda_3<2\lambda_2.
  \label{eq:three-overlap}
\end{equation}
Set
\begin{equation}
  \alpha=\lambda_3-\lambda_2,\qquad
  \beta=2\lambda_2-\lambda_3,\qquad
  \alpha+\beta=\lambda_2.
  \label{eq:alpha-beta}
\end{equation}
The row and column indices of the final codeword matrix have length
\(\beta\).  Moreover,
\begin{equation}
\begin{aligned}
  \alpha\geq\beta
  &\iff \lambda_3\geq\frac32\lambda_2
   &&\text{(dense form)},\\
  \alpha<\beta
  &\iff \lambda_3<\frac32\lambda_2
   &&\text{(sparse form)}.
\end{aligned}
\label{eq:dense-sparse-three}
\end{equation}

For \(F_2\subseteq\Omega_2\), put
\[
  \F(F_2)=F_1\cup F_2.
\]
This union is fix-free because \(F_2\) consists of equal-length words compatible with \(F_1\). Apply Definition~\ref{def:matrices} to \(\F(F_2)\),
at level \(r=\lambda_2\) and target \(L=\lambda_3\), obtaining
\[
  M_3,\quad A_3(F_2),\quad B_3(F_2),\quad
  C_3(F_2)=M_3-B_3(F_2),
\]
in the same order. The notation \(A_3(F_2)\), \(C_3(F_2)\) suppresses the fixed \(F_1\).
Define
\begin{equation}
  \Phi(F_2)
  =\e^{\mathsf T}A_3(F_2)C_3(F_2)\e.
  \label{eq:Phi}
\end{equation}
It suffices to find \(|F_2|=\mu_2\) with
\begin{equation}
  \Phi(F_2)\leq\frac{q^{\lambda_3}}4.
  \label{eq:three-target}
\end{equation}

\subsection{Endpoint states induced by the natural order}

Let
\[
\begin{aligned}
  A_3^\circ&=A_3(\varnothing),&
  C_3^\circ&=C_3(\varnothing),\\
  A_3^*&=A_3(\Omega_2),&
  C_3^*&=C_3(\Omega_2).
\end{aligned}
\]
These are the empty and fully selected middle-layer states.

Apply \eqref{eq:representation-independent} with
\(n=\lambda_1\), \(r=\lambda_2\), \(L=\lambda_3\) to obtain
\begin{equation}
  \e^{\mathsf T}A_3^\circ C_3^\circ\e
  \leq\frac{q^{\lambda_3}}4.
  \label{eq:initial-endpoint}
\end{equation}
The terminal bound is
\begin{equation}
  \e^{\mathsf T}A_3^* C_3^*\e
  \leq\frac{q^{\lambda_3}}4.
  \label{eq:terminal-endpoint}
\end{equation}
Indeed, set \(\F^*=\F(\Omega_2)=F_1\cup\Omega_2\). Since \(\Omega_2=\W_{\lambda_2}(F_1)\), the saturation estimate
\eqref{eq:natural-order-saturation} with \(n=\lambda_1\), \(L=\lambda_2\) gives
\begin{equation}
  K(\F^*)=K(F_1)+q^{-\lambda_2}|\Omega_2|\geq\frac34.
  \label{eq:terminal-kraft}
\end{equation}
Each column of \(M_3\) sums to \(q^\alpha\); from \(0\leq A_3^*\leq M_3\),
\begin{equation}
  (A_3^*)^{\mathsf T}\e\leq q^\alpha\e
  \label{eq:terminal-prefix-profile}
\end{equation}
Also, \eqref{eq:profile-sums} and \eqref{eq:terminal-kraft} give
\begin{equation}
  \e^{\mathsf T}C_3^*\e
  =q^{\lambda_2}(1-K(\F^*))\leq\frac{q^{\lambda_2}}4.
  \label{eq:terminal-residual-mass}
\end{equation}
Using \(C_3^*\e\geq0\),
\[
  \e^{\mathsf T}A_3^*C_3^*\e
  =((A_3^*)^{\mathsf T}\e)^{\mathsf T}(C_3^*\e)
  \leq q^\alpha\e^{\mathsf T}C_3^*\e
  \leq\frac{q^{\lambda_3}}4,
\]
where \(\alpha+\lambda_2=\lambda_3\). This terminal bound holds in all six regimes.

\subsection{Coupled matrix updates}

For \(x\in\Omega_2\), put
\begin{equation}
  i(x)=\pref_\beta(x),\qquad
  j(x)=\suff_\beta(x),\qquad
  E_x=E_{i(x),j(x)}.
\end{equation}
If \(\mathcal Q(F_2)=\sum_{x\in F_2}E_x\), then
\begin{equation}
  A_3(F_2)=A_3^\circ+\mathcal Q(F_2),\qquad
  C_3(F_2)=C_3^\circ-\mathcal Q(F_2).
  \label{eq:coupled-update}
\end{equation}
Each added word increases its prefix entry and decreases its suffix-residual entry by one.

\subsection{Exact increments and interpolation}

The increment formula separates the effect of each new word from interactions within a packet. These interactions determine when feasible endpoints can be joined.

\begin{lemma}[Exact matrix increment]
\label{lem:increment}
Let \(A,C\) be nonnegative \(d\times d\) matrices.  Let \(T\) be a
finite index set and associate with each \(t\in T\) a matrix unit
\(E_{i_t,j_t}\); repetitions are allowed.  Write
\[
  Q_T=\sum_{t\in T}E_{i_t,j_t},\qquad
  A_T=A+Q_T,\qquad C_T=C-Q_T.
\]
Then
\begin{align}
  \e^{\mathsf T}A_TC_T\e
  ={}&\e^{\mathsf T}AC\e
  +\sum_{t\in T}\bigl((C\e)_{j_t}
       -(A^{\mathsf T}\e)_{i_t}\bigr)\nonumber\\
  &-\sum_{s,t\in T}\mathbf{1}_{\{j_s=i_t\}}.
  \label{eq:increment}
\end{align}
\end{lemma}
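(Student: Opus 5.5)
The identity is a direct bilinear expansion, so I will simply expand the product and evaluate each term on the all-ones vector. Write \(A_T=A+Q_T\) and \(C_T=C-Q_T\). Then
\[
  \e^{\mathsf T}A_TC_T\e
  =\e^{\mathsf T}AC\e+\e^{\mathsf T}Q_TC\e-\e^{\mathsf T}AQ_T\e-\e^{\mathsf T}Q_TQ_T\e .
\]
Nonnegativity of \(A,C\) plays no role here; the identity is purely algebraic.

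The cross terms are linear in \(Q_T\), so I will handle one matrix unit at a time. For a single unit, \(\e^{\mathsf T}E_{i,j}=\e_j^{\mathsf T}\) and \(E_{i,j}\e=\e_i\), where \(\e_i\) denotes the \(i\)th standard basis vector (I will write this out in words, since the paper reserves \(\e\) for the all-ones vector). Hence
\[
  \e^{\mathsf T}E_{i,j}C\e=(C\e)_j,\qquad
  \e^{\mathsf T}AE_{i,j}\e=(A^{\mathsf T}\e)_i .
\]
Summing over \(t\in T\), counting repetitions with multiplicity, gives the first sum \(\sum_{t\in T}\bigl((C\e)_{j_t}-(A^{\mathsf T}\e)_{i_t}\bigr)\).

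For the quadratic term I will use \(E_{i_s,j_s}E_{i_t,j_t}=\mathbf 1_{\{j_s=i_t\}}E_{i_s,j_t}\). Since \(\e^{\mathsf T}E_{i_s,j_t}\e=1\), it follows that
\[
  \e^{\mathsf T}Q_TQ_T\e=\sum_{s,t\in T}\mathbf 1_{\{j_s=i_t\}} .
\]
This sum runs over ordered pairs and includes \(s=t\), which picks up the self-loops \(i_t=j_t\). This term enters with a minus sign, and combining it with the other three terms yields \eqref{eq:increment}.

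I expect no real obstacle. The only care needed is bookkeeping: index order (the \(j\)-coordinate pairs with \(C\e\), and the \(i\)-coordinate pairs with \(A^{\mathsf T}\e\)), and confirming that the double sum counts ordered pairs, including diagonal ones, exactly as stated.
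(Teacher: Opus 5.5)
Your proposal is correct and follows the paper's proof: it expands \((A+Q_T)(C-Q_T)\), reads off the two linear terms as the profile increments, and evaluates the quadratic term via \(\e^{\mathsf T}E_{i_s,j_s}E_{i_t,j_t}\e=\mathbf{1}_{\{j_s=i_t\}}\), summed over ordered pairs. Your bookkeeping remarks are also accurate: the double sum includes the diagonal \(s=t\), and nonnegativity of \(A,C\) is not needed.
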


\begin{proof}
Expand \((A+Q_T)(C-Q_T)\).  The two linear terms give the
profile increments, and
\[
  \e^{\mathsf T}E_{i_s,j_s}E_{i_t,j_t}\e
  =\mathbf{1}_{\{j_s=i_t\}}.
\]
\end{proof}

Since the objective is integral, the relevant threshold is
\begin{equation}
  \mathscr H_3=\left\lfloor\frac{q^{\lambda_3}}4\right\rfloor.
  \label{eq:integer-threshold}
\end{equation}

\begin{lemma}[Interpolation without shared indices]
\label{lem:disjoint-interpolation}
Let \(F^-\subseteq F^+\subseteq\Omega_2\), put
\(D=F^+\setminus F^-\), and suppose
\(\Phi(F^-),\Phi(F^+)\leq\mathscr H_3\).  If
\begin{equation}
  \{i(x):x\in D\}\cap\{j(x):x\in D\}=\varnothing,
  \label{eq:disjoint-indices}
\end{equation}
then, for every \(0\leq m\leq|D|\), there is an \(m\)-element subset
\(S\subseteq D\) such that \(\Phi(F^-\cup S)\leq\mathscr H_3\).
\end{lemma}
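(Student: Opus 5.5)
The plan is to use Lemma~\ref{lem:increment} with base state \(F^-\) and to show that the disjointness hypothesis \eqref{eq:disjoint-indices} removes its quadratic term. The objective then becomes an additive (modular) function of \(S\), and a convexity argument finishes the proof. Set \(A=A_3(F^-)\) and \(C=C_3(F^-)\). For any \(S\subseteq D\), the coupled update \eqref{eq:coupled-update} gives
\[
A_3(F^-\cup S)=A+\mathcal Q(S),\qquad C_3(F^-\cup S)=C-\mathcal Q(S).
\]
This holds because \(\mathcal Q\) is additive over the disjoint union \(F^-\cup S\), which is disjoint since \(S\subseteq F^+\setminus F^-\). So Lemma~\ref{lem:increment} applies with \(T=S\) and matrix units \(E_x\), \(x\in S\).

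The first step is to kill the interaction term. For \(s,t\in S\subseteq D\), the indicator \(\mathbf 1_{\{j(s)=i(t)\}}\) vanishes, because \(j(s)\) lies in \(\{j(x):x\in D\}\), \(i(t)\) lies in \(\{i(x):x\in D\}\), and these sets are disjoint by \eqref{eq:disjoint-indices}. Define the weight
\[
w_x=(C\e)_{j(x)}-(A^{\mathsf T}\e)_{i(x)}\in\mathbb Z
\]
for \(x\in D\). Then \eqref{eq:increment} becomes
\[
\Phi(F^-\cup S)=\Phi(F^-)+\sum_{x\in S}w_x\qquad\text{for every }S\subseteq D.
\]
In particular, taking \(S=D\) gives \(\Phi(F^+)=\Phi(F^-)+\sum_{x\in D}w_x\).

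The second step chooses \(S\) greedily. Order \(D=\{x_1,\dots,x_N\}\) so that \(w_{x_1}\leq\cdots\leq w_{x_N}\), and let \(S_m=\{x_1,\dots,x_m\}\). Put \(g(m)=\sum_{k\leq m}w_{x_k}\). Since its increments are nondecreasing, \(g\) is convex on \(\{0,\dots,N\}\). Hence
\[
g(m)\leq\Bigl(1-\tfrac mN\Bigr)g(0)+\tfrac mN\,g(N)\leq\max\{0,g(N)\}.
\]
It follows that
\[
\Phi(F^-\cup S_m)\leq\max\{\Phi(F^-),\Phi(F^+)\}\leq\mathscr H_3.
\]
The case \(N=0\) is trivial.

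The only point needing care is the first step. The cancellation of the quadratic term requires the pairs \((s,t)\) to range over \(S\) only, with no cross terms against \(F^-\); this holds because \(F^-\) is absorbed into the base matrices \(A\) and \(C\). One must also use that \(S\cap F^-=\varnothing\), so that no word is counted twice. Once these are checked, the convexity argument is routine.
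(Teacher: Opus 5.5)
Your proposal is correct and follows essentially the same route as the paper: the disjointness hypothesis removes the whole interaction sum in Lemma~\ref{lem:increment} (self-terms included), and taking the \(m\) smallest increments gives partial sums bounded by \(\max\{\Phi(F^-),\Phi(F^+)\}\). The paper gets that bound by a sign case split (a positive partial sum forces all later increments to be positive), while you get it from convexity of the sorted partial sums; these are the same observation.
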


\begin{proof}
By \eqref{eq:disjoint-indices}, the last sum in \eqref{eq:increment}
vanishes for every subset. Sort the individual increments increasingly
and take the first \(m\). A nonpositive sum uses the initial endpoint.
Otherwise the \(m\)th increment and all later ones are positive, so
the partial sum is at most the total, controlled by the final endpoint.
\end{proof}

For \(D\subseteq\Omega_2\), let \(\mathcal G(D)\) be the cross-interaction digraph on \(D\), with
an arc \(x\to y\) between distinct \(x,y\in D\) exactly when \(j(x)=i(y)\).
Self-interactions \(i(x)=j(x)\) enter the one-word increment instead.

\begin{theorem}[Fixed-cardinality matrix interpolation]
\label{thm:matrix-interpolation}
Let \(F^-\subseteq F^+\subseteq\Omega_2\), put
\(D=F^+\setminus F^-\), and suppose
\(\Phi(F^-),\Phi(F^+)\leq\mathscr H_3\).  Set
\[
  I=\{i(x):x\in D\},\qquad J=\{j(x):x\in D\}.
\]
For \(S\subseteq D\) and \(x\in D\setminus S\), define the current
increment
\[
  \delta_S(x)=\Phi(F^-\cup S\cup\{x\})-\Phi(F^-\cup S).
\]
Assume one of the following one-sided uniqueness conditions:
\begin{align}
 &\bigl|\{x\in D:j(x)=\ell\}\bigr|\leq1
   &&\text{for every }\ell\in I\cap J,
   \label{eq:column-unique}\tag{U\(_c\)}\\
 &\bigl|\{x\in D:i(x)=\ell\}\bigr|\leq1
   &&\text{for every }\ell\in I\cap J.
   \label{eq:row-unique}\tag{U\(_r\)}
\end{align}
If either
\begin{enumerate}
\item \(\mathcal G(D)\) is acyclic, or
\item the terminal endpoint has one unit of integral slack,
\begin{equation}
  \Phi(F^+)<\mathscr H_3,\qquad\text{equivalently,}\qquad
  \Phi(F^+)\leq \mathscr H_3-1,
  \label{eq:one-unit-slack}
\end{equation}
\end{enumerate}
then the following greedy construction gives a feasible subset of every
cardinality. Set \(S_0=\varnothing\); at each step choose any
\(x_{m+1}\in D\setminus S_m\) minimizing \(\delta_{S_m}(x)\), and put
\(S_{m+1}=S_m\cup\{x_{m+1}\}\). Then
\[
  |S_m|=m,\qquad \Phi(F^-\cup S_m)\leq\mathscr H_3,
  \qquad 0\leq m\leq|D|.
\]
\end{theorem}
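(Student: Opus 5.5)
Write \(f(S)=\Phi(F^-\cup S)\) for \(S\subseteq D\). By \eqref{eq:coupled-update}, \(f\) has the form of Lemma~\ref{lem:increment}, with base matrices \(A=A_3(F^-)\), \(C=C_3(F^-)\) and packet \(T=S\). Put \(w(x)=(C\e)_{j(x)}-(A^{\mathsf T}\e)_{i(x)}-\mathbf 1_{\{i(x)=j(x)\}}\). Then \eqref{eq:increment} reads
\[
 f(S)=f(\varnothing)+\sum_{x\in S}w(x)-\#\{\text{arcs of }\mathcal G(S)\},
\]
where \(\mathcal G(S)\) is the induced subdigraph of \(\mathcal G(D)\). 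Consequently
\[
 \delta_S(x)=w(x)-\#\{s\in S:j(s)=i(x)\}-\#\{s\in S:j(x)=i(s)\}.
\]
So increments are integers that only decrease as \(S\) grows. I will prove \(f(S_m)\le\mathscr H_3\) by induction on \(m\); the base case \(m=0\) is the hypothesis on \(\Phi(F^-)\).

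For the inductive step, let \(x=x_{m+1}\). If \(\delta_{S_m}(x)\le0\), then \(f(S_{m+1})\le f(S_m)\le\mathscr H_3\). Otherwise minimality and integrality give \(\delta_{S_m}(y)\ge1\) for every \(y\in D\setminus S_m\). Put \(U=D\setminus S_m\) and \(R=U\setminus\{x\}\). Apply Lemma~\ref{lem:increment} again, with base state \(F^-\cup S_m\) and packet \(U\). Using \(f(D)-f(S_m)=\sum_{y\in U}\delta_{S_m}(y)-\#\{\text{arcs of }\mathcal G(U)\}\), we get
\[
 f(D)-f(S_{m+1})=\sum_{y\in R}\delta_{S_m}(y)-\#\{\text{arcs of }\mathcal G(U)\}\ \ge\ |R|-\#\{\text{arcs of }\mathcal G(U)\}.
\]
This holds because the cross terms between \(S_m\) and \(U\) are already absorbed into the \(\delta_{S_m}\). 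It therefore suffices to bound the arc count of \(\mathcal G(U)\) by \(|R|=|U|-1\) in the acyclic case, and by \(|U|\) in the slack case. In the slack case this gives \(f(S_{m+1})\le f(D)+1\le\mathscr H_3\) via \eqref{eq:one-unit-slack}.

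The key structural step, and the main point to get right, is the degree bound from one-sided uniqueness. Any arc \(y\to z\) has label \(\ell=j(y)=i(z)\in I\cap J\). Under \eqref{eq:column-unique} there is at most one \(y\in D\) with \(j(y)=\ell\), so every vertex of \(\mathcal G(D)\), and hence of \(\mathcal G(U)\), has in-degree at most \(1\). Under \eqref{eq:row-unique} every vertex similarly has out-degree at most \(1\). Either way, the number of arcs in \(\mathcal G(U)\) is at most \(|U|\). If \(\mathcal G(D)\) is acyclic, so is \(\mathcal G(U)\). An acyclic digraph with in-degree (or out-degree) at most one is a forest of rooted trees, and at least one vertex has in-degree (out-degree) zero, so the arcs number at most \(|U|-1=|R|\). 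This closes the induction in both cases.

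I expect the only delicate points to be bookkeeping. First, self-interactions \(i(x)=j(x)\) must be placed in the one-word increment rather than counted as arcs, so that \(\delta\) and \(w\) are consistent with \eqref{eq:increment}. Second, the identity for \(f(D)-f(S_{m+1})\) must be verified so that no interaction is double counted.
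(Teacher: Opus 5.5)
Your proposal is correct and follows essentially the paper's proof. It uses the same remainder identity $\Phi(F^+)-\Phi(F^-\cup S)=\sum_{y\in D\setminus S}\delta_S(y)-\#\{\text{arcs of }\mathcal G(D\setminus S)\}$, the same in- or out-degree bound from one-sided uniqueness, and the same source/sink count in the acyclic case. The only difference is framing: you argue directly, comparing $\Phi(F^-\cup S_{m+1})$ with $\Phi(F^+)$ using only $\delta_{S_m}\ge1$. The paper instead argues by contradiction at the first violation, with the sharper bound $\delta_S\ge\sigma+1$ where $\sigma=\mathscr H_3-\Phi(F^-\cup S)$.
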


\begin{proof}

Writing \(A_S=A_3(F^-\cup S)\) and
\(C_S=C_3(F^-\cup S)\), Lemma~\ref{lem:increment} gives
\begin{equation}
  \delta_S(x)
  =(C_S\e)_{j(x)}-(A_S^{\mathsf T}\e)_{i(x)}
   -\mathbf{1}_{\{i(x)=j(x)\}}.
  \label{eq:current-increment}
\end{equation}

Suppose a greedy step first violates the threshold. Let \(S\) be the preceding set, \(R=D\setminus S\), and write
\[
 n=|R|,\qquad \sigma=\mathscr H_3-\Phi(F^-\cup S)\geq0.
\]
Then \(n\geq1\). If \(x\in R\) is the next choice, minimality and integrality give
\begin{equation}
 \delta_S(y)\geq\delta_S(x)\geq\sigma+1
 \qquad \text{for every }y\in R.
 \label{eq:bad-step-lower-bound}
\end{equation}
The number of arcs in \(\mathcal G(D)[R]\) is
\[
 \chi(R)=\sum_{\substack{x,y\in R\\x\ne y}}
 \mathbf1_{\{j(x)=i(y)\}}.
\]
Applying Lemma~\ref{lem:increment} to \(R\) and separating the self-interactions already counted in \(\delta_S\) yields
\begin{equation}
 \Phi(F^+)-\Phi(F^-\cup S)
 =\sum_{y\in R}\delta_S(y)-\chi(R).
 \label{eq:remainder-increments}
\end{equation}
Condition \eqref{eq:column-unique} bounds every indegree by one;
\eqref{eq:row-unique} bounds every outdegree by one. Thus
\begin{equation}
 \chi(R)\leq n.
 \label{eq:interaction-count}
\end{equation}
If \(\mathcal G(D)\) is acyclic, then \(\chi(R)\leq n-1\): equality in the preceding bound would force a directed cycle. By \eqref{eq:bad-step-lower-bound} and \eqref{eq:remainder-increments},
\[
 \Phi(F^+)-\Phi(F^-\cup S)
 \geq n(\sigma+1)-(n-1)=n\sigma+1>\sigma,
\]
contradicting terminal feasibility. Under \eqref{eq:one-unit-slack}, the same equations instead give
\[
 \sigma-1\geq\Phi(F^+)-\Phi(F^-\cup S)
 \geq n(\sigma+1)-n=n\sigma\geq\sigma,
\]
again a contradiction. Hence every greedy initial segment is feasible.
\end{proof}

\begin{remark}
A first violation without either alternative requires the preceding
value to be \(\mathscr H_3\), every remaining increment to equal one, and
\(\chi(R)=|R|\); the remainder then contains a directed cycle. For nonintegral
\(q^{\lambda_3}/4\), the real bound \(\Phi(F^+)<q^{\lambda_3}/4\) need not give \eqref{eq:one-unit-slack}.
\end{remark}

\section{The Six Length Regimes}
\label{sec:six-regimes}

We now construct the middle-layer set with the quarter bound needed
to complete the longest layer. The geometry is determined by the length
gap \(\alpha=\lambda_3-\lambda_2\) and the matrix-coordinate length
\(\beta=2\lambda_2-\lambda_3\), with \(d=q^\beta\) row and column indices.
We retain the matrix notation of Section~IV. Comparing \(\alpha\) and
\(\beta\) determines whether the support is dense or sparse; the position
of \(\lambda_1\) determines the shortest-layer exclusions.
Table~\ref{tab:six-regimes} records the six possibilities.

\begin{table}[!ht]
\centering
\caption{Six parameter regimes in the overlap range}
\label{tab:six-regimes}
\small
\begin{tabular}{ccl}
\toprule
Regime & \(M_3\) & Length relation\\
\midrule
I   & dense  & \(\lambda_1\leq\beta\leq\alpha\)\\
II  & dense  & \(\beta<\lambda_1\leq\alpha\)\\
III & dense  & \(\beta\leq\alpha<\lambda_1\)\\
IV  & sparse & \(\lambda_1\leq\alpha<\beta\)\\
V   & sparse & \(\alpha<\lambda_1\leq\beta\)\\
VI  & sparse & \(\alpha<\beta<\lambda_1\)\\
\bottomrule
\end{tabular}
\end{table}

Reverse order treats I, II, and IV directly. Column-layer completion
treats III and VI, while phase-aligned groups treat V. These latter
constructions verify the difference-set conditions of
Theorem~\ref{thm:matrix-interpolation} between feasible endpoints.

\subsection{Regimes I, II, and IV: uniform baseline profiles}

Here \(\lambda_1\leq\alpha\) makes both baseline profiles constant.
Reverse selection gives ordered row and column counts, so a single
product estimate controls every cardinality.

\begin{proposition}[Uniform baseline profiles]
\label{prop:uniform-baseline}
In Regimes I, II, and IV,
\begin{equation}
  (A_3^\circ)^{\mathsf T}\e
    =\mu_1q^{\alpha-\lambda_1}\e,
  \qquad
  C_3^\circ\e
    =\bigl(q^\alpha-\mu_1q^{\alpha-\lambda_1}\bigr)\e.
  \label{eq:uniform-baseline-template}
\end{equation}
\end{proposition}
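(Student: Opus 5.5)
The plan is to compute the two profiles directly from Definition~\ref{def:matrices}, applied to \(\F(\varnothing)=F_1\) at level \(r=\lambda_2\) with index length \(h=\beta\). The only structural input is that \(\lambda_1\leq\alpha\), together with \(\alpha+\beta=\lambda_2\). Under this inequality, the length-\(\lambda_1\) prefix of a length-\(\lambda_2\) word lies inside its first \(\alpha\) coordinates, so it is disjoint from the last \(\beta\) coordinates. Symmetrically, the length-\(\lambda_1\) suffix is disjoint from the first \(\beta\) coordinates. The argument is the same for the dense and sparse forms of \(M_3\), so it covers Regimes I, II and IV at once.

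First, the prefix profile. The entry \(((A_3^\circ)^{\mathsf T}\e)_b\) counts words \(x\in\Psh_{\lambda_2}(F_1)\) with \(\suff_\beta(x)=b\). Every such word has the form \(x=fy\) with \(f\in F_1\) and \(y\in\X^{\lambda_2-\lambda_1}\), and this representation is unique because the words of \(F_1\) are distinct and of equal length. Since \(\lambda_2-\lambda_1\geq\beta\), the condition \(\suff_\beta(x)=b\) fixes the last \(\beta\) symbols of \(y\). The remaining \(\lambda_2-\lambda_1-\beta=\alpha-\lambda_1\geq0\) symbols stay free. Hence the count is \(\mu_1q^{\alpha-\lambda_1}\) for every \(b\).

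Second, the residual profile. Every row of \(M_3\) sums to \(q^{\lambda_2-\beta}=q^\alpha\); this holds in the dense case by \eqref{eq:dense-M} and in the sparse case by the row count in Proposition~\ref{prop:two-forms}. So \(C_3^\circ\e=q^\alpha\e-B_3^\circ\e\). The entry \((B_3^\circ\e)_a\) counts words \(x=yf\) with \(f\in F_1\), \(y\in\X^{\lambda_2-\lambda_1}\) and \(\pref_\beta(x)=a\). Again \(\lambda_2-\lambda_1\geq\beta\), so the prescribed prefix lies entirely in \(y\), and \(\alpha-\lambda_1\) symbols are left free. This gives \(\mu_1q^{\alpha-\lambda_1}\) for every \(a\), and therefore \eqref{eq:uniform-baseline-template}.

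There is no real obstacle here. The only points needing care are the uniqueness of the decomposition \(x=fy\) (equivalently, the disjointness of the shadow union noted before \eqref{eq:shadow-size}) and the use of the row sums of \(M_3\) rather than the column sums. As a consistency check, both profiles sum to the totals in \eqref{eq:profile-sums}: \(q^\beta\mu_1q^{\alpha-\lambda_1}=q^{\lambda_2}K(F_1)\).
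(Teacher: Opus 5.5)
Your proposal is correct and takes essentially the same approach as the paper: both write the shadows of \(F_1\) at level \(\lambda_2\) as \(F_1\X^{\lambda_2-\lambda_1}\) and \(\X^{\lambda_2-\lambda_1}F_1\), and both use \(\lambda_1\leq\alpha\) to place the \(\beta\) index coordinates in the free part, which leaves \(\alpha-\lambda_1\) free symbols and gives \(\mu_1q^{\alpha-\lambda_1}\) per index. Both then obtain the residual profile by subtracting this count from the row sums \(q^\alpha\) of \(M_3\).
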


\begin{proof}
At level \(\lambda_2\), the prefix shadow of \(F_1\) is
\(F_1\X^{\lambda_2-\lambda_1}\), and its suffix shadow is
\(\X^{\lambda_2-\lambda_1}F_1\).  Since
\(\lambda_1\leq\alpha=\lambda_2-\beta\), the final
\(\beta\) coordinates of the first set and the initial \(\beta\)
coordinates of the second set lie entirely in their free parts.  Each
length-\(\beta\) coordinate therefore occurs
\(\mu_1q^{\alpha-\lambda_1}\) times.  The row sums of \(M_3\) are
\(q^\alpha\), which gives the second identity.
\end{proof}

Write
\begin{equation}
\begin{aligned}
  \mathcal B_{\mathrm p}
    &=\{x\in\X^{\lambda_2}:\pref_{\lambda_1}(x)\in F_1\},\\
  \mathcal B_{\mathrm s}
    &=\{x\in\X^{\lambda_2}:\suff_{\lambda_1}(x)\in F_1\},\\
  R_2&:=\sum_{x\in\Omega_2}E_x=\mathcal Q(\Omega_2).
\end{aligned}
\label{eq:middle-forbidden-regions}
\end{equation}
Then \(\Omega_2=\X^{\lambda_2}\setminus
(\mathcal B_{\mathrm p}\cup\mathcal B_{\mathrm s})\), and \(R_2(a,b)\) counts admissible positions in cell \((a,b)\).
Fig.~\ref{fig:natural-order-forbidden-regions} displays these exclusions
in the dense supports of Regimes I and II and the sparse support of
Regime IV.

\begin{figure*}[!t]
\centering
\resizebox{\textwidth}{!}{\begin{tikzpicture}[font=\scriptsize]
\tikzset{
  ffcell/.style={draw=black!65,line width=.35pt},
  ffslot/.style={draw=black!45,line width=.22pt},
  ffpref/.style={pattern=north east lines,pattern color=black!58},
  ffsuff/.style={pattern=north west lines,pattern color=black!58},
  ffboth/.style={fill=black!52},
  ffopen/.style={fill=white},
  ffzero/.style={fill=black!7}
}
\def\ffcs{.72}

\begin{scope}[xshift=0cm]
  \node[font=\small] at (2.55,5.30)
    {\(\mathrm{(I)}\quad \lambda_1=1\leq\beta=\alpha=2\)};
  \node at (2.55,4.91)
    {\((\lambda_1,\lambda_2,\lambda_3)=(1,4,6)\)};
  \node at (2.55,4.61) {\(F_1=\{0\}\)};
  \def\ffxo{1.08}\def\ffyt{3.88}
  \node at (2.52,4.29) {\(b=\suff_2(x)\)};
  \node[rotate=90] at (.30,2.64) {\(a=\pref_2(x)\)};
  \foreach \lab [count=\i from 0] in {00,01,10,11}{
    \pgfmathsetmacro{\xx}{\ffxo+(\i+.5)*\ffcs}
    \pgfmathsetmacro{\yy}{\ffyt-(\i+.5)*\ffcs}
    \node at (\xx,4.03) {\lab};
    \node[anchor=east] at (.98,\yy) {\lab};
  }
  \foreach \a in {0,...,3}{
    \foreach \b in {0,...,3}{
      \pgfmathsetmacro{\xx}{\ffxo+\b*\ffcs}
      \pgfmathsetmacro{\yy}{\ffyt-(\a+1)*\ffcs}
      \pgfmathtruncatemacro{\pv}{\a<2 ? 1 : 0}
      \pgfmathtruncatemacro{\sv}{mod(\b,2)==0 ? 1 : 0}
      \ifnum\pv=1
        \ifnum\sv=1
          \path[ffcell,ffboth] (\xx,\yy) rectangle ++(\ffcs,\ffcs);
        \else
          \path[ffcell,ffpref] (\xx,\yy) rectangle ++(\ffcs,\ffcs);
        \fi
      \else
        \ifnum\sv=1
          \path[ffcell,ffsuff] (\xx,\yy) rectangle ++(\ffcs,\ffcs);
        \else
          \path[ffcell,ffopen] (\xx,\yy) rectangle ++(\ffcs,\ffcs);
        \fi
      \fi
    }
  }
  \node at (2.52,.48) {\(\displaystyle
    R_2=\left(\begin{smallmatrix}
    0&0&0&0\\0&0&0&0\\0&1&0&1\\0&1&0&1
    \end{smallmatrix}\right)\)};
\end{scope}

\begin{scope}[xshift=5.35cm]
  \node[font=\small] at (2.55,5.30)
    {\(\mathrm{(II)}\quad \beta=2<\lambda_1=\alpha=3\)};
  \node at (2.55,4.91)
    {\((\lambda_1,\lambda_2,\lambda_3)=(3,5,8)\)};
  \node at (2.55,4.61) {\(F_1=F_3(3)=\{000,001,010\}\)};
  \def\ffxo{1.08}\def\ffyt{3.88}
  \node at (2.52,4.29) {\(b=\suff_2(x)\)};
  \node[rotate=90] at (.30,2.64) {\(a=\pref_2(x)\)};
  \foreach \lab [count=\i from 0] in {00,01,10,11}{
    \pgfmathsetmacro{\xx}{\ffxo+(\i+.5)*\ffcs}
    \pgfmathsetmacro{\yy}{\ffyt-(\i+.5)*\ffcs}
    \node at (\xx,4.03) {\lab};
    \node[anchor=east] at (.98,\yy) {\lab};
  }
  \foreach \a in {0,...,3}{
    \foreach \b in {0,...,3}{
      \pgfmathsetmacro{\yy}{\ffyt-(\a+1)*\ffcs}
      \foreach \t in {0,1}{
        \pgfmathsetmacro{\xx}{\ffxo+\b*\ffcs+.5*\t*\ffcs}
        \pgfmathtruncatemacro{\pv}{2*\a+\t<3 ? 1 : 0}
        \pgfmathtruncatemacro{\sv}{4*\t+\b<3 ? 1 : 0}
        \ifnum\pv=1
          \ifnum\sv=1
            \path[ffslot,ffboth] (\xx,\yy) rectangle ++({.5*\ffcs},\ffcs);
          \else
            \path[ffslot,ffpref] (\xx,\yy) rectangle ++({.5*\ffcs},\ffcs);
          \fi
        \else
          \ifnum\sv=1
            \path[ffslot,ffsuff] (\xx,\yy) rectangle ++({.5*\ffcs},\ffcs);
          \else
            \path[ffslot,ffopen] (\xx,\yy) rectangle ++({.5*\ffcs},\ffcs);
          \fi
        \fi
      }
      \pgfmathsetmacro{\xx}{\ffxo+\b*\ffcs}
      \draw[ffcell] (\xx,\yy) rectangle ++(\ffcs,\ffcs);
    }
  }
  \node at (2.52,.82) {\(x=a\,t\,b,\quad t=0\mid1\)};
  \node at (2.52,.28) {\(\displaystyle
    R_2=\left(\begin{smallmatrix}
    0&0&0&0\\1&1&1&1\\1&1&1&2\\1&1&1&2
    \end{smallmatrix}\right)\)};
\end{scope}

\begin{scope}[xshift=10.70cm]
  \node[font=\small] at (2.55,5.30)
    {\(\mathrm{(IV)}\quad \lambda_1=\alpha=1<\beta=2\)};
  \node at (2.55,4.91)
    {\((\lambda_1,\lambda_2,\lambda_3)=(1,3,4)\)};
  \node at (2.55,4.61) {\(F_1=\{0\}\)};
  \def\ffxo{1.08}\def\ffyt{3.88}
  \node at (2.52,4.29) {\(b=\suff_2(x)\)};
  \node[rotate=90] at (.30,2.64) {\(a=\pref_2(x)\)};
  \foreach \lab [count=\i from 0] in {00,01,10,11}{
    \pgfmathsetmacro{\xx}{\ffxo+(\i+.5)*\ffcs}
    \pgfmathsetmacro{\yy}{\ffyt-(\i+.5)*\ffcs}
    \node at (\xx,4.03) {\lab};
    \node[anchor=east] at (.98,\yy) {\lab};
  }
  \foreach \a in {0,...,3}{
    \foreach \b in {0,...,3}{
      \pgfmathsetmacro{\xx}{\ffxo+\b*\ffcs}
      \pgfmathsetmacro{\yy}{\ffyt-(\a+1)*\ffcs}
      \pgfmathtruncatemacro{\support}{mod(\a,2)==floor(\b/2) ? 1 : 0}
      \ifnum\support=0
        \path[ffcell,ffzero] (\xx,\yy) rectangle ++(\ffcs,\ffcs);
        \node[text=black!35] at ({\xx+.5*\ffcs},{\yy+.5*\ffcs}) {\(\times\)};
      \else
        \pgfmathtruncatemacro{\pv}{\a<2 ? 1 : 0}
        \pgfmathtruncatemacro{\sv}{mod(\b,2)==0 ? 1 : 0}
        \ifnum\pv=1
          \ifnum\sv=1
            \path[ffcell,ffboth] (\xx,\yy) rectangle ++(\ffcs,\ffcs);
          \else
            \path[ffcell,ffpref] (\xx,\yy) rectangle ++(\ffcs,\ffcs);
          \fi
        \else
          \ifnum\sv=1
            \path[ffcell,ffsuff] (\xx,\yy) rectangle ++(\ffcs,\ffcs);
          \else
            \path[ffcell,ffopen] (\xx,\yy) rectangle ++(\ffcs,\ffcs);
          \fi
        \fi
      \fi
    }
  }
  \node at (2.52,.48) {\(\displaystyle
    R_2=\left(\begin{smallmatrix}
    0&0&0&0\\0&0&0&0\\0&1&0&0\\0&0&0&1
    \end{smallmatrix}\right)\)};
\end{scope}

\begin{scope}[yshift=-.55cm]
  \path[ffcell,ffpref] (.20,-.18) rectangle ++(.36,.36);
  \node[anchor=west] at (.66,0) {\(\mathcal B_{\mathrm p}\setminus\mathcal B_{\mathrm s}\)};
  \path[ffcell,ffsuff] (3.25,-.18) rectangle ++(.36,.36);
  \node[anchor=west] at (3.71,0) {\(\mathcal B_{\mathrm s}\setminus\mathcal B_{\mathrm p}\)};
  \path[ffcell,ffboth] (6.30,-.18) rectangle ++(.36,.36);
  \node[anchor=west] at (6.76,0) {\(\mathcal B_{\mathrm p}\cap\mathcal B_{\mathrm s}\)};
  \path[ffcell,ffopen] (9.15,-.18) rectangle ++(.36,.36);
  \node[anchor=west] at (9.61,0) {\(\Omega_2\)};
  \path[ffcell,ffzero] (11.62,-.18) rectangle ++(.36,.36);
  \node at (11.80,0) {\(\times\)};
  \node[anchor=west] at (12.08,0) {\(M_3(a,b)=0\)};
\end{scope}
\end{tikzpicture}}
\caption{Admissible regions induced by the natural-order set \(F_1\) in
representative binary instances of Regimes I, II, and IV.  Hatching
marks the two exclusions, gray their overlap, and white \(\Omega_2\);
crosses in (IV) lie outside the sparse ambient support.  Rows and
columns use numerical order, and \(R_2\) records the cellwise counts.}
\label{fig:natural-order-forbidden-regions}
\end{figure*}

Call a vector \(v\in\mathbb R^d\) \emph{right-loaded} if
\begin{equation}
 \sum_{b=0}^{k-1}v_b\leq\frac{k}{d}\sum_{b=0}^{d-1}v_b,
 \qquad 1\leq k<d.
 \label{eq:right-loaded}
\end{equation}
\paragraph*{Order property of reverse truncation}
Let \(n<r\), \(1\leq h\leq r-n\), \(d=q^h\), and \(0\leq\mu\leq q^n\). Put
\begin{equation}
 \Omega=\{x\in\X^r:\pref_n(x)\notin F_n(\mu),\ \suff_n(x)\notin F_n(\mu)\}.
 \label{eq:reverse-omega}
\end{equation}
For \(0\leq m\leq|\Omega|\), let \(S_m\) consist of the \(m\) numerically largest words of \(\Omega\), with column and row counts
\begin{align}
 p_m(b)&=|\{x\in S_m:\suff_h(x)=b\}|,\nonumber\\
 s_m(a)&=|\{x\in S_m:\pref_h(x)=a\}|.
 \label{eq:reverse-profiles}
\end{align}
Then \(p_m\) is right-loaded and \(s_m\) is nondecreasing.
Indeed, in numerical coordinates the admissibility conditions are
\begin{equation}
 \pref_n(x)\geq\mu,\qquad\suff_n(x)\geq\mu.
 \label{eq:reverse-thresholds}
\end{equation}
Since \(r-h\geq n\), the sets \(\Omega_a=\{y\in\X^{r-h}:ay\in\Omega\}\) are nested increasingly in \(a\): the suffix condition is independent of \(a\), and the prefix condition becomes weaker as \(a\) increases. Reverse order fills these sets from the largest index, cutting at most one, so \(s_m\) is nondecreasing.

For the column counts, put \(Q=q^n\). The prefix cutoff is a multiple of \(q^{r-n}\), hence of \(d\), and the suffix condition retains \([\mu,Q)\) in each \(Q\)-period. Since \(Q\) and \(d\) are \(q\)-powers, one divides the other. If \(d\mid Q\), each nonempty \(d\)-cycle retains a full cycle or a terminal interval. If \(Q\mid d\), each full \(d\)-cycle retains
\[
 T=\bigcup_{\ell=0}^{d/Q-1}[\ell Q+\mu,(\ell+1)Q).
\]
For \(k=tQ+u\), \(0\leq u<Q\),
\[
 |T\cap[0,k)|=t(Q-\mu)+\max\{0,u-\mu\}
 \leq\frac{k}{d}|T|.
\]
Thus the indicator of every retained cycle is right-loaded. This property is preserved when only its numerically largest elements are kept: if \(U\) is such a subset of a set \(T\subseteq[0,d)\) with right-loaded indicator and \(U\cap[0,k)\ne\varnothing\), then
\[
 |U\cap[0,k)|=|U|-|T\cap[k,d)|
 \leq |U|-\frac{d-k}{d}|T|\leq\frac{k}{d}|U|.
\]
The empty-intersection case is immediate. Since the reverse scan consists of full retained cycles and at most one such terminal subset, adding their counts proves the claim for \(p_m\).

\begin{theorem}[Reverse natural order in the uniform regimes]
\label{thm:reverse-uniform-regimes}
In Regimes I, II, and IV, order \(\Omega_2\) by decreasing base-\(q\)
numerical value and let \(F_2(m)\) be its first \(m\) words.  Then, for
every \(0\leq m\leq|\Omega_2|\),
\begin{equation}
  \Phi(F_2(m))\leq\frac{q^{\lambda_3}}4.
  \label{eq:reverse-quarter}
\end{equation}
In particular, \(F_2(\mu_2)\) is a valid middle-layer choice.
\end{theorem}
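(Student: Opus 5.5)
The plan is to write \(\Phi(F_2(m))\) explicitly through the two profiles and bound it by a one-variable quadratic. By the coupled update \eqref{eq:coupled-update}, with \(\mathcal Q=\mathcal Q(F_2(m))\) we have \(A_3^{\mathsf T}\e=(A_3^\circ)^{\mathsf T}\e+\mathcal Q^{\mathsf T}\e\) and \(C_3\e=C_3^\circ\e-\mathcal Q\e\). Here \(\mathcal Q^{\mathsf T}\e\) is the vector \(p_m\) of column counts, indexed by \(\suff_\beta\). Likewise \(\mathcal Q\e\) is the vector \(s_m\) of row counts, indexed by \(\pref_\beta\). Both are exactly the reverse-truncation profiles \eqref{eq:reverse-profiles} with \(n=\lambda_1\), \(r=\lambda_2\), \(h=\beta\), \(\mu=\mu_1\), because \(\Omega_2\) coincides with \(\Omega\) of \eqref{eq:reverse-omega}. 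The side condition \(h\leq r-n\) reads \(\beta\leq\lambda_2-\lambda_1\), i.e. \(\lambda_1\leq\alpha\), which holds in Regimes I, II, IV. Hence the order property applies: \(p_m\) is right-loaded and \(s_m\) is nondecreasing, and \(\e^{\mathsf T}p_m=\e^{\mathsf T}s_m=m\).

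Next I insert Proposition~\ref{prop:uniform-baseline}. Write \(\alpha_0=\mu_1q^{\alpha-\lambda_1}\) and \(\gamma_0=q^\alpha-\alpha_0\). Then
\[
\Phi(F_2(m))=(\alpha_0\e+p_m)^{\mathsf T}(\gamma_0\e-s_m)
=d\alpha_0\gamma_0+(\gamma_0-\alpha_0)m-p_m^{\mathsf T}s_m .
\]
The key step is a Chebyshev-type inequality \(p_m^{\mathsf T}s_m\geq m^2/d\). I would prove it by Abel summation. Let \(P_k=\sum_{b<k}p_m(b)\), so \(P_d=m\). Then
\[
\sum_b p_m(b)s_m(b)=s_m(d-1)P_d-\sum_{k=1}^{d-1}\bigl(s_m(k)-s_m(k-1)\bigr)P_k .
\]
Monotonicity makes each difference \(s_m(k)-s_m(k-1)\) nonnegative. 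The right-loaded bound \(P_k\leq (k/d)P_d\) then yields at least the same expression with \(P_k\) replaced by \((k/d)m\). That expression equals \(m\cdot\frac1d\sum_b s_m(b)=m^2/d\).

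Finally I normalize with \(x=\mu_1q^{-\lambda_1}\) and \(y=mq^{-\lambda_2}\). Since \(\beta+2\alpha=\lambda_3\) and \(\beta+\alpha=\lambda_2\), the three terms become
\[
d\alpha_0\gamma_0=q^{\lambda_3}x(1-x),\qquad(\gamma_0-\alpha_0)m=q^{\lambda_3}(1-2x)y,\qquad m^2/d=q^{\lambda_3}y^2 .
\]
Therefore
\[
q^{-\lambda_3}\Phi(F_2(m))\leq x(1-x)+(1-2x)y-y^2=(x+y)-(x+y)^2\leq\tfrac14 ,
\]
which is \eqref{eq:reverse-quarter} for every \(m\), with no constraint on \(m\) needed. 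The last assertion then follows from \(|\Omega_2|\geq\mu_2\) in \eqref{eq:omega-size} and Corollary~\ref{cor:quarter}.

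The main obstacle I expect is the bookkeeping that identifies \(\mathcal Q^{\mathsf T}\e\) and \(\mathcal Q\e\) with \(p_m\) and \(s_m\) in the right orientation. Column counts go with the prefix-shadow profile and row counts with the residual profile, and both use the numerical order of the index words. Matching the index-length hypothesis of the order property to \(\lambda_1\leq\alpha\) is part of this check. Once that is secured, the Abel-summation inequality and the quadratic are routine.
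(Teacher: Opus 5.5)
Your proposal is correct and follows the paper's proof: the same profile decomposition from Proposition~\ref{prop:uniform-baseline} and \eqref{eq:coupled-update}, the order property of reverse truncation with $(n,r,h)=(\lambda_1,\lambda_2,\beta)$ (valid since $\lambda_1\leq\alpha$), and Abel summation to obtain $p_m^{\mathsf T}s_m\geq m^2/d$. The only cosmetic difference is that you expand $\Phi$ directly and bound it by the quadratic $q^{\lambda_3}\bigl[(x+y)-(x+y)^2\bigr]$, whereas the paper reaches the same mean-profile bound through Proposition~\ref{prop:covariance}.
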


\begin{proof}
Put \(d=q^\beta\),
\[
  \rho=\mu_1q^{\alpha-\lambda_1},\qquad
  \sigma=q^\alpha-\rho,
\]
and write \(\mathcal Q_m=\mathcal Q(F_2(m))\).  By
Proposition~\ref{prop:uniform-baseline} and
\eqref{eq:coupled-update},
\begin{equation}
\begin{aligned}
  a_m&=A_3(F_2(m))^{\mathsf T}\e=\rho\e+p_m,\\
  c_m&=C_3(F_2(m))\e=\sigma\e-s_m,
\end{aligned}
  \label{eq:reverse-profile-decomposition}
\end{equation}
where \(p_m=\mathcal Q_m^{\mathsf T}\e\),
\(s_m=\mathcal Q_m\e\), and
\(\e^{\mathsf T}p_m=\e^{\mathsf T}s_m=m\).  The common constant terms cancel, giving
\begin{equation}
 a_m^{\mathsf T}c_m-
 \frac{(\e^{\mathsf T}a_m)(\e^{\mathsf T}c_m)}d
 =\frac{m^2}{d}-p_m^{\mathsf T}s_m.
 \label{eq:uniform-objective-expansion}
\end{equation}
By Proposition~\ref{prop:covariance}, it suffices to show
\begin{equation}
 p_m^{\mathsf T}s_m\geq\frac{m^2}{d}.
 \label{eq:uniform-product-expansion}
\end{equation}
Apply the order property with \((n,r,h)=(\lambda_1,\lambda_2,\beta)\), using \(\lambda_1\leq\alpha=\lambda_2-\beta\). Then \(p_m\) is right-loaded and \(s_m\) is nondecreasing. With \(Z_k=\sum_{b=0}^{k-1}(p_m(b)-m/d)\leq0\), Abel summation gives
\begin{equation}
 p_m^{\mathsf T}s_m-\frac{m^2}{d}
 =\sum_{k=1}^{d-1}Z_k\bigl(s_m(k-1)-s_m(k)\bigr)\geq0.
 \label{eq:reverse-abel}
\end{equation}
This proves \eqref{eq:reverse-quarter} for every \(m\); \eqref{eq:omega-size} permits \(m=\mu_2\).
\end{proof}

\subsection{Regimes III and VI: column-layer completion}

The initial column profile is a constant vector plus a head indicator. Selecting a suitable layer tail makes it constant; full admissible layers then preserve this form. Theorem~\ref{thm:matrix-interpolation} fills the cardinalities within each layer. We first establish these facts, then treat the row-block transition and terminal residual.

\subsubsection{Baseline profiles and admissible column layers}

Both regimes satisfy
\[
  \lambda_1>\max\{\alpha,\beta\}.
\]
To describe the free extension and its overlap with the shortest layer, put
\begin{equation}
  t=\lambda_2-\lambda_1,\qquad
  \gamma=\lambda_1-\alpha,\qquad
  \theta=\lambda_1-\beta.
  \label{eq:staircase-scales}
\end{equation}
Then \(t,\gamma,\theta>0\), \(\beta=\gamma+t\), and \(\alpha=\theta+t\).
Here \(t\) is the free extension length; \(\gamma\) and \(\theta\) specify the two overlaps.

\begin{proposition}[Staircase baseline profiles]
\label{prop:staircase-baseline}
Write
\begin{equation}
  \mu_1=\ell q^\gamma+\nu
       =\xi q^\alpha+\eta,
  \qquad
  0\leq\nu<q^\gamma,\quad 0\leq\eta<q^\alpha,
  \label{eq:staircase-divisions}
\end{equation}
Put \(m=\nu q^t\), the width of the elevated column head, and define
\(u_m=\bigl(\mathbf1_{\{j<m\}}\bigr)_{j=0}^{d-1}\).
Then
\begin{equation}
  (A_3^\circ)^{\mathsf T}\e=\ell\e+u_m.
  \label{eq:staircase-prefix-profile}
\end{equation}
Moreover, every row index has a unique representation \(zp\), where
\(z\in\X^t\) and \(p\in\X^\gamma\), and
\begin{equation}
  (C_3^\circ\e)_{zp}=
  \begin{cases}
    0, & [p]_q<\xi,\\
    q^\alpha-\eta, & [p]_q=\xi,\\
    q^\alpha, & [p]_q>\xi.
  \end{cases}
  \label{eq:staircase-residual-profile}
\end{equation}
\end{proposition}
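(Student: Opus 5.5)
The plan is to compute both profiles directly from the definitions of \(A_3^\circ\) and \(B_3^\circ\) at level \(r=\lambda_2\), using the factorizations \(\beta=\gamma+t\) and \(\lambda_2=\lambda_1+t\). Since \(F_2=\varnothing\), the prefix shadow is \(F_1\X^{t}\) and the suffix shadow is \(\X^{t}F_1\). The key point is that \(\lambda_1>\max\{\alpha,\beta\}\) makes the length-\(\beta\) coordinate overlap the shortest-layer word in exactly \(\gamma\) symbols.

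\emph{Column profile.} Write \(x\in F_1\X^t\) uniquely as \(x=wy\) with \(w\in F_1\) and \(y\in\X^t\). Because \(\beta=\gamma+t\leq\lambda_2\) and \(t<\beta\), the column index is \(\suff_\beta(x)=p'y\), where \(p'=\suff_\gamma(w)\). The entry \(((A_3^\circ)^{\mathsf T}\e)_b\) for \(b=p'y\) therefore counts the \(w\in F_1\) with \(\suff_\gamma(w)=p'\), independently of \(y\). Since \(F_1\) is the numerical interval \([0,\mu_1)\), the residues \([w]_q \bmod q^\gamma\) are filled cyclically. Writing \(\mu_1=\ell q^\gamma+\nu\), the count is \(\ell+\mathbf 1_{\{[p']_q<\nu\}}\). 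In numerical order \([b]_q=[p']_qq^t+[y]_q\), and \([p']_q<\nu\) holds exactly when \([b]_q<\nu q^t=m\). This gives \eqref{eq:staircase-prefix-profile}.

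\emph{Residual profile.} Every row sum of \(M_3\) equals \(q^\alpha\) by Proposition~\ref{prop:two-forms}, so \((C_3^\circ\e)_a=q^\alpha-(B_3^\circ\e)_a\). Write \(x\in\X^tF_1\) uniquely as \(x=zw\) with \(z\in\X^t\) and \(w\in F_1\). Then \(\pref_\beta(x)=zp\), where \(p=\pref_\gamma(w)\); this is the unique representation of a row index asserted in the statement. For a fixed row \(zp\), the word \(x\) is determined by \(w\), so \((B_3^\circ\e)_{zp}\) counts the \(w\in F_1\) with prefix \(p\). These are the words \(w=pv\) with \(v\in\X^{\lambda_1-\gamma}=\X^\alpha\) and \([p]_qq^\alpha+[v]_q<\mu_1=\xi q^\alpha+\eta\). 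Splitting into the cases \([p]_q<\xi\), \([p]_q=\xi\) and \([p]_q>\xi\) gives counts \(q^\alpha\), \(\eta\) and \(0\), respectively. Subtracting from \(q^\alpha\) yields \eqref{eq:staircase-residual-profile}.

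The only delicate step is the index bookkeeping. One must check that the length-\(\beta\) window meets \(w\) in exactly its last (respectively first) \(\gamma\) symbols, which uses \(\gamma=\beta-t>0\). One must also check that the numerical order on column indices puts \(p'\) in the high digits, so that the elevated set is an initial segment of width \(\nu q^t\) rather than a scattered set. Both follow from \(\lambda_1>\beta\) and the definitions in \eqref{eq:staircase-scales}. I do not expect any genuine obstacle beyond this verification.
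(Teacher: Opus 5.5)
Your proposal is correct and is essentially the paper's own argument: prefix-shadow words \(wy\) have column index \(\suff_\gamma(w)y\), so each column count is the number of elements of the interval \([0,\mu_1)\) in a residue class modulo \(q^\gamma\). Suffix-shadow words \(zw\) have row index \(z\pref_\gamma(w)\), so subtracting the prefix counts \(q^\alpha,\eta,0\) from the row sums \(q^\alpha\) of \(M_3\) gives the residual profile.

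One minor correction to your closing remark: \(\gamma=\beta-t>0\) comes from \(\lambda_1>\alpha\), and \(\lambda_1>\beta\) is not actually used in this computation.
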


\begin{proof}
A prefix-shadow word is uniquely \(fu\), with \(f\in F_1\), \(u\in\X^t\), and
length-\(\beta\) suffix \(\suff_\gamma(f)u\). Thus column \(pu\) counts integers in \(F_1\)
congruent to \([p]_q\) modulo \(q^\gamma\). By \eqref{eq:staircase-divisions},
this count is \(\ell+1\) for \([p]_q<\nu\) and \(\ell\) otherwise. The first case
comprises the first \(\nu q^t=m\) columns, proving
\eqref{eq:staircase-prefix-profile}.

A suffix-shadow word is uniquely \(uf\), with length-\(\beta\) prefix \(u\pref_\gamma(f)\).
Its shadow row sum at \(up\) is \(q^\alpha\), \(\eta\), or \(0\) according as \([p]_q<\xi\),
\([p]_q=\xi\), or \([p]_q>\xi\). Every row of \(M_3\) sums to \(q^\alpha\); subtracting these
counts from \(M_3\e\) proves \eqref{eq:staircase-residual-profile}.
\end{proof}

Write each length-\(\lambda_2\) word uniquely as \(x=yz\),
\(y\in\X^\alpha\), \(z\in\X^\beta\). Fixing \(y\) gives a column
layer that traverses every column once:
\(\mathcal L_y=\{yz:z\in\X^\beta\}\).
We use the clipping notation
\([a]_0^N=\min\{N,\max\{0,a\}\}\).

\begin{lemma}[Admissible column-layer decomposition]
\label{lem:admissible-column-layers}
For \(y\in\X^\alpha\), the prefix and suffix restrictions give the cutoffs
\begin{equation}
\begin{aligned}
  P_y&=q^t[\mu_1-[y]_q q^\gamma]_0^{q^\gamma},\\
  S_y&=[\mu_1-[\suff_\theta(y)]_q q^\beta]_0^{q^\beta},\\
  T_y&=\max\{P_y,S_y\}.
\end{aligned}
  \label{eq:layer-thresholds}
\end{equation}
Then
\begin{equation}
  \mathcal L_y\cap\Omega_2
  =\{yz:[z]_q\geq T_y\}.
  \label{eq:layer-tail}
\end{equation}
Thus the admissible part of each layer is empty, a full layer, or a
single terminal interval of columns.

If
\(\mathcal Q_y=\sum_{z\in\X^\beta}E_{\pref_\beta(yz),\,z}\),
then
\begin{equation}
  \mathcal Q_y^{\mathsf T}\e=\e.
  \label{eq:unit-column-layer}
\end{equation}
Let \(e_a\) denote the standard basis vector indexed by \(a\).  In
Regime III, with \(r_y=\pref_\beta(y)\),
\(\mathcal Q_y=e_{r_y}\e^{\mathsf T}\).
In Regime VI, with \(s=\beta-\alpha\),
\begin{equation}
  \mathcal Q_y=\sum_{z\in\X^\beta}
       E_{\,y\pref_s(z),\,z},
  \label{eq:sparse-column-layer}
\end{equation}
and every unit in \eqref{eq:sparse-column-layer} lies in the support of
\eqref{eq:sparse-M}.
\end{lemma}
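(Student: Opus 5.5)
We prove the lemma by computing, for each fixed \(y\in\X^\alpha\), the two restrictions that membership in \(\Omega_2\) places on \(x=yz\), and then reading off the matrix units directly. Recall that \(\Omega_2\) consists of the words with \(\pref_{\lambda_1}(x)\notin F_1\) and \(\suff_{\lambda_1}(x)\notin F_1\). In numerical coordinates these conditions are \([\pref_{\lambda_1}(x)]_q\geq\mu_1\) and \([\suff_{\lambda_1}(x)]_q\geq\mu_1\), as in \eqref{eq:reverse-thresholds}. We use \(\lambda_1=\alpha+\gamma=\beta+\theta\), \(\beta=\gamma+t\) and \(\alpha=\theta+t\) throughout.

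\emph{Prefix condition.} Since \(\lambda_1>\alpha\), the length-\(\lambda_1\) prefix is \(y\,\pref_\gamma(z)\), with numerical value \([y]_qq^\gamma+[\pref_\gamma(z)]_q\). The condition says \([\pref_\gamma(z)]_q\geq\mu_1-[y]_qq^\gamma\), where the right side is clipped to \([0,q^\gamma]\). The value \([\pref_\gamma(z)]_q\) determines \([z]_q\) up to the free last \(t\) symbols. The condition is therefore equivalent to \([z]_q\geq q^t[\mu_1-[y]_qq^\gamma]_0^{q^\gamma}=P_y\).

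\emph{Suffix condition.} Since \(\lambda_1>\beta\), the length-\(\lambda_1\) suffix is \(\suff_\theta(y)\,z\), with value \([\suff_\theta(y)]_qq^\beta+[z]_q\). The condition becomes \([z]_q\geq S_y\), where \(S_y\) is clipped to \([0,q^\beta]\); the clipping is exactly what handles the always-true and never-true cases.

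Both conditions are lower thresholds on \([z]_q\), so together they give \eqref{eq:layer-tail} with cutoff \(T_y\). The intersection \(\mathcal L_y\cap\Omega_2\) is therefore a terminal interval of columns. It is empty exactly when \(T_y=q^\beta\), and full exactly when \(T_y=0\).

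\emph{Matrix identities.} The word \(yz\) has \(j(yz)=\suff_\beta(yz)=z\) and \(i(yz)=\pref_\beta(yz)\). As \(z\) ranges over \(\X^\beta\), each column appears exactly once, which gives \eqref{eq:unit-column-layer}.

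In Regime III we have \(\beta\leq\alpha\). Then \(\pref_\beta(yz)=\pref_\beta(y)=r_y\) does not depend on \(z\), so \(\mathcal Q_y=e_{r_y}\e^{\mathsf T}\).

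In Regime VI we have \(\alpha<\beta\). Then \(\pref_\beta(yz)=y\,\pref_s(z)\) with \(s=\beta-\alpha\), which is \eqref{eq:sparse-column-layer}. For the support claim, the row \(a=y\pref_s(z)\) and column \(b=z\) satisfy \(\suff_s(a)=\pref_s(z)=\pref_s(b)\). Here the overlap length in \eqref{eq:sparse-M} is \(2\beta-\lambda_2=\beta-\alpha=s\), so each unit lies in the support.

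The only delicate point is bookkeeping. The cutoffs in \eqref{eq:layer-thresholds} are measured in different units: \(P_y\) is scaled by \(q^t\) because the prefix test sees only the first \(\gamma\) symbols of \(z\), while \(S_y\) sees all of \(z\). The clipping has to be checked at the boundary values. No estimate is needed beyond these direct computations.
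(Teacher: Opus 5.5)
Your proposal is correct and follows the paper's proof essentially step for step. You write \(\pref_{\lambda_1}(yz)=y\pref_\gamma(z)\) and \(\suff_{\lambda_1}(yz)=\suff_\theta(y)z\), turn each natural-order exclusion into a lower threshold on \([z]_q\) (with the factor \(q^t\) coming from \(\beta=\gamma+t\)), and read off the column index \(z\) and the row index \(\pref_\beta(yz)\) in the dense and sparse cases; your explicit checks of the clipping at the boundary values and of the overlap length \(2\beta-\lambda_2=\beta-\alpha\) are slightly more detailed than the paper.
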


\begin{proof}
Since \(\lambda_1=\alpha+\gamma=\theta+\beta\),
\begin{equation}
  \pref_{\lambda_1}(yz)=y\pref_\gamma(z),\qquad
  \suff_{\lambda_1}(yz)=\suff_\theta(y)z.
  \label{eq:layer-prefix-suffix}
\end{equation}
The first word lies outside the natural-order initial interval \(F_1\)
if and only if
\[
  [y]_q q^\gamma+[\pref_\gamma(z)]_q\geq\mu_1.
\]
Because \(\beta=\gamma+t\), this condition is equivalent to
\([z]_q\geq P_y\).  The second word lies outside \(F_1\) if and only if
\([z]_q\geq S_y\).  Their intersection proves
\eqref{eq:layer-tail}.

The column index of \(yz\) is \(z\), so every column occurs exactly once
as \(z\) runs through \(\X^\beta\); this proves
\eqref{eq:unit-column-layer}.  If \(\beta\leq\alpha\), the row index is
the constant \(\pref_\beta(y)\).  If \(\alpha<\beta\), it is
\(y\pref_{\beta-\alpha}(z)\), and
\(\suff_{\beta-\alpha}(y\pref_{\beta-\alpha}(z))=\pref_{\beta-\alpha}(z)\),
which is exactly the compatibility condition for the sparse support.
\end{proof}

\subsubsection{Completion and interpolation within a layer}
A layer is \emph{fully admissible} if \(T_y=0\); \eqref{eq:layer-thresholds} gives
\begin{equation}
  T_y=0
  \quad\Longleftrightarrow\quad
  \begin{cases}
  [y]_q\geq\lceil\mu_1/q^\gamma\rceil,\\
  [\suff_\theta(y)]_q\geq\lceil\mu_1/q^\beta\rceil.
  \end{cases}
  \label{eq:full-layer-criterion}
\end{equation}
Call \(\mathcal L_y\)
\emph{completion-capable} if
\begin{equation}
  T_y\leq m.
  \label{eq:completion-capable-layer}
\end{equation}
Then \eqref{eq:layer-tail} makes the entire tail \(\{yz:m\leq[z]_q<q^\beta\}\) admissible.
Selecting this tail adds \(\e-u_m\) to the column profile and makes it constant. The admissible head \(\{yz:T_y\leq[z]_q<m\}\) is reserved for the terminal residual.

\begin{lemma}[Acyclic interpolation within a column layer]
\label{lem:column-layer-interpolation}
Fix \(y\in\X^\alpha\) and let
\(D\subseteq\{yz:z\in\X^\beta\}\cap\Omega_2\).
Then \(D\) satisfies \eqref{eq:column-unique} and its cross-interaction
digraph is acyclic. If \(F^-\subseteq F^-\cup D\subseteq\Omega_2\)
have feasible endpoints, Theorem~\ref{thm:matrix-interpolation}
supplies every intermediate cardinality.
\end{lemma}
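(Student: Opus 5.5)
The plan is to verify the two hypotheses of Theorem~\ref{thm:matrix-interpolation} directly from the layer structure in Lemma~\ref{lem:admissible-column-layers}; the final claim then follows by invoking that theorem.

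\emph{Column uniqueness.} Every word \(yz\in D\) has column index \(j(yz)=z\). Distinct words of the fixed layer \(\mathcal L_y\) have distinct \(z\), so \(|\{x\in D:j(x)=\ell\}|\leq1\) holds for every \(\ell\), and in particular for \(\ell\in I\cap J\). This is \eqref{eq:column-unique}, and it follows from \eqref{eq:unit-column-layer}.

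\emph{Acyclicity, Regime III.} Here \(\beta\leq\alpha\), so every word of the layer has the same row index \(r_y=\pref_\beta(y)\). An arc \(x\to x'\) requires \(j(x)=i(x')=r_y\). By column uniqueness, at most one word \(x_0\in D\) has \(j(x_0)=r_y\). Hence all arcs leave \(x_0\): the digraph is a star directed out of \(x_0\), possibly with a self-interaction that is not counted as an arc. A star has no directed cycle.

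\emph{Acyclicity, Regime VI.} Here \(s=\beta-\alpha>0\) and \(i(yz)=y\pref_s(z)\). An arc \(yz\to yz'\) requires
\[
  z=y\,\pref_s(z').
\]
Define \(\phi(z)=\pref_\alpha(z)\), and note that the arc forces \(\phi(z)=y\). Since the row index is determined by \(\pref_s(z')\), the out-neighbour of \(yz\) is unique whenever it exists. So the digraph is a functional graph, and a cycle would need every vertex on it to have an outgoing arc.

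Suppose \(z_1\to z_2\to\cdots\to z_k\to z_1\) is a cycle. Then \(z_i=y\,\pref_s(z_{i+1})\) for every \(i\), so each \(z_i\) begins with \(y\).

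\emph{Cycle-length argument.} I expect this to be the main obstacle, so the plan is to show that arcs strictly increase a shift-invariant potential, or equivalently that the cycle must be a fixed point. Read as words, the relations say that the infinite concatenation \(\cdots y\,y\,y\) determines each \(z_i\): unrolling gives
\[
  z_i=\pref_\beta(yyy\cdots).
\]
This is independent of \(i\), so all \(z_i\) are equal and \(k=1\). A cycle of length one is a self-interaction \(i(x)=j(x)\), which is not an arc of \(\mathcal G(D)\) because arcs join distinct words. Hence \(\mathcal G(D)\) is acyclic.

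The unrolling step needs a short induction on prefix length, in steps of \(\alpha\) symbols:
\begin{itemize}
\item the first \(\alpha\) symbols of each \(z_i\) are \(y\);
\item the next symbols of \(z_i\) are the prefix of \(z_{i+1}\), whose first \(\alpha\) symbols are again \(y\);
\item and so on, until all \(\beta\) symbols are fixed.
\end{itemize}

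\emph{Conclusion.} With \eqref{eq:column-unique} and acyclicity established, and feasible endpoints \(\Phi(F^-),\Phi(F^-\cup D)\leq\mathscr H_3\) assumed, alternative~(1) of Theorem~\ref{thm:matrix-interpolation} applies. The greedy construction then supplies a feasible subset of every intermediate cardinality.
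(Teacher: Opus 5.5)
Your proof is correct, but your acyclicity step takes a different route from the paper.

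\textbf{The paper's route.} An arc \(z\to w\) inside the layer means \(z=\mathcal T_y(w)\) for the row map \(\mathcal T_y(z)=\pref_\beta(yz)\). This map is nondecreasing in numerical order: it is constant when \(\beta\le\alpha\), and equals \([y]_q q^{\beta-\alpha}+\lfloor [z]_q/q^\alpha\rfloor\) otherwise. A nondecreasing self-map of a finite chain has no periodic orbit of length at least two, so the paper handles the dense and sparse cases together in one line.

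\textbf{Your route.} You unroll \(z_i=y\,\pref_s(z_{i+1})\) and show that \(\mathcal T_y^n\) is constant, equal to \(\pref_\beta(y^\infty)\), once \(n\alpha\ge\beta\). Hence every vertex on a cycle is this single word, and the only possible cycle is a self-interaction.
\begin{itemize}
\item This argument does not use the numerical order at all.
\item It identifies the unique word in the layer that can carry a self-interaction.
\item It makes your separate star argument for \(\beta\le\alpha\) unnecessary, since there \(\mathcal T_y\) is already constant.
\item Your sparse-case argument uses only \(\alpha<\beta\), not anything specific to Regime~VI. It therefore also covers the paper's later uses of this lemma in Regime~V.
\end{itemize}

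\textbf{A wrong side remark.} In your sparse case you claim that the out-neighbour of \(yz\) is unique and that the digraph is functional. This is false. The arc condition \(z=y\,\pref_s(z')\) fixes only \(\pref_s(z')\) and leaves the last \(\alpha\) symbols of \(z'\) free, so the out-degree can be as large as \(q^\alpha\). What is unique is the in-neighbour, \(y\,\mathcal T_y(z')\); that is exactly \eqref{eq:column-unique}. So it is the reversed digraph that is functional, which is the viewpoint the paper's periodic-orbit argument uses. Your unrolling argument never relies on this claim, so nothing breaks. Just delete the remark or reverse its direction.
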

\begin{proof}
Column indices are unique. The row map
\(\mathcal T_y(z)=\pref_\beta(yz)\) is nondecreasing in numerical order
and thus has no nontrivial periodic orbit. An arc \(z\to w\) means
\(z=\mathcal T_y(w)\), so a directed cycle on distinct vertices would
be such an orbit. Theorem~\ref{thm:matrix-interpolation} applies.
\end{proof}

If
\(A_3(F_2)^{\mathsf T}\e=s\e\)
and \(K=K(F_1\cup F_2)\), the profile sums and \(q^\beta=q^{2\lambda_2-\lambda_3}\) give
\begin{equation}
 \Phi(F_2)=s\e^{\mathsf T}C_3(F_2)\e
 =q^{\lambda_3}K(1-K)\leq\frac{q^{\lambda_3}}4.
 \label{eq:constant-profile-anchor}
\end{equation}
Thus constant prefix profiles are feasible anchors.

\begin{theorem}[Continuous cardinalities from a completion layer]
\label{thm:complete-layer-core}
Let
\[
 \begin{gathered}
  \mathcal Y_0=\{y\in\X^\alpha:T_y=0\},\\
  \mathcal Y_{\rm c}=\{y\in\X^\alpha:T_y\leq m\}.
 \end{gathered}
\]
Choose \(y_0\in\mathcal Y_{\rm c}\), and put
\(L_0=q^\beta-m+|\mathcal Y_0\setminus\{y_0\}|q^\beta\).
Then there is a nested sequence
\[
  \varnothing=F_2^{(0)}\subset F_2^{(1)}
  \subset\cdots\subset F_2^{(L_0)}\subseteq\Omega_2
\]
such that, for every \(0\leq k\leq L_0\),
\begin{equation}
  |F_2^{(k)}|=k,\qquad
  \Phi(F_2^{(k)})\leq\frac{q^{\lambda_3}}4.
  \label{eq:complete-layer-core-bound}
\end{equation}
If \(\mu_1\leq q^{\lambda_1}/2\), then
\(\mathcal Y_0\subseteq\mathcal Y_{\rm c}\) is nonempty.
\end{theorem}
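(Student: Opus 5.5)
The construction runs in two stages, each ending at a constant-profile anchor, so that \eqref{eq:constant-profile-anchor} certifies both endpoints of each stage. Every intermediate cardinality is then filled by Lemma~\ref{lem:column-layer-interpolation}.

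\emph{Stage 1: the completion tail.} Since \(T_{y_0}\leq m\), \eqref{eq:layer-tail} shows that the tail
\[
 D_0=\{y_0z:m\leq[z]_q<q^\beta\}
\]
lies in \(\Omega_2\) and has \(q^\beta-m\) elements. Set \(F^-=\varnothing\) and \(F^+=D_0\).

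The initial endpoint is feasible by \eqref{eq:initial-endpoint}. Because \(\Phi\) is an integer, \(\Phi(\varnothing)\leq q^{\lambda_3}/4\) in fact gives \(\Phi(\varnothing)\leq\mathscr H_3\).

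For the terminal endpoint, \eqref{eq:coupled-update} and \eqref{eq:staircase-prefix-profile} give
\[
 A_3(D_0)^{\mathsf T}\e=\ell\e+u_m+\mathcal Q(D_0)^{\mathsf T}\e .
\]
The tail hits each column \(z\) with \([z]_q\geq m\) exactly once, so \(\mathcal Q(D_0)^{\mathsf T}\e=\e-u_m\). The column profile is therefore \((\ell+1)\e\), which is constant. Then \eqref{eq:constant-profile-anchor} yields \(\Phi(D_0)\leq q^{\lambda_3}/4\), and integrality gives \(\Phi(D_0)\leq\mathscr H_3\).

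Now apply Lemma~\ref{lem:column-layer-interpolation} with \(D=D_0\subseteq\mathcal L_{y_0}\). It yields nested sets \(F_2^{(k)}\) for \(0\leq k\leq q^\beta-m\). The nesting comes from the greedy initial segments in Theorem~\ref{thm:matrix-interpolation}.

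\emph{Stage 2: the full layers.} Enumerate \(\mathcal Y_0\setminus\{y_0\}\) as \(y_1,\dots,y_N\). Each such layer is fully admissible, so \(\mathcal L_{y_i}\subseteq\Omega_2\). The sets \(D_0,\mathcal L_{y_1},\dots,\mathcal L_{y_N}\) are pairwise disjoint, because they lie in distinct layers (\(y\) differs).

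Add the layers one at a time. By \eqref{eq:unit-column-layer}, each full layer adds \(\e\) to the column profile. Hence after \(i\) layers the profile is \((\ell+1+i)\e\), again constant, and \eqref{eq:constant-profile-anchor} with integrality shows that every layer boundary is a feasible endpoint.

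Between consecutive boundaries take \(F^-=D_0\cup\mathcal L_{y_1}\cup\cdots\cup\mathcal L_{y_{i-1}}\) and \(D=\mathcal L_{y_i}\). Lemma~\ref{lem:column-layer-interpolation} then supplies nested feasible sets of every cardinality in between. Concatenating the greedy chains gives a single nested chain of length \(q^\beta-m+Nq^\beta=L_0\), which proves \eqref{eq:complete-layer-core-bound}.

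\emph{Nonemptiness.} The inclusion \(\mathcal Y_0\subseteq\mathcal Y_{\rm c}\) is trivial, since \(T_y=0\leq m\). It remains to exhibit some \(y\) satisfying both conditions of \eqref{eq:full-layer-criterion}. Take \(y\) with all symbols equal to \(q-1\), i.e.\ \([y]_q=q^\alpha-1\). Then \(\suff_\theta(y)\) also has value \(q^\theta-1\).

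Assume \(\mu_1\leq q^{\lambda_1}/2\). The first condition needs
\[
 \lceil\mu_1/q^\gamma\rceil\leq q^\alpha-1 .
\]
Since \(\mu_1/q^\gamma\leq q^{\alpha}/2\), and \(q^\alpha/2\leq q^\alpha-1\) holds because \(q^\alpha\geq2\), this inequality holds. Taking the ceiling is harmless, because \(q^\alpha-1\) is an integer at least \(q^\alpha/2\).

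The second condition needs \(\lceil\mu_1/q^\beta\rceil\leq q^\theta-1\). It follows in the same way from \(\mu_1/q^\beta\leq q^\theta/2\), using \(\theta\geq1\).

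\emph{Expected obstacle.} The delicate point is the passage from the real bound \(q^{\lambda_3}/4\) to the integer threshold \(\mathscr H_3\) at each anchor; integrality of \(\Phi\) settles it. The other point to check is that the hypotheses of Lemma~\ref{lem:column-layer-interpolation} hold with a nonempty \(F^-\) whose interior increments depend on earlier layers. This is harmless, since that lemma only concerns the structure of \(D\) within one layer.
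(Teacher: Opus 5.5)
Your proposal is correct and follows the paper's proof essentially step for step. The completion tail of \(y_0\) makes the column profile \((\ell+1)\e\), and each further fully admissible layer adds \(\e\). Constant-profile anchors \eqref{eq:constant-profile-anchor} certify every layer boundary, Lemma~\ref{lem:column-layer-interpolation} fills the cardinalities in between, and the all-\((q-1)\) word gives nonemptiness. You also make explicit two points the paper leaves implicit: integrality of \(\Phi\) turns \(\Phi\leq q^{\lambda_3}/4\) into \(\Phi\leq\mathscr H_3\), and the interpolation hypotheses depend only on the structure of \(D\), not on \(F^-\).
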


\begin{proof}
Select \(D_0=\{y_0z:m\leq[z]_q<q^\beta\}\). By
\eqref{eq:completion-capable-layer}, this tail is admissible;
\eqref{eq:unit-column-layer} and \eqref{eq:staircase-prefix-profile} give
\[
 A_3(D_0)^{\mathsf T}\e
 =\ell\e+u_m+(\e-u_m)=(\ell+1)\e.
\]
The initial and completed endpoints are feasible by
\eqref{eq:initial-endpoint} and \eqref{eq:constant-profile-anchor}.
Add the layers in \(\mathcal Y_0\setminus\{y_0\}\) one at a time.
Each adds \(\e\), preserving the constant column profile. Lemma~\ref{lem:column-layer-interpolation}
fills every step, yielding a nested chain of size
\(|D_0|+|\mathcal Y_0\setminus\{y_0\}|d=L_0\).

If \(\mu_1\leq q^{\lambda_1}/2\), take \(y=(q-1)^\alpha\).
Since \(\alpha,\theta\geq1\),
\[
\begin{aligned}
 {}[y]_q q^\gamma&=q^{\lambda_1}-q^\gamma
       \geq q^{\lambda_1}/2\geq\mu_1,\\
 [\suff_\theta(y)]_q q^\beta&=q^{\lambda_1}-q^\beta
       \geq q^{\lambda_1}/2\geq\mu_1.
\end{aligned}
\]
Thus \(y\in\mathcal Y_0\) by \eqref{eq:full-layer-criterion}.
\end{proof}

\subsubsection{Reverse progression and row-block transitions}

Complete \(g\)-groups organize reverse progression through partial layers; transitions between row blocks are handled separately.
Use the counting scales
\[
 H=q^\alpha,\qquad g=q^\gamma,\qquad B=q^t,\qquad A=q^\theta.
\]
Then \(H=AB\), \(d=Bg\), and \(Q=q^{\lambda_1}=Ad=Hg\).
There are \(B\) blocks of \(g\) rows and \(A\) standard column layers per block. Retaining \(\mu_1=\ell g+\nu\), write
\[
 \mu_1=kd+R,\quad R=cg+\nu,\quad m=B\nu,\qquad 0\le R<d.
\]
Here \(k,R\) locate the suffix cutoff. Identify words with their base-\(q\) values; then
\(i(w)=\lfloor w/H\rfloor\), \(j(w)=w\bmod d\), and
\begin{equation}
 i(w)\in[bg,(b+1)g)\ \Longleftrightarrow\ w\in[bQ,(b+1)Q).
 \label{eq:aligned-row-block}
\end{equation}
The layers in block $b$ have indices $bA,\ldots,(b+1)A-1$.
Write $\mathcal L_y[a,b)=\{yd+j:a\le j<b\}$.
Formula~\eqref{eq:layer-thresholds} becomes
\begin{equation}
 T_y=\max\{B[\mu_1-yg]_0^g,
                 [\mu_1-(y\bmod A)d]_0^d\}.
 \label{eq:local-layer-threshold}
\end{equation}
Write \(a,c_{\rm current}\) for the current column and residual row profiles, \(c^\circ=C_3^\circ\e\), and \(|v|=\e^{\mathsf T}v\) for vector mass. For a row-count vector \(V\), let \(V(J)=\sum_{i\in J}V_i\). A complete \(g\)-group in a layer has columns \([jg,(j+1)g)\).

Figure~\ref{fig:staircase-geometry} shows the geometric decomposition into a completion tail, full layers, complete groups, and residual fragments.

\begingroup
\newcommand{\ffcoltitle}{\(j=\suff_\beta(x)\)}
\newcommand{\fflayerrow}{\(y=\pref_\alpha(x)\)}
\newcommand{\ffsparserow}{\(i=\pref_\beta(x)\)}
\newcommand{\ffforbidden}{$F_1$ exclusions}
\newcommand{\ffcompletionlegend}{$\mathcal S_{\rm c}$: completion/full layers}
\newcommand{\ffgroupslegend}{$\mathcal S_g$: complete groups}
\newcommand{\ffresiduallegend}{$\mathcal R$: residual}
\ifdefined\TITReview
\begin{figure}[!ht]
\else
\begin{figure*}[!t]
\fi
\centering
\resizebox{0.88\textwidth}{!}{\begin{tikzpicture}[font=\scriptsize]
\tikzset{
  ffcell/.style={draw=black!70,line width=.34pt},
  ffforbid/.style={fill=black!62},
  ffcompletion/.style={pattern=north east lines,pattern color=black!62},
  ffgroups/.style={pattern=crosshatch,pattern color=black!58},
  ffresidual/.style={pattern=dots,pattern color=black!68},
  ffzero/.style={fill=black!7}
}

\begin{scope}[xshift=0cm]
  \node[font=\small] at (3.10,7.05)
    {\(\mathrm{(III)}\quad(\lambda_1,\lambda_2,\lambda_3)=(3,4,6)\)};
  \node at (3.10,6.65)
    {\(\mu_1=1,\ d=4,\ g=2,\ m=2,\ y_0=00\)};
  \def\xo{1.35}\def\yt{5.18}\def\cs{.88}
  \node at (3.11,5.78) {\ffcoltitle};
  \node[rotate=90] at (.36,3.42) {\fflayerrow};
  \foreach \lab [count=\j from 0] in {00,01,10,11}{
    \pgfmathsetmacro{\xx}{\xo+(\j+.5)*\cs}
    \node at (\xx,5.30) {\lab};
  }
  \foreach \lab [count=\i from 0] in {00,01,10,11}{
    \pgfmathsetmacro{\yy}{\yt-(\i+.5)*\cs}
    \node[anchor=east] at (1.25,\yy) {\lab};
  }
  \foreach \i in {0,...,3}{
    \foreach \j in {0,...,3}{
      \pgfmathsetmacro{\xx}{\xo+\j*\cs}
      \pgfmathsetmacro{\yy}{\yt-(\i+1)*\cs}
      \pgfmathtruncatemacro{\forb}{(\i==0&&\j<2)||(\i==2&&\j<1)}
      \pgfmathtruncatemacro{\completion}{(\i==0&&\j>=2)||(\i==1)||(\i==3)}
      \pgfmathtruncatemacro{\groups}{\i==2&&\j>=2}
      \pgfmathtruncatemacro{\residual}{\i==2&&\j==1}
      \ifnum\forb=1
        \path[ffcell,ffforbid] (\xx,\yy) rectangle ++(\cs,\cs);
      \else\ifnum\completion=1
        \path[ffcell,ffcompletion] (\xx,\yy) rectangle ++(\cs,\cs);
      \else\ifnum\groups=1
        \path[ffcell,ffgroups] (\xx,\yy) rectangle ++(\cs,\cs);
      \else\ifnum\residual=1
        \path[ffcell,ffresidual] (\xx,\yy) rectangle ++(\cs,\cs);
      \else
        \path[ffcell] (\xx,\yy) rectangle ++(\cs,\cs);
      \fi\fi\fi\fi
    }
  }
  \draw[very thick] (\xo,\yt-4*\cs) rectangle ++(4*\cs,4*\cs);
  \node[align=center] at (3.10,1.15)
    {\(\mathcal S_{\rm c}:\ \mathcal L_{00}[2,4)\ \dot\cup\ \mathcal L_{01}\ \dot\cup\ \mathcal L_{11}\)};
  \node[align=center] at (3.10,.80)
    {\(\mathcal S_g=\mathcal L_{10}[2,4),\quad \mathcal R=\{1001\}\)};
\end{scope}

\begin{scope}[xshift=7.10cm]
  \node[font=\small] at (3.45,7.05)
    {\(\mathrm{(VI)}\quad(\lambda_1,\lambda_2,\lambda_3)=(4,5,7)\)};
  \node at (3.45,6.65)
    {\(\mu_1=1,\ d=8,\ g=4,\ m=2,\ y_0=00\)};
  \def\xo{1.05}\def\yt{5.18}\def\cs{.57}
  \node[rotate=90] at (-.20,2.90) {\ffsparserow};
  \foreach \lab [count=\j from 0] in {000,001,010,011,100,101,110,111}{
    \pgfmathsetmacro{\xx}{\xo+(\j+.5)*\cs}
    \node[rotate=58,anchor=west] at (\xx,5.23) {\lab};
  }
  \foreach \lab [count=\i from 0] in {000,001,010,011,100,101,110,111}{
    \pgfmathsetmacro{\yy}{\yt-(\i+.5)*\cs}
    \node[anchor=east] at (.96,\yy) {\lab};
  }
  \foreach \i in {0,...,7}{
    \foreach \j in {0,...,7}{
      \pgfmathsetmacro{\xx}{\xo+\j*\cs}
      \pgfmathsetmacro{\yy}{\yt-(\i+1)*\cs}
      \pgfmathtruncatemacro{\support}{mod(\i,2)==floor(\j/4)}
      \ifnum\support=0
        \path[ffcell,ffzero] (\xx,\yy) rectangle ++(\cs,\cs);
        \node[text=black!35] at (\xx+.5*\cs,\yy+.5*\cs) {\(\times\)};
      \else
        \pgfmathtruncatemacro{\y}{floor(\i/2)}
        \pgfmathtruncatemacro{\forb}{(\y==0&&\j<2)||(\y==2&&\j<1)}
        \pgfmathtruncatemacro{\completion}{(\y==0&&\j>=2)||(\y==1)||(\y==3)}
        \pgfmathtruncatemacro{\groups}{\y==2&&\j>=4}
        \pgfmathtruncatemacro{\residual}{\y==2&&\j>=1&&\j<4}
        \ifnum\forb=1
          \path[ffcell,ffforbid] (\xx,\yy) rectangle ++(\cs,\cs);
        \else\ifnum\completion=1
          \path[ffcell,ffcompletion] (\xx,\yy) rectangle ++(\cs,\cs);
        \else\ifnum\groups=1
          \path[ffcell,ffgroups] (\xx,\yy) rectangle ++(\cs,\cs);
        \else\ifnum\residual=1
          \path[ffcell,ffresidual] (\xx,\yy) rectangle ++(\cs,\cs);
        \else
          \path[ffcell] (\xx,\yy) rectangle ++(\cs,\cs);
        \fi\fi\fi\fi
      \fi
    }
  }
  \draw[very thick] (\xo,\yt-8*\cs) rectangle ++(8*\cs,8*\cs);
  \node[align=center] at (3.35,.42)
    {\(\mathcal S_{\rm c}:\ \mathcal L_{00}[2,8)\ \dot\cup\ \mathcal L_{01}\ \dot\cup\ \mathcal L_{11}\)};
  \node[align=center] at (3.35,.07)
    {\(\mathcal S_g=\mathcal L_{10}[4,8),\quad
      \mathcal R=\mathcal L_{10}[1,4)\)};
\end{scope}

\begin{scope}[yshift=-.63cm,xshift=.38cm]
  \path[ffcell,ffforbid] (0,0) rectangle ++(.34,.34);
  \node[anchor=west] at (.45,.17) {\ffforbidden};
  \path[ffcell,ffcompletion] (3.05,0) rectangle ++(.34,.34);
  \node[anchor=west] at (3.50,.17) {\ffcompletionlegend};
  \path[ffcell,ffgroups] (7.15,0) rectangle ++(.34,.34);
  \node[anchor=west] at (7.60,.17) {\ffgroupslegend};
  \path[ffcell,ffresidual] (11.05,0) rectangle ++(.34,.34);
  \node[anchor=west] at (11.50,.17) {\ffresiduallegend};
\end{scope}
\end{tikzpicture}}
\caption{Binary instances of Regimes III and VI with $\mu_1=1$. Gray marks inadmissible words; diagonal hatching, the tail of completion-capable layer $y_0=00$ and full admissible layers; crosshatching, complete $g$-groups in the other partial layer; dots, the residual. The left panel uses column layers as rows, the right panel matrix rows; crosses lie outside the support.}
\label{fig:staircase-geometry}
\ifdefined\TITReview
\end{figure}
\else
\end{figure*}
\fi
\endgroup

\paragraph*{Zero remainder modulo \(d\)}
If \(R=0\), then \(\nu=m=0\) and every threshold is \(0\) or \(d\). Take all admissible words in reverse order of complete \(g\)-groups. The column group counts and selected row-block masses are nondecreasing, so the covariance argument below gives feasible group endpoints; Lemma~\ref{lem:column-layer-interpolation} fills each group. Thus the construction reaches \(\Omega_2\).

Henceforth \(R>0\). For $y>\ell$, the layer is empty, partial, or full according as
$y\bmod A<k$, $=k$, or $>k$; the partial layer starts at $R$.
The layer $\ell$ starts at $\max(m,[\mu_1-(\ell\bmod A)d]_0^d)$.

Write $f=A-k-1$ for the full-layer count in a row block whose layer indices all exceed $\ell$.

\begin{proposition}[Full-layer completion]
\label{prop:aligned-completion}
Suppose $\nu>0$ and a full admissible layer exists. Choose its smallest
index $y_0=b_0A+s_0$. Select $\mathcal L_{y_0}[m,d)$ and then all
other full layers. Next process the partial layers with index at least $b_0A$ in decreasing layer order, taking their complete $g$-groups in decreasing column order.
Every resulting group endpoint is feasible. If an admissible layer
remains at an index below $b_0A$, the additional packet
\begin{equation}
 \mathcal L_{y_0}[0,m)\ \dot\cup\ \mathcal L_\ell[m,d)
 \label{eq:last-completion-packet}
\end{equation}
is interpolable. The remaining words are the unfilled completion head
and the short boundary fragments of the partial layers.
\end{proposition}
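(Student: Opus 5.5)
The plan is to track the two profiles \(a=A_3(F_2)^{\mathsf T}\e\) and \(c=C_3(F_2)\e\) through the prescribed order, using the coupled update \eqref{eq:coupled-update}. At each group endpoint we prove the covariance inequality \eqref{eq:covariance}, which gives feasibility via Proposition~\ref{prop:covariance}. Feasibility of a whole step is then obtained from Theorem~\ref{thm:matrix-interpolation}, applied through Lemma~\ref{lem:column-layer-interpolation}, since every packet except the last lies inside one column layer.

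\textbf{Completion and full layers.} The first stage repeats the proof of Theorem~\ref{thm:complete-layer-core}. Because \(y_0\) is full, \(T_{y_0}=0\le m\). Selecting \(\mathcal L_{y_0}[m,d)\) turns \(\ell\e+u_m\) into \((\ell+1)\e\), and each further full layer adds \(\e\) by \eqref{eq:unit-column-layer}. Every such endpoint is therefore a constant-profile anchor as in \eqref{eq:constant-profile-anchor}.

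\textbf{Complete \(g\)-groups.} Next we take the complete \(g\)-groups of the partial layers with index at least \(b_0A\), in decreasing layer order and then decreasing column order. Write the current profiles as \(a=s\e+p\) and \(c=c_{\rm anchor}-r\), where \(p\) and \(r\) are the column and row counts of the groups taken so far. The constant part cancels exactly as in \eqref{eq:uniform-objective-expansion}. What remains is to show \(p^{\mathsf T}r\ge|p|\,|r|/d\), adjusted by the fixed correction coming from the anchor's residual profile \(c_{\rm anchor}\). I would prove this in three steps.
\begin{itemize}
\item[(i)] A complete group occupies columns \([jg,(j+1)g)\), and decreasing column order makes \(p\) right-loaded in the sense of \eqref{eq:right-loaded}.
\item[(ii)] By \eqref{eq:aligned-row-block}, the layers of block \(b\) put all their rows in \([bg,(b+1)g)\). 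Processing layers in decreasing order therefore gives row-block masses that are nondecreasing in the block index.
\item[(iii)] The staircase form \eqref{eq:staircase-residual-profile} of \(c^\circ\) depends only on \(p=\suff_\gamma\) of the row, so it is constant across row blocks. The cross term with the residual correction is then controlled by the same Abel summation as in \eqref{eq:reverse-abel}.
\end{itemize}
Restricting to layers of index at least \(b_0A\) guarantees that the selected row blocks lie in the region where these monotonicities hold. The selected blocks all sit at or beyond the full-layer block, which avoids the block transition.

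\textbf{The last packet.} The packet \eqref{eq:last-completion-packet} adds \(u_m\) through \(\mathcal L_{y_0}[0,m)\) and \(\e-u_m\) through \(\mathcal L_\ell[m,d)\), so its net column change is \(\e\). The terminal endpoint is therefore again a constant-profile anchor, and feasibility of both ends comes from \eqref{eq:constant-profile-anchor} together with the previous endpoint. For interpolability we verify \eqref{eq:column-unique}: the column \(j\) occurs at most twice, and it occurs in both pieces only for \(j\) in disjoint ranges \([0,m)\) versus \([m,d)\), so each column is hit once. For acyclicity we use the monotone row maps \(\mathcal T_{y_0}\) and \(\mathcal T_\ell\) and argue as in Lemma~\ref{lem:column-layer-interpolation}. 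A cycle would have to alternate between the two layers, and the row blocks of \(y_0\) and \(\ell\) are separated (\(\ell<b_0A\le y_0\)), so the arc directions are forced one way. If acyclicity fails in some degenerate alignment, we fall back on the integral-slack alternative \eqref{eq:one-unit-slack}, available whenever \(q^{\lambda_3}K(1-K)<\mathscr H_3\).

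\textbf{Main obstacle.} I expect the group-endpoint covariance estimate to be the main obstacle. The anchor profile \(c\) is not constant: \eqref{eq:staircase-residual-profile} has zero, partial, and full rows. One must check that the selected row blocks meet these levels in an order compatible with the right-loaded column counts. The first thing I would try is to reduce to within-block sums using \(d=Bg\) and the periodicity of \(c^\circ\) in the \(\gamma\)-coordinate.
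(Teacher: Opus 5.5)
\textbf{Main gap: the final packet.} By the time \eqref{eq:last-completion-packet} is added, the complete $g$-groups $[h,d)$, $h=g\lceil R/g\rceil$, of the $p=B-b_0$ partial layers have already been selected. The column profile therefore passes from $\kappa\e+p\mathbf{1}_{[h,d)}$ to $(\kappa+1)\e+p\mathbf{1}_{[h,d)}$. For $h<d$ this is not constant: take $q=2$, $(\lambda_1,\lambda_2,\lambda_3)=(4,6,9)$, $\mu_1=3$, where $y_0=3$, $\ell=1$, $p=3$, $h=4$, $d=8$. So \eqref{eq:constant-profile-anchor} does not certify the terminal endpoint.

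Your acyclicity argument also fails. For $q=2$, $(\lambda_1,\lambda_2,\lambda_3)=(3,4,6)$, $\mu_1=3$, the packet is $\{1100,1101\}\cup\{0110,0111\}$. The words $1101$ (row $11$, column $01$) and $0111$ (row $01$, column $11$) form a $2$-cycle. This happens because the columns $[0,m)$ of the first piece reach row block $b_0-1$, while the columns $[m,d)$ of the second piece reach row block $b_0$. Integral slack is therefore not a fallback for degenerate alignments but the only available mechanism, and you must prove it rather than make it conditional.

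The paper proves it from the row-block masses after the packet, $Z^+=(0,\ldots,0,d-m,Z_*,\ldots,Z_*)$ with $Z_*=(f+1)d-h$. For $h=sg<d$, the covariance margin is $J=\frac pB\sum_{i<s}\sum_{j\ge s}(Z^+_j-Z^+_i)$. It is at least $pg$ when $b_0\ge2$ and at least $p(B-s)\nu$ when $b_0=1$. For $h=d$, the paper uses $\kappa+1=H-p(k+1)\ge 3H/4$ to get a square margin of at least $H^2d/16$. Only this unit of slack, together with the column uniqueness you correctly verified, lets Theorem~\ref{thm:matrix-interpolation} apply.

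\textbf{Group endpoints: the obstacle is misplaced.} The levels in \eqref{eq:staircase-residual-profile} depend only on the last $\gamma$ symbols of the row index. Hence $c^\circ$ has equal mass on every row $g$-block. Complete groups give column counts $w$ that are constant on column $g$-blocks, so $w^{\mathsf T}c^\circ=|w|\,|c^\circ|/d$ exactly, and the zero, partial, and full rows cause no difficulty.

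What your outline misses is that $c_{\rm anchor}=c^\circ-V_{\rm anchor}$ already carries the row masses of the completion tail and the full layers. These are not uniform across row blocks, so step (iii) does not follow from the shape of $c^\circ$. The needed fact is that the row $g$-block masses of \emph{all} selected words are nondecreasing:
\begin{itemize}
\item blocks below $b_0$ are empty;
\item a block $b>b_0$ has mass $fd+t_b$, with $t_b$ nondecreasing under reverse order;
\item block $b_0$ has mass at most $fd-m+d-h$ when $s_0=k+1$ (its partial groups are processed last), and at most $fd-m$ when $s_0\ge k+2$.
\end{itemize}
Chebyshev on these block masses against the nondecreasing block-constant $w$ then gives the covariance criterion.
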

\begin{proof}
\emph{Reverse progression.} The first stage follows Theorem~\ref{thm:complete-layer-core}.
Put $h=g\lceil R/g\rceil$ and $\kappa=\ell+|\mathcal Y_0|$, the constant column value after the full-layer stage. Let $V$ count the selected words by row. At each subsequent group endpoint,
\[
 a=\kappa\e+w,\quad c_{\rm current}=c^\circ-V,\quad
 w=\sum_{j=0}^{B-1}n_j\mathbf1_{[jg,(j+1)g)},
\]
where $n_j$ is nondecreasing. Set $Z_b=V([bg,(b+1)g))$.
For $b>b_0$, $Z_b=fd+t_b$, with $0\le t_b\le d-h$;
the reverse order makes $t_b$ nondecreasing. Earlier blocks have mass
zero. If $s_0=k+1$, the initial block has mass at most
$fd-m+d-h$; its partial groups are processed after all later blocks.
If $s_0\ge k+2$, it has at most $f-1$ full layers and one boundary
layer, so its mass is at most $fd-m$. Thus $Z_b$ is nondecreasing.

The $g$-block masses of $c^\circ$ are equal. Chebyshev's inequality gives
\[
 w^{\mathsf T}c^\circ=\frac{|w|\,|c^\circ|}{d},\qquad
 w^{\mathsf T}V\ge\frac{|w|\,|V|}{d}.
\]
The profile covariance criterion proves feasibility, and each group
is interpolable by Lemma~\ref{lem:column-layer-interpolation}.

\emph{Row-block transition.} Minimality of $y_0$ implies that the only possible remaining layer with index below $b_0A$
is $\ell=b_0A-1$, with threshold $m$. For this case put
$p=B-b_0\ge1$. After \eqref{eq:last-completion-packet},
\[
 \begin{gathered}
 a^+=(\kappa+1)\e+p\mathbf1_{[h,d)},\\
 Z^+=(0,\ldots,0,d-m,Z_*,\ldots,Z_*),
 \end{gathered}
\]
where $\kappa=\ell+pf$, $Z_*=(f+1)d-h\ge d$, and there are $p$
copies of $Z_*$. If $h=sg<d$, the covariance margin is
\[
 J=\frac pB\sum_{i<s}\sum_{j\ge s}(Z_j^+-Z_i^+).
\]
The profile sums give $q^{\lambda_3}/4-\Phi\ge J$. For $b_0\ge2$, $J\ge pZ_*/B\ge pg\ge1$; for $b_0=1$,
$J\ge p(B-s)m/B=p(B-s)\nu\ge1$.
If $h=d$, then $\ell=(k+1)B-1$, $b_0A=(k+1)B$, and
\[
 \kappa+1=H-p(k+1)
 =H\left(1-\frac{b_0(B-b_0)}{B^2}\right)\ge3H/4.
\]
The constant-profile square margin is at least $H^2d/16\ge1$.
The packet has distinct column indices and terminal integral slack,
so Theorem~\ref{thm:matrix-interpolation} applies.
Formula~\eqref{eq:local-layer-threshold} gives the stated residual.
\end{proof}

\paragraph*{No fully admissible layer}
No full layer exists exactly when $\mu_1>(A-1)d$. Since
$\mu_1\le3Ad/4$, this forces $A\in\{2,3\}$.
Here $\ell\ge(A-1)B$. For every remaining complete $g$-group $G$,
\[
 \sum_{w\in G}((c_{\rm current})_{j(w)}-a_{i(w)})
 \le Q-\mu_1-g\ell=(H-2\ell)g-\nu\le0.
\]
The interaction term further decreases the increment. After all such groups, nothing remains if $\nu=0$. For $\nu>0$, put $h=(c+1)g$. The Kraft bound gives $c\le B-2$,
and
\[
 \ell-A(c+1)=(A-1)(B-c)-A\ge0.
\]
All residual rows are therefore at least $h$. Ordinary fragments have
columns in $[R,h)$; the sole possible exception lies in layer $\ell$.
Consequently, every residual cycle would lie in that single layer,
and every column in the row--column intersection occurs there only
once. Single-layer acyclicity and the terminal bound
\eqref{eq:terminal-endpoint} complete the interpolation.

\paragraph*{Zero remainder modulo \(g\)}
If $\nu=0$ and a full layer exists, every threshold is a multiple of $g$. Select the full layers
first and then the partial layers in reverse group order. The full-layer
endpoints have constant prefix profile. In subsequent endpoints the
column block counts and row block masses are nondecreasing: later row
blocks have mass $fd+t_b$, while the first active block has at most this
mass. The preceding covariance argument applies and no short fragments
remain.

\subsubsection{Terminal residual}
After Proposition~\ref{prop:aligned-completion}, with $\nu>0$, treat an empty completion head first. Otherwise remove any extra boundary fragment, then compare the row-block index $b$ with the column-block index $c$. Put $h=(c+1)g$ and $r=h-R=g-\nu$, the ordinary fragment length. Let $c^*=C_3^*\e$ be the terminal residual row profile, and let $u_{\rm remaining}$ count the unselected admissible words by row. At every intermediate
state,
\begin{equation}
 \begin{gathered}
 c_{\rm current}=c^*+u_{\rm remaining},\\
 \operatorname{supp}(c^*)\subseteq
 [0,\lfloor(\mu_1B-1)/H\rfloor].
 \end{gathered}
 \label{eq:fragment-residual-profile}
\end{equation}

If the completion head is empty, the remaining fragments are $[R,h)$,
one per active row block. Let the first such block have index $b$.
For $c<b$, residual row and column indices are disjoint. For $c\ge b$,
$c^*$ vanishes on $[R,h)$ and the residual mass there is at most $r$.
Since $a\ge1$, each fragment has nonpositive total increment.
Thus the fragments can be removed one layer at a time.

Suppose the completion head $E=\mathcal L_{y_0}[0,m)$ is nonempty.
Write $b=\lfloor\ell/A\rfloor$ and $\tau=\ell\bmod A$.
Every later block has the fragment $\mathcal L_{b'A+k}[R,h)$.
In block $b$, the fragment starts at $R$ if $\tau<k$, at $\max(m,R)$
if $\tau=k$, and at $m$ if $\tau>k$.

\paragraph*{The extra boundary fragment}
If $\tau>k$, or $\tau=k$ and $m>h$, put $h_m=g\lceil m/g\rceil$ and add
\[
 E\ \dot\cup\ \mathcal L_\ell[m,h_m).
\]
Its columns traverse $[0,h_m)$ once. With $p=B-b-1$ and
$\kappa'=H-p(k+1)$, the new prefix profile is
$\kappa'\e+p\mathbf1_{[h,d)}$. Later row blocks have mass
$Z_*=(f+1)d-h$; block $b$ has mass $(A-\tau)d-m$ if $\tau>k$, and
$(f+1)d-m$ if $\tau=k$. These masses are nondecreasing.
For $p>0,h<d$, the covariance margin is
\[
 \frac pB\sum_{i<h/g}\sum_{j\ge h/g}(Z_j-Z_i).
\]
It is at least $pg$ when $b\ge1$, and at least
$p(B-h/g)\nu$ when $b=0,\tau>k$.
If $b=0,\tau=k$, then $k=c=0$ and the margin is
$(B-1)^2(m-g)/B\ge1$; here $m-g$ is a positive multiple of $q$.
If $p=0$ or $h=d$, the constant coefficient is at least $3H/4$,
giving square margin at least $H^2d/16$.
The packet is therefore interpolable with one unit of terminal slack.
Only ordinary fragments remain, as treated above.

\paragraph*{The completion head and ordinary fragments}
We may now assume $\tau\le k$, and $m\le h$ when $\tau=k$.
Set $p=B-b$ and $\kappa=\ell+pf$. The current prefix profile is
$\kappa\e+p\mathbf1_{[h,d)}$.
If $c<b$, the ordinary fragments have no outgoing residual arcs;
all possible cycles would lie in $E$. The entire residual is thus
interpolable. If $c>b$, its mass on $[R,h)$ is at most $r$.
Remove ordinary fragments in decreasing layer index, then interpolate $E$.

For $c=b$, first consider $p\ge3$, $c\ge1$; the other cases are handled below. We have
\begin{equation}
 c(A-1)+\tau=kB,\qquad p=B-c.
 \label{eq:aligned-fragment-relation}
\end{equation}
After $E$ is selected, the prefix profile is
$\kappa\e+pv_0+v_1$, where $v_0=\mathbf1_{[h,d)}$,
$v_1=\mathbf1_{[0,m)}$. The first $c$ row blocks of the selected
profile $V$ have mass zero; its last $p$ blocks have equal mass
$Z_*=(f+1)d-h$. Put $N=|V|$, $m_0=(p-1)g$.
The initial profile is nondecreasing in each $g$-block, so its
centered product with $v_1$ is nonpositive. Expansion gives
\begin{equation}
 \Phi\le\frac{|a|\,|c_{\rm current}|}{d}
 +\frac{(pm_0+m)N}{d}-(pv_0+v_1)^{\mathsf T}V.
 \label{eq:fragment-centered-bound}
\end{equation}
The last two terms are nonpositive.
Indeed, $(pv_0+v_1)^{\mathsf T}V\ge(p-1)N$ and
$(pm_0+m)/d\le p-1$ when $m<h$. For $m\ge h$, write
$m/g=c+j+u$, with $j\ge1$, $0\le u<1$. Equal active block masses give
\[
 (pv_0+v_1)^{\mathsf T}V\ge(p-1+j/p)N.
\]
The corresponding coefficient bound follows from
\[
 \frac{pm_0+m}{d}-(p-1)
 =\frac{j+u-c(p-2)}{c+p}\le\frac jp,
\]
since $pu\le c[j+p(p-2)]$. This certifies the head endpoint.

For the subsequent lower fragments, the mass on $[R,h)$ is at most
$r$. To see this, in the first active row block the contribution of
$c^*$ and the remaining boundary fragment is contained, in local word
coordinates, in $[kd+R,kd+h)$. All other fragments lie in different
row blocks. Each lower fragment therefore has nonpositive increment.
The last fragment is in one layer and interpolates to the full endpoint.

\paragraph*{Small numbers of active blocks}
Continue with $c=b$. Let $C=|c_{\rm current}|$ be the residual mass before adding the head. Counting the active layers gives
\[
 \frac Cd=\frac{Acp+p-c\tau}{B}.
\]
The profile sums give \(|F_2|+C=Hd(1-K(F_1))\), so
\[
 C\le Hd/4\ \Longrightarrow\
 |F_2|\ge Hd\bigl(3/4-K(F_1)\bigr)\ge\mu_2,
\]
so the current chain already covers every required cardinality. For $c=0$, \eqref{eq:aligned-fragment-relation}
gives $k=\tau=0$ and $C=d\le Hd/4$.
For $p=1$ and $p=2$, respectively,
\begin{align*}
 4BC/d-AB^2&=-A(B-2)^2+4-4(B-1)\tau,\\
 4BC/d-AB^2&=-A(B-4)^2+8-4(B-2)\tau.
\end{align*}
The first is nonpositive. For $B=2,3$, this follows from
$\tau\equiv1\pmod2$ and $\tau\equiv2\pmod3$, respectively; larger $B$
follow directly. The second is nonpositive for $B\ge6$.
For $B=4,5$, use $\tau\equiv2\pmod4$ and $\tau\equiv3\pmod5$;
the possibility $A=2,B=4$ is excluded by $\tau<A$.
The remaining case is $A=B=3$, $c=1$, $k=\tau=1$, for which $C=7g$.
Here $m=3\nu\le2g$, so one lower fragment has length
$g-\nu\ge g/3$. After selecting it,
$C'\le20g/3<27g/4=Hd/4$, and $\Phi\le HC'\le H^2d/4$.
This single-layer transition covers all remaining required cardinalities.

\begin{theorem}[Completion in Regimes III and VI]
\label{thm:third-terminal-configuration}
In Regimes III and VI, every prescribed middle-layer cardinality
under the Kraft bound is attained by a nested feasible chain.
\end{theorem}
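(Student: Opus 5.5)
The plan is to assemble the pieces of this subsection into a single nested chain
\[
\varnothing=F_2^{(0)}\subset F_2^{(1)}\subset\cdots\subset F_2^{(N)},\qquad |F_2^{(k)}|=k,
\]
in which every member satisfies \(\Phi\leq\mathscr H_3\). The chain must either reach \(\Omega_2\) or stop at a state whose cardinality already exceeds \(\mu_2\). Since \(\Phi\) is integral, feasibility at the real threshold \(q^{\lambda_3}/4\) is the same as feasibility at \(\mathscr H_3\), so Theorem~\ref{thm:matrix-interpolation} can be applied between consecutive certified endpoints. Once such a chain exists, \eqref{eq:omega-size} supplies an element of size exactly \(\mu_2\).

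First dispatch the zero-remainder cases. If \(R=0\), the paragraph on zero remainder modulo \(d\) gives group endpoints all the way to \(\Omega_2\), with each group filled by Lemma~\ref{lem:column-layer-interpolation}. If \(\nu=0\) and a full layer exists, the paragraph on zero remainder modulo \(g\) gives the same conclusion. If no full layer exists, then \(A\in\{2,3\}\); every complete group has nonpositive total increment, and the residual is interpolated by single-layer acyclicity together with the terminal bound \eqref{eq:terminal-endpoint}.

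In the main case, \(\nu>0\) with a full layer, proceed in stages. Begin with Theorem~\ref{thm:complete-layer-core} for the tail of \(\mathcal L_{y_0}\) and the full layers. Continue with Proposition~\ref{prop:aligned-completion} for the complete \(g\)-groups in reverse order, adding the packet \eqref{eq:last-completion-packet} if a lower admissible layer remains. Finish with the terminal-residual analysis:
\begin{itemize}
\item if the completion head is empty, remove the ordinary fragments layer by layer;
\item otherwise add the extra boundary fragment when \(\tau>k\), or when \(\tau=k\) and \(m>h\);
\item then handle \(c<b\) and \(c>b\) directly;
\item for \(c=b\), certify the head endpoint via \eqref{eq:fragment-centered-bound} when \(p\geq3\) and \(c\geq1\), and remove the lower fragments after it.
\end{itemize}
Each pair of consecutive endpoints has a difference set lying in a single layer, or else a packet with distinct columns and one unit of slack. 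Theorem~\ref{thm:matrix-interpolation} therefore fills every intermediate cardinality. Every stage ends at a certified endpoint that is the start of the next stage, so the segments concatenate into one nested chain ending at \(\Omega_2\), whose feasibility is \eqref{eq:terminal-endpoint}.

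The remaining subcases with \(c=b\) are \(c=0\), \(p\in\{1,2\}\), and \(A=B=3\). Here we do not reach \(\Omega_2\). The small-block computation shows that the residual mass before the head satisfies \(C\leq Hd/4\), or \(C'\leq Hd/4\) after one extra lower fragment in the case \(A=B=3\). By the profile-sum identity this gives \(|F_2|\geq Hd(3/4-K(F_1))\geq\mu_2\), using \eqref{eq:three-kraft} and \(q^{\lambda_3}=Hd\cdot q^{\lambda_2-\beta}\) normalized as in the paper. The chain built so far therefore already contains a set of size \(\mu_2\). The main obstacle is bookkeeping rather than new estimates: I must check that the case split is exhaustive, covering \(R=0\), \(\nu=0\), and the presence or absence of a full layer, and within the residual stage \(\tau\) against \(k\), \(m\) against \(h\), and \(c\) against \(b\). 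I must also check that every junction between stages is an endpoint actually certified by one of the preceding results, so that no cardinality is skipped.
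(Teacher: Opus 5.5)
Your proposal is correct and follows the paper's own argument. The paper proves this theorem by assembling the same pieces: the zero-remainder and no-full-layer constructions, Theorem~\ref{thm:complete-layer-core} with Proposition~\ref{prop:aligned-completion}, and the terminal-residual case analysis. The chain ends either at \(\Omega_2\) or, in the small-block cases, at a state with \(|F_2|\geq q^{\lambda_2}\bigl(3/4-K(F_1)\bigr)\geq\mu_2\), exactly as you describe.

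One minor imprecision: not every transition is single-layer or a column-unique packet with integral slack. The \(c<b\) residuals and the no-full-layer residual span several layers and instead use disjoint-index or acyclic interpolation, which the paper names as a third mechanism. Also, the normalization you need in the small-block step is just \(Hd=q^{\lambda_2}\).
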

\begin{proof}
The zero-remainder cases and the no-full-layer construction are complete. Otherwise Proposition~\ref{prop:aligned-completion} reaches
the terminal residual. The preceding cases exhaust this residual,
with single-layer, disjoint-index, or integral-slack interpolation at
each transition. The chain ends at $\Omega_2$ or at a cardinality
already sufficient for every allowed $\mu_2$.
\end{proof}

\subsection{Regime V: phase-aligned complete \(g\)-groups}
\label{subsec:regime-v}

Regime V is characterized by
\begin{equation}
  \alpha<\lambda_1\leq\beta,
  \qquad \alpha<\beta.
  \label{eq:regv-range}
\end{equation}
The shortest-layer shadow crosses the common substring coordinate.
We first identify endpoints with a common feasibility certificate, then
construct entry paths and transitions between them. Subcritical paths
usually join a standard chain; a binary boundary path enters the terminal
residual directly. Above the subcritical range, only the critical
odd-alphabet branch needs a separate support construction.

\subsubsection{Phase geometry and the endpoint certificate}

The column periods and their row images have the following scales:
\begin{equation}
\begin{gathered}
 H=q^\alpha,\qquad
 g=q^{\lambda_1-\alpha},\qquad Q=Hg=q^{\lambda_1},\\
 d=q^\beta,\qquad R=d/H,\qquad M=d/Q=R/g.
\end{gathered}
\label{eq:regv-scales}
\end{equation}
There are \(H\) physical layers, each mapping into \(R\) row indices.
A source \(Q\)-block maps into a row \(g\)-block, and each layer has
\(M\) source blocks. To locate the incomplete column group, write
\begin{equation}
 \mu_1=\ell g+\nu,\qquad 0\leq\nu<g,
 \label{eq:regv-mu1-division}
\end{equation}
and define
\begin{equation}
\begin{aligned}
 c&=\left\lceil\frac{\mu_1}{g}\right\rceil,
 &h&=H-c,
 &x&=c/H,\\
 \rho&=h/H,
 &m_1&=\nu d/g,
 &e_1&=\mathbf{1}_{[0,m_1)},\\
 a_0&=(A_3^\circ)^{\mathsf T}\e=\ell\e+e_1,\\
 P&=\e^{\mathsf T}a_0=\ell d+m_1,\\
 K_1&:=K(F_1)=\frac{\mu_1}{q^{\lambda_1}}
              =\frac{P}{q^{\lambda_2}}.
\end{aligned}
\label{eq:regv-phase-parameters}
\end{equation}
Thus \(h\) complete groups remain in each period, occupying fraction
\(\rho\); \(m_1\) is the width of the elevated baseline head.

Intervals use numerical column indices. For \(S\subseteq[0,d)\) and
physical layer \(0\leq y<H\), let
\begin{equation}
  S_y=\{yz\in\Omega_2:[z]_q\in S\}.
  \label{eq:regv-layer-notation}
\end{equation}

The complete admissible column groups form
\begin{equation}
 \mathcal K=\{0\leq z<d:z\bmod Q\geq cg\}.
 \label{eq:regv-K}
\end{equation}

For \(F\subseteq\Omega_2\), abbreviate
\begin{equation}
 u_F=\mathcal Q(F)\e,\qquad v_F=\mathcal Q(F)^{\mathsf T}\e,
 \qquad N_F=|F|,
 \label{eq:regv-uvn}
\end{equation}
and put \(c_0=C_3^\circ\e\).  The natural-order shortest layer gives
the periodic profile
\begin{equation}
 c_0(j)=w(j\bmod g),\qquad
 w(r)=H-[\mu_1-rH]_0^H,
 \label{eq:regv-c0-profile}
\end{equation}
where \([t]_0^H=\min\{H,\max\{0,t\}\}\).  Thus \(w\) is
nondecreasing and \(w(r)=H\) for
\(r\geq\lceil\mu_1/H\rceil\).

Put \(\eta=m_1\bmod g\). An endpoint is \emph{phase-aligned}
if it is a disjoint union of complete column \(g\)-groups and at most
one tail with column interval \([ag+\eta,(a+1)g)\) in one physical
layer, for an integer \(a\geq0\). Column contributions from different
layers are counted with multiplicity.

The phase slack measures the deviation of the baseline overlap from its mean-profile value. Define
\begin{align}
 \Delta_0&=\frac{P(Hd-P)}d-a_0^{\mathsf T}c_0,\nonumber\\
 \Delta_v(F)&=N_F\frac{Hd-P}{d}-v_F^{\mathsf T}c_0,\nonumber\\
 \Xi(F)&=\Delta_0+\Delta_v(F).
 \label{eq:regv-phase-slack}
\end{align}

\begin{lemma}[Exact phase-slack identity]
\label{lem:regv-slack-identity}
For every \(F\subseteq\Omega_2\),
\begin{align}
 \frac{q^{\lambda_3}}4-\Phi(F)
={}&\frac{(P+N_F-Hd/2)^2}{d}
 +(e_1+v_F)^{\mathsf T}u_F \nonumber\\
 &-\frac{N_F(m_1+N_F)}d+\Xi(F).
 \label{eq:regv-slack-identity}
\end{align}
For every phase-aligned endpoint, \(\Xi(F)\geq0\); for
the standard endpoints in \eqref{eq:regv-standard-endpoints}, equality
holds.  
\end{lemma}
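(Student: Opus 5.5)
The identity \eqref{eq:regv-slack-identity} is an exact algebraic expansion, so I will derive it first and treat the inequality $\Xi(F)\ge0$ separately.

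\emph{Step 1: the identity.} By \eqref{eq:coupled-update}, $a=a_0+v_F$ and $c=c_0-u_F$. Expanding gives
\[
\Phi(F)=a_0^{\mathsf T}c_0+v_F^{\mathsf T}c_0-a_0^{\mathsf T}u_F-v_F^{\mathsf T}u_F .
\]
Since $a_0=\ell\e+e_1$ and $\e^{\mathsf T}u_F=N_F$, we have $a_0^{\mathsf T}u_F=\ell N_F+e_1^{\mathsf T}u_F$. Substituting the definitions of $\Delta_0$ and $\Delta_v(F)$ gives
\[
\Phi=\frac{(P+N_F)(Hd-P)}{d}-\Xi-\ell N_F-(e_1+v_F)^{\mathsf T}u_F .
\]
Write $\ell=(P-m_1)/d$ and $q^{\lambda_3}=Hd\cdot H$. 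The relation $q^{\lambda_3}/4=H^2d/4$ must be checked against the normalization; the mass identity $|c_0|=Hd-P$ from \eqref{eq:profile-sums} uses $q^{\lambda_2}=Hd$. Completing the square in $P+N_F$ then yields
\[
\frac{H^2d}{4}-\frac{(P+N_F)(Hd-P)-(P-m_1)N_F}{d}
=\frac{(P+N_F-Hd/2)^2}{d}-\frac{N_F(m_1+N_F)}{d}.
\]
This is a routine check, and it delivers \eqref{eq:regv-slack-identity}.

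\emph{Step 2: $\Delta_0\ge0$.} The value $\Delta_0$ is $d$ times the negative covariance of $a_0$ and $c_0$ over the uniform distribution on columns. Here $a_0-\ell\e=e_1$ is a head indicator of width $m_1=\nu d/g$. By \eqref{eq:regv-c0-profile}, $c_0$ is $g$-periodic and nondecreasing within each period. The interval $[0,m_1)$ consists of $\nu d/g^2$ whole periods whenever $g^2\mid \nu d$. Otherwise it consists of whole periods plus an initial segment of one period; in that case I would use the fact that an initial segment of a nondecreasing block has average at most the block average. Together these give $e_1^{\mathsf T}c_0\le m_1|c_0|/d$, that is, $\Delta_0\ge0$.

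\emph{Step 3: $\Delta_v\ge0$ on phase-aligned endpoints.} For a complete column $g$-group, $v_F$ restricted to it is the indicator of one full period of $c_0$. Its contribution to $v_F^{\mathsf T}c_0$ is therefore exactly $\sum_r w(r)=g\cdot|c_0|/d$, which is its mean share, so $\Delta_v$ vanishes on these groups. For the single tail $[ag+\eta,(a+1)g)$, the tail is a terminal segment of a period. Since $w$ is nondecreasing, its sum is at least $(g-\eta)$ times the mean, which makes the tail's contribution to $\Delta_v$ nonpositive in the wrong direction. This is where I expect the main obstacle. I plan to pair the tail with the head defect in $\Delta_0$. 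The head $[0,m_1)$ ends at phase $\eta=m_1\bmod g$, and the tail begins at the same phase, so $e_1$ restricted to the last partial period and the tail indicator are complementary within one period. Their combined vector is then $\mathbf 1_{\text{period}}$ plus whole periods, and $\Xi$ sums the two defects. I would check that a single tail carries multiplicity at most one, so its excess is absorbed by the head deficit. This gives $\Xi\ge0$ exactly when the head contributes its deficit at least once, which holds because at most one tail is allowed.

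\emph{Step 4: equality.} For the standard endpoints, I expect that they are built so that the head period is completed by a tail, or that $\eta=0$ with only complete groups. Then both defects cancel exactly and $\Xi=0$, which I would confirm directly from \eqref{eq:regv-standard-endpoints} once those endpoints are written out.
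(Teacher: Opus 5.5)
Your proposal is correct and follows the paper's own route. The identity comes from the same expansion with $a_0+v_F$, $c_0-u_F$ and completion of the square in $P+N_F$. Your phase-complementarity argument is exactly the paper's observation: with $\psi(\eta)=\eta\bar w-\sum_{r<\eta}w(r)\ge0$ ($\bar w$ the mean of $w$), one has $\Delta_0=\psi(\eta)$, complete groups contribute $0$ to $\Delta_v$, and the single tail contributes $-\psi(\eta)$, so $\Xi\in\{0,\psi(\eta)\}$.

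The check you defer in Step~4 is immediate from the definitions. For $\eta>0$, the only non-group-aligned start in $G_p$ is that of $\mathcal L$ at $m_1$ when $s>cg$, or of $\mathcal B$ at $bQ+cg+s$ when $s\le cg$. Each has phase $\eta$ and lies in a single layer, so $G_p$ carries exactly one tail and $\Xi(G_p)=0$.
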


\begin{proof}
Substitute \(A_3(F)^{\mathsf T}\e=a_0+v_F\), \(C_3(F)\e=c_0-u_F\) in \eqref{eq:Phi} and complete the square in
\(P+N_F\) to obtain \eqref{eq:regv-slack-identity}.
Set \(\bar w=H-\mu_1/g\) and
\(\psi(t)=t\bar w-\sum_{r=0}^{t-1}w(r)\), \(0\leq t\leq g\).
The profile is nondecreasing with mean \(\bar w\), so
\(\psi(t)\geq0\) and \(\psi(g)=0\). Periodicity gives
\(\Delta_0=\psi(\eta)\). A complete group contributes zero to
\(\Delta_v\), and the tail \([ag+\eta,(a+1)g)\) contributes
\(-\psi(\eta)\). Thus \(\Xi\) is either \(\psi(\eta)\) or zero.
The standard endpoints have precisely this tail when \(\eta>0\),
so their phase slack vanishes.
\end{proof}

\subsubsection{Subcritical standard chain}

Assume first that
\begin{equation}
 c\leq H/2,\qquad \rho=1-c/H\geq1/2.
 \label{eq:regv-subcritical}
\end{equation}

If \(m_1=0\), select \(\mathcal K_y\) for \(y=H-1,\ldots,c\).
After \(j\) layers, \(N=j\rho d\) and \(v=j\mathbf1_{\mathcal K}\).
The row profile has equal mass in successive \(g\)-blocks on
\([(H-j)R,d)\). Since \(\mathcal K\) retains the last \(h\) blocks
of each \(Q\)-period, \(\mathbf1_{\mathcal K}^{\mathsf T}u\geq\rho N\).
Hence \(v^{\mathsf T}u\geq N^2/d\), and
\eqref{eq:regv-slack-identity} certifies the endpoints.
Lemma~\ref{lem:column-layer-interpolation} fills each layer; the
result is the complete admissible core \(T\) of
\eqref{eq:terminal-first-set}.

For the rest of the subcritical construction, assume \(m_1>0\).
Write \(m_1=bQ+s\), \(0\leq s<Q\), and set
\begin{equation}
 (U,V)=
 \begin{cases}
 (m_1,(b+1)Q),&s>cg,\\
 (bQ+cg,bQ+cg+s),&s\leq cg,
 \end{cases}
 \label{eq:regv-UV}
\end{equation}
\begin{equation}
 \mathcal L=\mathcal K\cap[U,d),\qquad
 \mathcal B=\mathcal K\cap[V,d).
 \label{eq:regv-LB}
\end{equation}
Thus \(\mathcal L=\mathcal K\cap[m_1,d)\) and
\(\mathcal B\subseteq\mathcal L\). Directly
substituting the natural-order forbidden intervals in the sparse support
criterion shows that every word in the layer sets used below belongs to
\(\Omega_2\).

For \(0\leq p\leq h-1\), define
\begin{equation}
 G_p=
 \left(\mathop{\dot\bigcup}_{r=0}^{p-1}\mathcal K_{H-1-r}\right)
 \mathbin{\dot\cup}\mathcal L_{H-1-p}
 \mathbin{\dot\cup}\mathcal B_{H-2-p}.
\label{eq:regv-standard-endpoints}
\end{equation}

The set \(G_p\) consists of \(p\) complete \(\mathcal K\)-layers and two boundary
layers of types \(\mathcal L\) and \(\mathcal B\). Empty boundary sets are omitted. Except for the binary direct entry in
\eqref{eq:binary-direct-entry}, the subcritical branches enter at
\(G_0=\mathcal L_{H-1}\mathbin{\dot\cup}\mathcal B_{H-2}\)
and follow
\begin{equation}
 \varnothing\rightsquigarrow G_0\rightsquigarrow G_1
 \rightsquigarrow\cdots\rightsquigarrow F_{\rm V}^{\star}
 \rightsquigarrow F^{(1)}\rightsquigarrow\Omega_2.
 \label{eq:regv-construction-chain}
\end{equation}
Standard promotion ends at \(G_{h-1}\), with boundary layers \(c,c-1\);
the terminal residual uses \(F^{(1)}\) in \eqref{eq:terminal-first-set}.
Fig.~\ref{fig:regv-phase-geometry} illustrates the two phase modes.

\ifdefined\TITReview
\begin{figure}[!ht]
\centering
\resizebox{\textwidth}{!}{\begin{tikzpicture}[font=\scriptsize]
\tikzset{
  rvcell/.style={draw=black!68,line width=.32pt},
  rvforbid/.style={fill=black!62},
  rvbase/.style={pattern=north east lines,pattern color=black!64},
  rvnext/.style={pattern=crosshatch,pattern color=black!62},
  rvopen/.style={fill=white},
  rvzero/.style={fill=black!7},
  rvblock/.style={draw=black!88,line width=.72pt}
}
\def\rvcs{.47}
\def\rvxo{1.38}
\def\rvyt{4.42}


\begin{scope}[xshift=0cm]
  \path (0,-.72) rectangle (6.82,5.96);
  \node[font=\small] at (3.42,5.86)
    {\(\mathrm{(a)}\quad \mu_1=2,\quad s=6>cg=3\)};
  \node at (3.42,5.50) {\(F_1=\{00,01\}\)};
  \node at (3.42,5.14) {\(j=\suff_2(x)\)};
  \node[rotate=90] at (.36,2.31) {\(i=\pref_2(x)\)};

  \foreach \lab [count=\j from 0] in
    {00,01,02,10,11,12,20,21,22}{
    \pgfmathsetmacro{\xx}{\rvxo+(\j+.5)*\rvcs}
    \node[rotate=55,anchor=west] at (\xx,4.50) {\lab};
  }
  \foreach \lab [count=\i from 0] in
    {00,01,02,10,11,12,20,21,22}{
    \pgfmathsetmacro{\yy}{\rvyt-(\i+.5)*\rvcs}
    \node[anchor=east] at (1.27,\yy) {\lab};
  }

  \foreach \i in {0,...,8}{
    \foreach \j in {0,...,8}{
      \pgfmathsetmacro{\xx}{\rvxo+\j*\rvcs}
      \pgfmathsetmacro{\yy}{\rvyt-(\i+1)*\rvcs}
      \pgfmathtruncatemacro{\support}{mod(\i,3)==floor(\j/3)}
      \ifnum\support=0
        \path[rvcell,rvzero] (\xx,\yy) rectangle ++(\rvcs,\rvcs);
        \node[text=black!34] at ({\xx+.5*\rvcs},{\yy+.5*\rvcs}) {\(\times\)};
      \else
        \pgfmathtruncatemacro{\forb}{(\i<2)||(\j<2)}
        \pgfmathtruncatemacro{\gzero}{(floor(\i/3)==2)&&(\j>=6)}
        \pgfmathtruncatemacro{\dzero}{
          ((floor(\i/3)==2)&&(\j>=3)&&(\j<6))||
          ((floor(\i/3)==1)&&(\j>=6))}
        \ifnum\forb=1
          \path[rvcell,rvforbid] (\xx,\yy) rectangle ++(\rvcs,\rvcs);
        \else\ifnum\gzero=1
          \path[rvcell,rvbase] (\xx,\yy) rectangle ++(\rvcs,\rvcs);
        \else\ifnum\dzero=1
          \path[rvcell,rvnext] (\xx,\yy) rectangle ++(\rvcs,\rvcs);
        \else
          \path[rvcell,rvopen] (\xx,\yy) rectangle ++(\rvcs,\rvcs);
        \fi\fi\fi
      \fi
    }
  }
  \draw[rvblock] (\rvxo,\rvyt-9*\rvcs) rectangle ++(9*\rvcs,9*\rvcs);
  \foreach \k in {3,6}{
    \draw[rvblock]
      ({\rvxo+\k*\rvcs},\rvyt-9*\rvcs) -- ({\rvxo+\k*\rvcs},\rvyt);
    \draw[rvblock]
      (\rvxo,{\rvyt-\k*\rvcs}) -- ({\rvxo+9*\rvcs},{\rvyt-\k*\rvcs});
  }
  \node at (3.42,-.06) {\(G_0=\{22t:t\in\X\}\)};
  \node at (3.42,-.39) {\(D_0=\{12t,21t:t\in\X\}\)};
\end{scope}

\begin{scope}[xshift=7.05cm]
  \path (0,-.72) rectangle (6.82,5.96);
  \node[font=\small] at (3.42,5.86)
    {\(\mathrm{(b)}\quad \mu_1=1,\quad s=cg=3\)};
  \node at (3.42,5.50) {\(F_1=\{00\}\)};
  \node at (3.42,5.14) {\(j=\suff_2(x)\)};
  \node[rotate=90] at (.36,2.31) {\(i=\pref_2(x)\)};

  \foreach \lab [count=\j from 0] in
    {00,01,02,10,11,12,20,21,22}{
    \pgfmathsetmacro{\xx}{\rvxo+(\j+.5)*\rvcs}
    \node[rotate=55,anchor=west] at (\xx,4.50) {\lab};
  }
  \foreach \lab [count=\i from 0] in
    {00,01,02,10,11,12,20,21,22}{
    \pgfmathsetmacro{\yy}{\rvyt-(\i+.5)*\rvcs}
    \node[anchor=east] at (1.27,\yy) {\lab};
  }

  \foreach \i in {0,...,8}{
    \foreach \j in {0,...,8}{
      \pgfmathsetmacro{\xx}{\rvxo+\j*\rvcs}
      \pgfmathsetmacro{\yy}{\rvyt-(\i+1)*\rvcs}
      \pgfmathtruncatemacro{\support}{mod(\i,3)==floor(\j/3)}
      \ifnum\support=0
        \path[rvcell,rvzero] (\xx,\yy) rectangle ++(\rvcs,\rvcs);
        \node[text=black!34] at ({\xx+.5*\rvcs},{\yy+.5*\rvcs}) {\(\times\)};
      \else
        \pgfmathtruncatemacro{\forb}{(\i<1)||(\j<1)}
        \pgfmathtruncatemacro{\gzero}{
          ((floor(\i/3)==2)&&(\j>=3))||
          ((floor(\i/3)==1)&&(\j>=6))}
        \pgfmathtruncatemacro{\dzero}{
          ((floor(\i/3)==1)&&(\j>=3)&&(\j<6))||
          ((floor(\i/3)==0)&&(\j>=6))}
        \ifnum\forb=1
          \path[rvcell,rvforbid] (\xx,\yy) rectangle ++(\rvcs,\rvcs);
        \else\ifnum\gzero=1
          \path[rvcell,rvbase] (\xx,\yy) rectangle ++(\rvcs,\rvcs);
        \else\ifnum\dzero=1
          \path[rvcell,rvnext] (\xx,\yy) rectangle ++(\rvcs,\rvcs);
        \else
          \path[rvcell,rvopen] (\xx,\yy) rectangle ++(\rvcs,\rvcs);
        \fi\fi\fi
      \fi
    }
  }
  \draw[rvblock] (\rvxo,\rvyt-9*\rvcs) rectangle ++(9*\rvcs,9*\rvcs);
  \foreach \k in {3,6}{
    \draw[rvblock]
      ({\rvxo+\k*\rvcs},\rvyt-9*\rvcs) -- ({\rvxo+\k*\rvcs},\rvyt);
    \draw[rvblock]
      (\rvxo,{\rvyt-\k*\rvcs}) -- ({\rvxo+9*\rvcs},{\rvyt-\k*\rvcs});
  }
  \node at (3.42,-.06) {\(G_0=\{12t,21t,22t:t\in\X\}\)};
  \node at (3.42,-.39) {\(D_0=\{02t,11t:t\in\X\}\)};
\end{scope}

\begin{scope}[xshift=.62cm,yshift=-1.10cm]
  \path[rvcell,rvforbid] (0,0) rectangle ++(.34,.34);
  \node[anchor=west] at (.45,.17) {\(\X^3\setminus\Omega_2\)};
  \path[rvcell,rvbase] (3.05,0) rectangle ++(.34,.34);
  \node[anchor=west] at (3.50,.17) {\(G_0\)};
  \path[rvcell,rvnext] (4.70,0) rectangle ++(.34,.34);
  \node[anchor=west] at (5.15,.17) {\(D_0=G_1\setminus G_0\)};
  \path[rvcell,rvopen] (8.25,0) rectangle ++(.34,.34);
  \node[anchor=west] at (8.70,.17)
    {\(\Omega_2\setminus(G_0\cup D_0)\)};
  \path[rvcell,rvzero] (12.25,0) rectangle ++(.34,.34);
  \node at (12.42,.17) {\(\times\)};
  \node[anchor=west] at (12.76,.17) {\(M_3(i,j)=0\)};
\end{scope}
\end{tikzpicture}}
\caption{Ternary phase modes for \((\lambda_1,\lambda_2,\lambda_3)=(2,3,4)\),
\(\alpha=1,\beta=2,H=g=3,Q=d=9\). Cell \((i_1i_2,i_2j_2)\)
represents \(i_1i_2j_2\); crosses are ambient zeros. Gray denotes words
forbidden by \(F_1\), diagonal hatching \(G_0\), and crosshatching
\(D_0=G_1\setminus G_0\). Panels (a), (b) show \(s>cg\), \(s\leq cg\).}
\label{fig:regv-phase-geometry}
\end{figure}
\else
\begin{figure*}[!t]
\centering
\resizebox{\textwidth}{!}{\begin{tikzpicture}[font=\scriptsize]
\tikzset{
  rvcell/.style={draw=black!68,line width=.32pt},
  rvforbid/.style={fill=black!62},
  rvbase/.style={pattern=north east lines,pattern color=black!64},
  rvnext/.style={pattern=crosshatch,pattern color=black!62},
  rvopen/.style={fill=white},
  rvzero/.style={fill=black!7},
  rvblock/.style={draw=black!88,line width=.72pt}
}
\def\rvcs{.47}
\def\rvxo{1.38}
\def\rvyt{4.42}


\begin{scope}[xshift=0cm]
  \path (0,-.72) rectangle (6.82,5.96);
  \node[font=\small] at (3.42,5.86)
    {\(\mathrm{(a)}\quad \mu_1=2,\quad s=6>cg=3\)};
  \node at (3.42,5.50) {\(F_1=\{00,01\}\)};
  \node at (3.42,5.14) {\(j=\suff_2(x)\)};
  \node[rotate=90] at (.36,2.31) {\(i=\pref_2(x)\)};

  \foreach \lab [count=\j from 0] in
    {00,01,02,10,11,12,20,21,22}{
    \pgfmathsetmacro{\xx}{\rvxo+(\j+.5)*\rvcs}
    \node[rotate=55,anchor=west] at (\xx,4.50) {\lab};
  }
  \foreach \lab [count=\i from 0] in
    {00,01,02,10,11,12,20,21,22}{
    \pgfmathsetmacro{\yy}{\rvyt-(\i+.5)*\rvcs}
    \node[anchor=east] at (1.27,\yy) {\lab};
  }

  \foreach \i in {0,...,8}{
    \foreach \j in {0,...,8}{
      \pgfmathsetmacro{\xx}{\rvxo+\j*\rvcs}
      \pgfmathsetmacro{\yy}{\rvyt-(\i+1)*\rvcs}
      \pgfmathtruncatemacro{\support}{mod(\i,3)==floor(\j/3)}
      \ifnum\support=0
        \path[rvcell,rvzero] (\xx,\yy) rectangle ++(\rvcs,\rvcs);
        \node[text=black!34] at ({\xx+.5*\rvcs},{\yy+.5*\rvcs}) {\(\times\)};
      \else
        \pgfmathtruncatemacro{\forb}{(\i<2)||(\j<2)}
        \pgfmathtruncatemacro{\gzero}{(floor(\i/3)==2)&&(\j>=6)}
        \pgfmathtruncatemacro{\dzero}{
          ((floor(\i/3)==2)&&(\j>=3)&&(\j<6))||
          ((floor(\i/3)==1)&&(\j>=6))}
        \ifnum\forb=1
          \path[rvcell,rvforbid] (\xx,\yy) rectangle ++(\rvcs,\rvcs);
        \else\ifnum\gzero=1
          \path[rvcell,rvbase] (\xx,\yy) rectangle ++(\rvcs,\rvcs);
        \else\ifnum\dzero=1
          \path[rvcell,rvnext] (\xx,\yy) rectangle ++(\rvcs,\rvcs);
        \else
          \path[rvcell,rvopen] (\xx,\yy) rectangle ++(\rvcs,\rvcs);
        \fi\fi\fi
      \fi
    }
  }
  \draw[rvblock] (\rvxo,\rvyt-9*\rvcs) rectangle ++(9*\rvcs,9*\rvcs);
  \foreach \k in {3,6}{
    \draw[rvblock]
      ({\rvxo+\k*\rvcs},\rvyt-9*\rvcs) -- ({\rvxo+\k*\rvcs},\rvyt);
    \draw[rvblock]
      (\rvxo,{\rvyt-\k*\rvcs}) -- ({\rvxo+9*\rvcs},{\rvyt-\k*\rvcs});
  }
  \node at (3.42,-.06) {\(G_0=\{22t:t\in\X\}\)};
  \node at (3.42,-.39) {\(D_0=\{12t,21t:t\in\X\}\)};
\end{scope}

\begin{scope}[xshift=7.05cm]
  \path (0,-.72) rectangle (6.82,5.96);
  \node[font=\small] at (3.42,5.86)
    {\(\mathrm{(b)}\quad \mu_1=1,\quad s=cg=3\)};
  \node at (3.42,5.50) {\(F_1=\{00\}\)};
  \node at (3.42,5.14) {\(j=\suff_2(x)\)};
  \node[rotate=90] at (.36,2.31) {\(i=\pref_2(x)\)};

  \foreach \lab [count=\j from 0] in
    {00,01,02,10,11,12,20,21,22}{
    \pgfmathsetmacro{\xx}{\rvxo+(\j+.5)*\rvcs}
    \node[rotate=55,anchor=west] at (\xx,4.50) {\lab};
  }
  \foreach \lab [count=\i from 0] in
    {00,01,02,10,11,12,20,21,22}{
    \pgfmathsetmacro{\yy}{\rvyt-(\i+.5)*\rvcs}
    \node[anchor=east] at (1.27,\yy) {\lab};
  }

  \foreach \i in {0,...,8}{
    \foreach \j in {0,...,8}{
      \pgfmathsetmacro{\xx}{\rvxo+\j*\rvcs}
      \pgfmathsetmacro{\yy}{\rvyt-(\i+1)*\rvcs}
      \pgfmathtruncatemacro{\support}{mod(\i,3)==floor(\j/3)}
      \ifnum\support=0
        \path[rvcell,rvzero] (\xx,\yy) rectangle ++(\rvcs,\rvcs);
        \node[text=black!34] at ({\xx+.5*\rvcs},{\yy+.5*\rvcs}) {\(\times\)};
      \else
        \pgfmathtruncatemacro{\forb}{(\i<1)||(\j<1)}
        \pgfmathtruncatemacro{\gzero}{
          ((floor(\i/3)==2)&&(\j>=3))||
          ((floor(\i/3)==1)&&(\j>=6))}
        \pgfmathtruncatemacro{\dzero}{
          ((floor(\i/3)==1)&&(\j>=3)&&(\j<6))||
          ((floor(\i/3)==0)&&(\j>=6))}
        \ifnum\forb=1
          \path[rvcell,rvforbid] (\xx,\yy) rectangle ++(\rvcs,\rvcs);
        \else\ifnum\gzero=1
          \path[rvcell,rvbase] (\xx,\yy) rectangle ++(\rvcs,\rvcs);
        \else\ifnum\dzero=1
          \path[rvcell,rvnext] (\xx,\yy) rectangle ++(\rvcs,\rvcs);
        \else
          \path[rvcell,rvopen] (\xx,\yy) rectangle ++(\rvcs,\rvcs);
        \fi\fi\fi
      \fi
    }
  }
  \draw[rvblock] (\rvxo,\rvyt-9*\rvcs) rectangle ++(9*\rvcs,9*\rvcs);
  \foreach \k in {3,6}{
    \draw[rvblock]
      ({\rvxo+\k*\rvcs},\rvyt-9*\rvcs) -- ({\rvxo+\k*\rvcs},\rvyt);
    \draw[rvblock]
      (\rvxo,{\rvyt-\k*\rvcs}) -- ({\rvxo+9*\rvcs},{\rvyt-\k*\rvcs});
  }
  \node at (3.42,-.06) {\(G_0=\{12t,21t,22t:t\in\X\}\)};
  \node at (3.42,-.39) {\(D_0=\{02t,11t:t\in\X\}\)};
\end{scope}

\begin{scope}[xshift=.62cm,yshift=-1.10cm]
  \path[rvcell,rvforbid] (0,0) rectangle ++(.34,.34);
  \node[anchor=west] at (.45,.17) {\(\X^3\setminus\Omega_2\)};
  \path[rvcell,rvbase] (3.05,0) rectangle ++(.34,.34);
  \node[anchor=west] at (3.50,.17) {\(G_0\)};
  \path[rvcell,rvnext] (4.70,0) rectangle ++(.34,.34);
  \node[anchor=west] at (5.15,.17) {\(D_0=G_1\setminus G_0\)};
  \path[rvcell,rvopen] (8.25,0) rectangle ++(.34,.34);
  \node[anchor=west] at (8.70,.17)
    {\(\Omega_2\setminus(G_0\cup D_0)\)};
  \path[rvcell,rvzero] (12.25,0) rectangle ++(.34,.34);
  \node at (12.42,.17) {\(\times\)};
  \node[anchor=west] at (12.76,.17) {\(M_3(i,j)=0\)};
\end{scope}
\end{tikzpicture}}
\caption{Ternary phase modes for \((\lambda_1,\lambda_2,\lambda_3)=(2,3,4)\),
\(\alpha=1,\beta=2,H=g=3,Q=d=9\). Cell \((i_1i_2,i_2j_2)\)
represents \(i_1i_2j_2\); crosses are ambient zeros. Gray denotes words
forbidden by \(F_1\), diagonal hatching \(G_0\), and crosshatching
\(D_0=G_1\setminus G_0\). Panels (a), (b) show \(s>cg\), \(s\leq cg\).}
\label{fig:regv-phase-geometry}
\end{figure*}
\fi

\begin{proposition}[Common subcritical endpoint]
\label{prop:regv-common-endpoint}
Every endpoint \(G_p\) in \eqref{eq:regv-standard-endpoints} satisfies
\begin{equation}
  \Phi(G_p)\leq q^{\lambda_3}/4.
  \label{eq:regv-common-endpoint-bound}
\end{equation}
\end{proposition}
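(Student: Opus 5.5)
The plan is to apply the exact phase-slack identity, Lemma~\ref{lem:regv-slack-identity}, with \(F=G_p\). Since \(G_p\) is a standard endpoint, \(\Xi(G_p)=0\). The square term is nonnegative. It therefore suffices to prove
\begin{equation*}
 (e_1+v_{G_p})^{\mathsf T}u_{G_p}\;\geq\;\frac{N(m_1+N)}{d},\qquad N=N_{G_p}.
\end{equation*}
This is a rearrangement-type inequality: the column-count vector \(e_1+v_{G_p}\) should be positively correlated with the row-count vector \(u_{G_p}\). Its mean is \((m_1+N)/d\), so the claim is a Chebyshev-type lower bound \(x^{\mathsf T}u\geq(\e^{\mathsf T}x)(\e^{\mathsf T}u)/d\), exactly as in the \(m_1=0\) case treated just above.

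First I will write both profiles explicitly. The column vector is
\begin{equation*}
 v_{G_p}=p\mathbf1_{\mathcal K}+\mathbf1_{\mathcal L}+\mathbf1_{\mathcal B},
\end{equation*}
so \(e_1+v_{G_p}\) is constant on each column \(g\)-group. It vanishes on the first \(c\) groups of each \(Q\)-period, apart from the head \([0,m_1)\). The definitions \eqref{eq:regv-UV}--\eqref{eq:regv-LB} of \((U,V)\) are designed so that \(e_1+\mathbf1_{\mathcal L}\) and \(\mathbf1_{\mathcal B}\) together behave like complete coverage of \(\mathcal K\), with the head \(e_1\) filling exactly the groups of \(\mathcal K\cap[0,U)\). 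More precisely, \(\mathbf1_{[0,m_1)}+\mathbf1_{\mathcal K\cap[U,d)}\) equals \(\mathbf1_{\mathcal K}\) plus the head mass lying in the forbidden groups \([bQ,bQ+cg)\), corrected by the shift to \(V\) in the second case.

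On the row side, layer \(y\) in the sparse form maps a column \(z\) to row \(y\,\pref_{\beta-\alpha}(z)\), which lies in the row interval \([yR,(y+1)R)\). Hence \(u_{G_p}\) is supported on the top rows \([(H-2-p)R,d)\). On each full \(\mathcal K\)-layer it has equal mass \(h\) in successive row \(g\)-blocks, mapped periodically from the source \(Q\)-blocks. The two boundary layers contribute less mass and sit at the lowest indices of that support.

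The key step is a pairing argument. I will show that \(\mathbf1_{\mathcal K}^{\mathsf T}u\geq\rho N\), as in the \(m_1=0\) case, using that \(\mathcal K\) retains the last \(h\) groups of each period while \(u\) is nondecreasing within \(Q\)-periods in block mass. I will then show that the extra head and boundary terms \((e_1+\mathbf1_{\mathcal L}+\mathbf1_{\mathcal B})^{\mathsf T}u\) are at least their mean share \((m_1+|\mathcal L|+|\mathcal B|)N/d\). Concretely, I will apply Abel summation over \(g\)-groups, as in \eqref{eq:reverse-abel}. This needs the column vector to be "right-loaded" in the sense of \eqref{eq:right-loaded} relative to the block structure, and the row profile to be nondecreasing in block mass. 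The subcritical hypothesis \(\rho\geq1/2\) should enter here. It guarantees that the deficit of \(u\) on the \(c\) forbidden groups of each period, which is at most the share \(x\leq1/2\), is compensated by the surplus \(p\geq\) boundary multiplicity on \(\mathcal K\).

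The main obstacle is the boundary bookkeeping. The head \(e_1\) lies in the low columns \([0,m_1)\), which are not right-loaded. The boundary layers \(\mathcal L_{H-1-p}\) and \(\mathcal B_{H-2-p}\) put lighter row mass at the bottom of the support. I expect to handle this by splitting into the two cases \(s>cg\) and \(s\leq cg\) of \eqref{eq:regv-UV}. In each case I will check directly that \(u\) restricted to the \(Q\)-blocks meeting \([0,m_1)\) carries at least the average row mass per column, because those rows come from the full layers \(\mathcal K_{H-1-r}\). I will then absorb the remaining discrepancy into the nonnegative square term \((P+N-Hd/2)^2/d\) if needed. If the direct comparison fails for \(p=0\), where the only layers present are \(G_0=\mathcal L_{H-1}\dot\cup\mathcal B_{H-2}\), I will verify that case separately by explicit computation.
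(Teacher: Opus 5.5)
Your reduction matches the paper's. With \(\Xi(G_p)=0\) and the square term dropped, everything rests on \((e_1+v_{G_p})^{\mathsf T}u_{G_p}\ge N(N+m_1)/d\). The gap is in how you propose to prove this inequality.

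\textbf{The Abel summation and the head argument both fail.} The row profile is not nondecreasing in \(g\)-block mass, so there is no Abel summation of the type \eqref{eq:reverse-abel} to run. Since \(\mathcal L\subseteq[U,d)\), the first \(\lfloor U/H\rfloor\) rows of the \(\mathcal L\)-layer are empty. This leaves a hole between the row images of \(\mathcal B\) and \(\mathcal L\).

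For example, take \(q=2\), \((\lambda_1,\lambda_2,\lambda_3)=(3,7,9)\), \(\mu_1=1\). Then \(H=4\), \(g=2\), \(Q=8\), \(d=32\), \(m_1=16\), and \(\mathcal L=\mathcal B=\mathcal K\cap[18,32)\). The \(g\)-block masses of \(u_{G_0}\) on rows \(16,\ldots,31\) are \(0,0,6,6,0,0,6,6\).

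Your treatment of the head points the wrong way. In the same example \(u_{G_p}([0,m_1))\) is \(0\) for \(p=0\), and \(24<36=m_1N/d\) for \(p=2\). So the rows under \(e_1\) carry less than average mass, and they do not come from complete layers.

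\textbf{The actual mechanism is the opposite one.} Put \(\tau=(b+1)Q\) if \(s>cg\) and \(\tau=bQ\) if \(s\le cg\). The weight \(e_1+\mathbf 1_{\mathcal L}+\mathbf 1_{\mathcal B}\) equals \(1\) on \([0,\tau)\) and \(2\mathbf 1_{\mathcal K}\) on \([\tau,d)\). The one exception is that for \(s\le cg\) the block \([bQ,(b+1)Q)\) carries the local weight \(\chi\).

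The deficit on the head is paid by the doubled \(\mathcal K\)-weight on the right, which has density \(2\rho\ge1\). What is needed is therefore a prefix upper bound, \(u([0,\tau))\le N\tau/d\). The paper gets it from two-level monotonicity: the layer masses satisfy \(0\le|\mathcal B|\le|\mathcal L|\le\rho d\), and the \(Q\)-block masses are nondecreasing within each layer. It then combines this with \(1-2\rho\le0\). That is where subcriticality enters, not through a surplus from \(p\); for \(p=0\) there is no such surplus.

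\textbf{The row-count compensation is not addressed.} You also need the per-block surplus \(S_t=u(J_t\cap\mathcal K)-\rho\,u(J_t)\ge0\) on the row blocks \(J_t=[tQ,(t+1)Q)\). This is immediate only when \(d\ge HQ\). When \(d<HQ\), a row \(Q\)-block contains several layers. If it contains both boundary layers, its profile is not a right tail, and explicit counts are required.

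For \(s\le cg\), the defect \(L_b\) of the boundary block is controlled in Appendix~\ref{app:regv-phase-counts} only jointly with the complete-layer surplus \(pS_b\) or \(2S_b\). That control comes from a case analysis on \(\rho\) that ends in a binary two-layer count. Your split asks the boundary terms to meet their mean share on their own. It therefore gives up this compensation and would need a separate argument.

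Your fallbacks do not close the gap either. The square term vanishes when \(c=H/2\) and \(p=0\), because there \(P+N=Hd/2\). A separate check of \(p=0\) leaves every \(p\ge1\) untouched. This row-count compensation is the substance of the proof, and your proposal leaves it unaddressed.
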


\begin{proof}
Let \(u_p,v_p,N_p\) denote the profiles and cardinality of \(G_p\), and
put
\begin{equation}
 \tau=
 \begin{cases}
 (b+1)Q,&s>cg,\\ bQ,&s\leq cg.
 \end{cases}
 \label{eq:regv-tau}
\end{equation}
Split the row mass as
\begin{equation}
 N_1=u_p([0,\tau)),\qquad N_2=u_p([\tau,d)),
 \qquad N_p=N_1+N_2.
 \label{eq:regv-N12}
\end{equation}
The two boundary layers give the exact count
\begin{equation}
 \frac{N_p+m_1}{d}
 =p\rho+\frac{\tau}{d}+2\rho\frac{d-\tau}{d}.
 \label{eq:regv-endpoint-count}
\end{equation}
The row images of \(\mathcal B\subseteq\mathcal L\subseteq\mathcal K\)
form a two-dimensional tail array.  Its monotonicity gives the prefix
bound
\begin{equation}
 \frac{N_1}{N_p}\leq\frac{\tau}{d}.
 \label{eq:regv-macro-prefix}
\end{equation}
The row-count compensation in
Appendix~\ref{app:regv-phase-counts} gives
\begin{equation}
 (e_1+v_p)^{\mathsf T}u_p
 \geq p\rho N_p+N_1+2\rho N_2.
 \label{eq:regv-inner-common}
\end{equation}
Since \(1-2\rho\leq0\), equations
\eqref{eq:regv-endpoint-count}--\eqref{eq:regv-inner-common} imply
\begin{align}
 \frac{(e_1+v_p)^{\mathsf T}u_p}{N_p}
 &\geq p\rho+2\rho+(1-2\rho)\frac{N_1}{N_p}\nonumber\\
 &\geq p\rho+2\rho+(1-2\rho)\frac{\tau}{d}
 =\frac{N_p+m_1}{d}.
 \label{eq:regv-inner-target}
\end{align}
Lemma~\ref{lem:regv-slack-identity} gives \(\Xi(G_p)=0\); retaining its
nonnegative square term proves \eqref{eq:regv-common-endpoint-bound}.
\end{proof}

\paragraph*{Standard promotion}

For adjacent standard endpoints, direct subtraction in
\eqref{eq:regv-standard-endpoints} gives
\begin{equation}
 D_p=G_{p+1}\setminus G_p
 = (\mathcal K\setminus\mathcal L)_{H-1-p}
 \mathbin{\dot\cup}(\mathcal L\setminus\mathcal B)_{H-2-p}
 \mathbin{\dot\cup}\mathcal B_{H-3-p}.
 \label{eq:regv-promotion}
\end{equation}
The three column projections are pairwise disjoint and partition
\(\mathcal K\); hence \(D_p\) satisfies column-side uniqueness.

\begin{proposition}[Standard subcritical promotion]
\label{prop:regv-subcritical-promotion}
Every standard promotion $G_p\subseteq G_{p+1}$ covers all
intermediate cardinalities.
\end{proposition}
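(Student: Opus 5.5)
We apply Theorem~\ref{thm:matrix-interpolation} with \(F^-=G_p\), \(F^+=G_{p+1}\) and \(D=D_p\) as in \eqref{eq:regv-promotion}. Both endpoints are feasible by Proposition~\ref{prop:regv-common-endpoint}; since \(\Phi\) is integral, \(\Phi(G_p),\Phi(G_{p+1})\leq\mathscr H_3\). The column projections of the three pieces of \(D_p\) partition \(\mathcal K\), so every column index occurs at most once in \(D_p\). This gives \eqref{eq:column-unique} for all of \(D\), not only on \(I\cap J\). It remains to verify one of the two alternatives of Theorem~\ref{thm:matrix-interpolation}. The plan is to verify acyclicity of \(\mathcal G(D_p)\), and to fall back on integral slack only if acyclicity fails.

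\textbf{Acyclicity.} In the sparse form, a word \(yz\) in physical layer \(y\) has row index \(i=y\,\pref_{\beta-\alpha}(z)\) and column index \(j=z\), as in \eqref{eq:sparse-column-layer}. An arc \(x\to x'\) means \(j(x)=i(x')\): the column \(z\) of \(x\) equals \(y'\pref_{\beta-\alpha}(z')\), so \(\pref_\alpha(z)=y'\) is the layer of \(x'\). The three layers of \(D_p\) are \(H-1-p\), \(H-2-p\), \(H-3-p\). An arc from a word with column \(z\) can only enter the layer \(\lfloor z/R\rfloor\) (numerically), and it can only do so if that layer is one of these three. We will track the numerical column ranges:
\begin{itemize}
\item the columns of \((\mathcal K\setminus\mathcal L)_{H-1-p}\) lie in \(\mathcal K\cap[0,U)\);
\item those of \((\mathcal L\setminus\mathcal B)_{H-2-p}\) lie in \(\mathcal K\cap[U,V)\);
\item those of \(\mathcal B_{H-3-p}\) lie in \(\mathcal K\cap[V,d)\).
\end{itemize}
The goal is to show that arcs between distinct layers can only go strictly downward in some potential, for instance the column value compared with the layer index. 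Within a single layer, Lemma~\ref{lem:column-layer-interpolation} already gives acyclicity, via the nondecreasing row map \(\mathcal T_y\) with no nontrivial periodic orbit. So any cycle must use at least two layers, and hence arcs between layers. We will show that an arc from layer \(y\) into layer \(y'\neq y\) forces a strict inequality. A natural candidate is \(y'\) at least the top layer \(H-1-p\) for arcs out of the lowest pieces, together with columns in the top piece being numerically smallest. This should exclude returns and make the layer-crossing arcs acyclic.

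\textbf{Main obstacle and fallback.} The hard step is that the columns \([0,U)\) of the top piece may map (via \(\pref_\alpha\)) to low layers such as \(H-3-p\) when \(p\) is large, which could close a three-layer cycle. The first thing to try is to use the subcritical hypothesis \(c\leq H/2\) together with \(p\leq h-1\), which makes the three layers at least \(c-1\). We would also use the fact that \(\mathcal K\) excludes the first \(cg\) columns of each \(Q\)-period, to show that the layer \(\pref_\alpha(z)\) of any column in \([0,U)\) is below \(c-1\), so no arc leaves the top piece into \(D_p\). If that fails in some boundary case (e.g.\ \(s\leq cg\) with \(b\) large), we instead prove strict slack \(\Phi(G_{p+1})\leq\mathscr H_3-1\). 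For this we use the identity of Lemma~\ref{lem:regv-slack-identity} with \(\Xi(G_{p+1})=0\), and show that inequality \eqref{eq:regv-inner-target} or the square term \((P+N-Hd/2)^2/d\) is strict by at least one unit. Either route then yields every intermediate cardinality by Theorem~\ref{thm:matrix-interpolation}.
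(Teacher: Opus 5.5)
Your setup is correct: both endpoints are feasible by Proposition~\ref{prop:regv-common-endpoint}, and the column projections of \(D_p\) partition \(\mathcal K\), which gives \eqref{eq:column-unique}. But neither alternative of Theorem~\ref{thm:matrix-interpolation} is actually established, and your primary route, acyclicity, is false in general.

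Take \(q=2\), \((\lambda_1,\lambda_2,\lambda_3)=(3,6,8)\), \(\mu_1=1\). This is Regime V with \(H=4\), \(g=2\), \(Q=8\), \(d=16\), \(R=4\), \(c=1\), \(h=3\), \(m_1=8\) and \(U=V=10\). For \(p=0\) one gets \(D_0=(\mathcal K\cap[0,10))_3\mathbin{\dot\cup}\mathcal B_1\). This contains the words \(110111\) and \(011101\), both admissible for \(F_1=\{000\}\). They satisfy \(\suff_4(110111)=0111=\pref_4(011101)\) and \(\suff_4(011101)=1101=\pref_4(110111)\), so \(\mathcal G(D_0)\) has a directed 2-cycle. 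Your proposed key fact fails here: column \(7<U\) of the top piece lies in layer \(\lfloor 7/R\rfloor=1=H-3-p\), not below \(c-1=0\). In general \(m_1=\nu d/g\) is unrelated to \(c-1\), so arcs from the top piece can enter the lower pieces and close cycles.

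The integral-slack alternative is therefore not a fallback for boundary cases but the necessary route, and it is the one the paper takes. Appendix~\ref{app:standard-integral-slack} proves \(\Phi(G_p)\le\lfloor H^2d/4\rfloor-1\) for \(1\le p\le h-1\). Your proposal only names this step ("show that \ldots is strict by at least one unit") without supplying it, yet it is the whole substance of the proof. The argument there runs as follows.
\begin{enumerate}
\item Write \(H^2d/4-\Phi=X^2/d+\Gamma\) with \(X=P+N-Hd/2\).
\item Dispose of \(c=H/2\) via \(X^2/d\ge d/4\).
\item For \(c<H/2\) with an internal cutoff \(0<\tau<d\), the quantitative tail inequality \eqref{eq:quantitative-tail} for nondecreasing layer masses gives \(\Gamma\ge a\rho g\ge1\). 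When \(p+2=H\), the square term is used instead.
\item The boundary cutoffs \(\tau=d\) and \(\tau=0\) are handled by explicit periodic row-block counts.
\end{enumerate}
The estimate \eqref{eq:regv-inner-target} by itself gives only real slack \(\ge0\). A full unit comes from integer facts such as \(hg-L_*\) being a positive integer, or \(g\ge2\) (resp.\ \(3\)). This matters especially for odd \(q\), where \(q^{\lambda_3}/4\) is not an integer and a strict real inequality does not imply \eqref{eq:one-unit-slack}.
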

\begin{proof}
The difference has column-side uniqueness.
Appendix~\ref{app:standard-integral-slack} supplies one unit of
integral slack at \(G_{p+1}\).
Apply Theorem~\ref{thm:matrix-interpolation}.
\end{proof}

\subsubsection{Subcritical entry bridges}

The common certificate proves that $G_0$ is feasible; an entry path
must still reach it from the empty set. The position of the initial
support relative to $e_1$ determines this path. The lower-phase branch is
\begin{align}
 m_1&\leq \mu_1d/Q,\nonumber\\
 \frac{m_1}{d}&\leq\frac{c-1}{H-1}\qquad(m_1>0),
 \label{eq:regv-lower-phase}
\end{align}
where the second inequality is equivalent to the first whenever \(m_1>0\).

\begin{proposition}[Lower-phase entry bridge]
\label{prop:regv-lower-bridge}
Under \eqref{eq:regv-lower-phase}, every cardinality between the empty
set and \(G_0\) is feasible.
\end{proposition}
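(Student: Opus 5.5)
The plan is to apply Theorem~\ref{thm:matrix-interpolation} with \(F^-=\varnothing\) and \(F^+=G_0=\mathcal L_{H-1}\mathbin{\dot\cup}\mathcal B_{H-2}\). Both endpoints are already feasible: \(\Phi(\varnothing)\leq q^{\lambda_3}/4\) by \eqref{eq:initial-endpoint}, and \(\Phi(G_0)\leq q^{\lambda_3}/4\) by Proposition~\ref{prop:regv-common-endpoint} with \(p=0\). Since \(\Phi\) is integral, both values are in fact at most \(\mathscr H_3\). What remains is to check, for the difference \(D=G_0\), the uniqueness condition and one of the two alternatives (acyclicity or integral slack). If this fails, the fallback is to split the path at \(\mathcal L_{H-1}\) and handle each layer separately.

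\emph{Uniqueness.} The columns used by \(\mathcal L_{H-1}\) lie in \(\mathcal L\), and those used by \(\mathcal B_{H-2}\) lie in \(\mathcal B\subseteq\mathcal L\), so the two layers can share column indices. Condition \eqref{eq:column-unique} only constrains columns in \(I\cap J\). The rows of layer \(y\) are \(y\pref_s(z)\) by \eqref{eq:sparse-column-layer}, so the rows of both layers lie in the top interval \([(H-2)R,d)\). I will show that \(I\cap J\) consists only of columns that appear in a single layer. The point is that columns at or above \(V\) but inside the top row range can be checked directly, and the lower-phase inequality \(m_1/d\leq(c-1)/(H-1)\) forces \(U,V\) to be small relative to \((H-2)R\) once \(m_1>0\). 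If this overlap turns out to be nonempty, I will switch to \eqref{eq:row-unique}: within a single layer, the row map \(z\mapsto y\pref_s(z)\) is at most \(q^{\alpha}\)-to-one, but restricted to \(\mathcal K\)-groups it is injective on complete \(g\)-groups modulo phase, which would give row uniqueness on \(I\cap J\).

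\emph{Acyclicity.} Inside one layer, Lemma~\ref{lem:column-layer-interpolation} already gives acyclicity, because the row map is monotone. Arcs between the two layers go from \(x\) to \(y\) with \(j(x)=i(y)\). Every row index of \(G_0\) has first \(\alpha\) symbols equal to \(H-1\) or \(H-2\), so \(i(y)\geq (H-2)R\), and any arc must therefore end at a column in this high range. I will argue that a directed cycle would have to alternate through columns \(\geq (H-2)R\) while staying inside the monotone row maps of the two layers. Using the numerical order, every arc strictly decreases \(i\) except at fixed points, and fixed points are self-interactions, which Theorem~\ref{thm:matrix-interpolation} already treats separately through the one-word increment. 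Hence \(\mathcal G(G_0)\) is acyclic.

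The main obstacle is the uniqueness step, because column indices are genuinely shared between \(\mathcal L_{H-1}\) and \(\mathcal B_{H-2}\). If neither \eqref{eq:column-unique} nor \eqref{eq:row-unique} holds, I will fall back to the intermediate endpoint \(\mathcal L_{H-1}\). To show it is feasible, I will use Lemma~\ref{lem:regv-slack-identity}: \(\mathcal L_{H-1}\) is phase-aligned, so \(\Xi\geq0\), and under \eqref{eq:regv-lower-phase} the term \((e_1+v)^{\mathsf T}u\) dominates \(N(m_1+N)/d\), by the same tail-array monotonicity used in \eqref{eq:regv-macro-prefix}. With this endpoint in hand, I will interpolate \(\varnothing\to\mathcal L_{H-1}\) and \(\mathcal L_{H-1}\to G_0\) by Lemma~\ref{lem:column-layer-interpolation}, since each piece lies in a single layer.
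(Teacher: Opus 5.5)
Your primary route, one application of Theorem~\ref{thm:matrix-interpolation} to \(D=G_0\), fails. Take \(q=2\), \((\lambda_1,\lambda_2,\lambda_3)=(4,6,8)\) and \(\mu_1=5\). Then \(H=g=4\), \(d=Q=16\), \(c=2\), \(m_1=4\leq\mu_1d/Q\), \(\mathcal L=[8,16)\) and \(\mathcal B=[12,16)\).
\begin{itemize}
\item \emph{Acyclicity fails.} The words \(111011\in\mathcal L_3\) and \(101110\in\mathcal B_2\) have (row, column) pairs \((14,11)\) and \((11,14)\), so they form a directed \(2\)-cycle in \(\mathcal G(G_0)\). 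Your claim that arcs decrease \(i\) is false in layer \(H-2\), where \(j(x)>i(x)\) for large columns (here \(14>11\)).
\item \emph{Uniqueness fails.} We have \(14\in I\cap J\), column \(14\) occurs in both layers, and row \(14\) occurs four times. So neither \eqref{eq:column-unique} nor \eqref{eq:row-unique} holds.
\end{itemize}
Everything therefore rests on your fallback \(\varnothing\to\mathcal L_{H-1}\to G_0\). This is the paper's chain. The two single-layer interpolations and the certificate for \(G_0\) from Proposition~\ref{prop:regv-common-endpoint} are fine.

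The gap is the feasibility of \(\mathcal L_{H-1}\), which is the real content of the proposition. You assert that under \eqref{eq:regv-lower-phase} the inner term \((e_1+v)^{\mathsf T}u\) alone dominates \(N(m_1+N)/d\).

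That is true when \(R<Q\). There all row images lie in \(\mathcal L\) above \(m_1\), so the inner term equals \(N\), and \(m_1+N\leq d\).

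It is false when \(Q\leq R\). Each source \(Q\)-block then maps onto one row \(g\)-block. The periodic tail count \(\Theta_H(n)\geq\rho n\) of \eqref{eq:regv-periodic-tail} gives only about \(\rho N\), while \(N(m_1+N)/d\) exceeds \(\rho N\). A concrete instance: \(q=2\), \((\lambda_1,\lambda_2,\lambda_3)=(4,10,12)\), \(\mu_1=5\). Then \(H=g=4\), \(Q=16\), \(d=256\), \(m_1=64\), which is lower phase. One gets \(N=96\) and inner term \(48\), but \(N(m_1+N)/d=60\).

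Feasibility survives only because the square term \((P+N-Hd/2)^2/d=36\) in \eqref{eq:regv-slack-identity} absorbs the deficit. This is where the lower-phase inequality is actually used. Consider the family with \(c=H/2\), \(m_1\) a multiple of \(Q\), and \(H\) dividing the number of selected source blocks, and put \(\beta'=m_1/d\).
\begin{itemize}
\item The slack equals \(\tfrac{gMH}{4}(1-\beta')(1-2\beta')\).
\item The inner part contributes the negative amount \(-\tfrac{gMH}{4}\beta'(1-\beta')\).
\item So the slack is nonnegative exactly when \(m_1\leq d/2\), which \eqref{eq:regv-lower-phase} guarantees.
\end{itemize}
You must retain the square term and combine it with \(\Theta_H\). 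The two-layer tail-array monotonicity behind \eqref{eq:regv-macro-prefix} does not give this single-layer estimate.
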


\begin{proof}
Use
\begin{equation}
 \varnothing\longrightarrow\mathcal L_{H-1}
 \longrightarrow G_0.
 \label{eq:regv-lower-bridge-chain}
\end{equation}
At the single-layer endpoint, \(R<Q\) puts every row image in \(\mathcal L\),
so \(v^{\mathsf T}u=N\) and \(m_1+N\leq d\). For \(Q\leq R\), the periodic tail count
\begin{equation}
 \Theta_H(n)=(H-c)\left\lfloor\frac nH\right\rfloor
 +\min\{H-c,n\bmod H\}\geq\rho n
 \label{eq:regv-periodic-tail}
\end{equation}
controls the unique truncated fiber. Substitution in
\eqref{eq:regv-slack-identity}, retaining its square and phase slack,
proves feasibility. Proposition~\ref{prop:regv-common-endpoint}
certifies \(G_0\); both differences in
\eqref{eq:regv-lower-bridge-chain} are single-layer and interpolate by
Lemma~\ref{lem:column-layer-interpolation}.
\end{proof}

The upper-phase branch is
\begin{equation}
 m_1>\mu_1d/Q;\qquad\text{in particular, }m_1>0\text{ and }\ell=c-1.
 \label{eq:regv-upper-phase}
\end{equation}
The preceding two-step bridge applies outside the balanced case
\(x=1/2\), \(H\mid M\). Appendix~\ref{app:regv-binary-bridge}
treats \(H=2\); the remaining balanced branch is
\begin{equation}
 x=1/2,\qquad H\geq4,\qquad Q=Hg,\qquad H\mid M.
 \label{eq:regv-balanced-long}
\end{equation}

\begin{proposition}[Upper-phase entry]
\label{prop:regv-upper-bridge}
Under \eqref{eq:regv-upper-phase}, a nested feasible chain reaches
$G_0$, or $F^{(1)}$ in the binary boundary case
\eqref{eq:binary-direct-entry}.
\end{proposition}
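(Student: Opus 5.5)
I plan to split the proof according to the remark before the statement: the generic upper-phase case, the binary case \(H=2\), and the balanced long branch \eqref{eq:regv-balanced-long}.

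\emph{Generic case.} Outside the balanced case (\(x\neq1/2\) or \(H\nmid M\)) I reuse the two-step chain \eqref{eq:regv-lower-bridge-chain}, \(\varnothing\to\mathcal L_{H-1}\to G_0\). Both differences are single-layer, so Lemma~\ref{lem:column-layer-interpolation} supplies every intermediate cardinality once both endpoints are feasible. The endpoint \(G_0\) is certified by Proposition~\ref{prop:regv-common-endpoint}, so only the middle endpoint \(\mathcal L_{H-1}\) needs work.

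For that endpoint I write down the slack identity \eqref{eq:regv-slack-identity} with \(F=\mathcal L_{H-1}\), so that \(N=|\mathcal L|\), and drop nothing. Since \(\mathcal L\subseteq\mathcal K\cap[m_1,d)\), the term \(e_1^{\mathsf T}u_F\) counts row images landing in the head \([0,m_1)\). The upper-phase hypothesis \(m_1>\mu_1d/Q\) makes the head wider than the lower-phase branch allowed, so this term is now an asset rather than a liability. Using the periodic tail count \eqref{eq:regv-periodic-tail} for \(v_F^{\mathsf T}u_F\), I would show
\[
 (e_1+v_F)^{\mathsf T}u_F+\frac{(P+N-Hd/2)^2}{d}\ \geq\ \frac{N(m_1+N)}{d}.
\]
The square term is kept because \(P+N\) differs from \(Hd/2\) by a definite amount unless \(x=1/2\). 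When \(H\nmid M\), the truncated fiber gives a strict gain of at least \(\rho\) per incomplete period. This is exactly where the balanced exclusion enters: both gains vanish only when \(x=1/2\) and \(H\mid M\).

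\emph{Binary case.} For \(H=2\) I would invoke Appendix~\ref{app:regv-binary-bridge}. In the boundary case \eqref{eq:binary-direct-entry} the chain goes straight to \(F^{(1)}\), bypassing \(G_0\). The expected route is to take complete \(g\)-groups of \(\mathcal K\) in reverse column order from both layers. At group endpoints, Chebyshev's inequality makes the covariance criterion Proposition~\ref{prop:covariance} apply. Intermediate cardinalities then follow from column uniqueness and acyclicity within each layer.

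\emph{Balanced long branch.} Here \(x=1/2\), \(H\ge4\) and \(H\mid M\), so \(\mathcal L_{H-1}\) alone can fail by a bounded amount. I would instead insert an intermediate endpoint \(\mathcal L_{H-1}\dot\cup\mathcal B'_{H-2}\), where \(\mathcal B'\) is the first complete \(g\)-group of \(\mathcal B\) in decreasing column order. This adds row mass into the head \(e_1\) before the square term becomes tight. Its feasibility comes from the same identity with \(\Xi\geq0\) (Lemma~\ref{lem:regv-slack-identity}, since the endpoint is phase-aligned). The step \(\mathcal L_{H-1}\to\mathcal L_{H-1}\dot\cup\mathcal B'_{H-2}\) has disjoint column projections, hence \eqref{eq:column-unique}. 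I would verify that the arcs run only from layer \(H-2\) into layer \(H-1\), giving an acyclic cross-interaction digraph, and then apply Theorem~\ref{thm:matrix-interpolation}. The remaining step to \(G_0\) is single-layer.

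\emph{Main obstacle.} I expect the hardest step to be the lower bound on \(e_1^{\mathsf T}u_F\) in the balanced branch. It requires counting exactly how the \(R=d/H\) row images of each source \(Q\)-block fall relative to \([0,m_1)\) when \(H\mid M\). My first attempt would be the two-dimensional tail-array monotonicity already used for \eqref{eq:regv-macro-prefix}.
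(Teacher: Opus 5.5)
Your generic-case skeleton ($\varnothing\to\mathcal L_{H-1}\to G_0$, with $G_0$ certified by Proposition~\ref{prop:regv-common-endpoint} and single-layer interpolation) matches the paper. The balanced branch \eqref{eq:regv-balanced-long}, however, has a genuine gap.

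\textbf{The balanced branch.} Your chain there still passes through $\mathcal L_{H-1}$, which you concede can be infeasible, and the failure is not bounded. With $x=1/2$ and $\ell=c-1$ one has $P-Hd/2=-A$, where $A=d-m_1$. So \eqref{eq:regv-slack-identity} reads
\[
 \frac{q^{\lambda_3}}4-\Phi(F)=I-N+\frac{A(A-N)}d+\Xi(F),
 \qquad I=(e_1+v_F)^{\mathsf T}u_F .
\]
For the whole lower layer, $N\approx A/2$ and $I\approx N/2$: its row images avoid the head, and only about half of them meet $\mathcal L$. The slack is therefore about $A(2A-d)/(4d)$, which is negative and of order $d$ whenever $A<d/2$. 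Your extra group $\mathcal B'$ improves the slack by less than $Q/2$, independently of $M$.

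A concrete case: take $q=2$, $(\lambda_1,\lambda_2,\lambda_3)=(4,10,12)$, $\mu_1=7$. Then $H=g=4$, $Q=16$, $d=256$, $M=16$, $m_1=192$, and $\mathcal L=\mathcal K\cap[200,256)$. Direct computation gives $\Phi(\mathcal L_3)=1032$ and $\Phi(\mathcal L_3\mathbin{\dot\cup}\mathcal B'_2)=1025$ with $\mathcal B'=[252,256)$. Both exceed $q^{\lambda_3}/4=1024$. With $\mu_2=32$ and $\mu_3=1152$ (Kraft sum exactly $3/4$), your choice $F_2=\mathcal L_3$ leaves only $1144$ admissible longest words. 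For $(\lambda_2,\lambda_3)=(12,14)$ the two deficits grow to $32$ and $25$.

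The missing idea is that the lower layer must not be taken whole at the start. Lemma~\ref{lem:regv-balanced-entry} selects in layer $H-1$ only the periodic tails $T$ of \eqref{eq:balanced-periodic-tails}. Their row images fall on columns that the upper packets $P_k,W_k$ of \eqref{eq:balanced-upper-packets} will doubly support. These tails are interleaved with the upper packets block by block from the right, and the lower layer is completed only after contact with the cutoff.

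\textbf{Two smaller points.} First, in the generic case you misidentify the mechanism. Let $f(z)=d-R+\lfloor z/H\rfloor$ be the row map of layer $H-1$. Since $m_1=\nu HM$, we get $f(z)\ge d-R+m_1/H\ge m_1$ for $z\ge m_1$. Hence $e_1^{\mathsf T}u_F=0$ identically for $F=\mathcal L_{H-1}$, and the head is never an asset there. For $M<H$ (which includes $x=1/2$, where the display above applies), a half-support gain would not suffice. The paper instead uses full support: $M\le H/2$ and the cutoffs \eqref{eq:regv-UV} give $f(\mathcal L)\subseteq\mathcal L$, so $(e_1+v_F)^{\mathsf T}u_F=N$, and $m_1+N\le d$ finishes. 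Only in the case $H\mid M$, $x<1/2$ do the square term (with $P-Hd/2=-(H/2-c)d-A$) and the periodic tail count \eqref{eq:regv-periodic-tail} carry the argument.

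Second, citing Appendix~\ref{app:regv-binary-bridge} for $H=2$ is legitimate, but the route you anticipate there is unjustified. Chebyshev plus Proposition~\ref{prop:covariance} does not apply, because $a_0=e_1$ and the $g$-periodic $c_0$ are not oppositely ordered. The appendix instead certifies a single layer through \eqref{eq:binary-bridge-slack}: $\mathcal B_0$ or $\mathcal L_1$ in general, and $\mathcal L_0$ in case \eqref{eq:binary-direct-entry}. It then reaches $F^{(1)}$ by a column-unique packet whose integral slack comes from Lemma~\ref{lem:first-packet-endpoint}.
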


\begin{proof}
To reduce to \eqref{eq:regv-balanced-long}, put \(F=\mathcal L_{H-1}\) and let \(f(z)=d-R+\lfloor z/H\rfloor\)
be this layer's row map. The common power scales \(H\), \(M\) give
either \(M<H\) or \(H\mid M\).

For \(M<H\), we have \(M\leq H/2\); the cutoffs in \eqref{eq:regv-UV} give
\[
 f(\mathcal L)\subseteq\mathcal L,
 \qquad f(\mathcal L)\cap[0,m_1)=\varnothing.
\]
Thus \(N=|\mathcal L|\) satisfies \((e_1+v_F)^{\mathsf T}u_F=N\). Also \(\mathcal L\subseteq[m_1,d)\), hence \(m_1+N\leq d\).
Nonnegative phase slack and Lemma~\ref{lem:regv-slack-identity} yield
\[
 \frac{q^{\lambda_3}}4-\Phi(F)
 \geq
 \frac{(P+N-Hd/2)^2}{d}
 +\frac{N(d-m_1-N)}d\geq0.
\]

For \(H\mid M\) and \(x<1/2\), put
\[
 \begin{aligned}
 \delta&=H/2-c\geq1/2, & \sigma&=s/Q,\\
 n&=M-b, & \varepsilon&=\Theta_H(n)-\rho n\geq0,
 \end{aligned}
\]
with \(\Theta_H\) as in \eqref{eq:regv-periodic-tail}. Let \(\mathscr D(F)\) be
\eqref{eq:regv-slack-identity} without \(\Xi(F)\).
For \(\sigma\leq x\), set \(z=xn-\sigma\). Periodic tail counting gives
\[
 \frac{M}{Q}\mathscr D(F)
  \geq (\delta M+z)^2-\rho(M-n)z+\rho M(\varepsilon-\sigma)
 \geq\frac{M^2-M}{4}.
\]
since \(z\geq0\), \(2\delta M-\rho(M-n)\geq0\), and \(\rho\sigma\leq\rho x\leq1/4\). For \(\sigma>x\), put
\(t=\sigma-x\), \(z=x(n-1)\), \(\iota_H=\mathbf{1}_{\{1\leq n\bmod H\leq H-c\}}\). The unique truncated source block gives
\[
 \begin{aligned}
 \frac{M}{Q}\mathscr D(F)
  \geq{}&(\delta M+z)^2-\rho(M-n)z+\rho M(\varepsilon-x)\\
       &+t\{M(1-\iota_H)-z\}
 \geq\frac{M^2-2M}{4}.
 \end{aligned}
\]
Here the first two terms total at least \(M^2/4\); the third loses at most
\(M/4\). The last is nonnegative for \(\iota_H=0\) and loses at most \(M/4\)
for \(\iota_H=1\). Both bounds are nonnegative because \(M\geq H\geq3\).
Since \(\Xi(F)\geq0\), the single-layer endpoint \(F\) is feasible.

The two-step bridge therefore applies unless \(x=1/2\) and \(H\mid M\):
Proposition~\ref{prop:regv-common-endpoint} certifies \(G_0\), and each
difference is single-layer. Appendix~\ref{app:regv-binary-bridge}
treats \(H=2\). For \(H\geq4\), the remaining branch is
\eqref{eq:regv-balanced-long}.

Lemma~\ref{lem:regv-balanced-entry} supplies this bridge.
\end{proof}

\begin{lemma}[Balanced entry bridge]
\label{lem:regv-balanced-entry}
Under \eqref{eq:regv-upper-phase} and \eqref{eq:regv-balanced-long},
a nested feasible chain joins the empty set to \(G_0\).
\end{lemma}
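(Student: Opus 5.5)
In the balanced branch \eqref{eq:regv-balanced-long} the single-layer endpoint \(F=\mathcal L_{H-1}\) used in Proposition~\ref{prop:regv-upper-bridge} loses its margin. I will therefore not seek a single feasible intermediate layer. Instead I build a chain through a split first layer, of the form
\[
 \varnothing\longrightarrow \mathcal L_{H-1}\cap[V',d)\longrightarrow
 \bigl(\mathcal L_{H-1}\cap[V',d)\bigr)\mathbin{\dot\cup}\mathcal B_{H-2}\longrightarrow G_0 .
\]
Here \(V'\) is a source-block boundary, a multiple of \(Q\), chosen so that the truncated layer retains an integer number of complete \(Q\)-periods of \(\mathcal K\). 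Each difference lies inside one physical layer. So Lemma~\ref{lem:column-layer-interpolation} (column uniqueness and acyclicity) fills all cardinalities once the three named endpoints are feasible. The last endpoint \(G_0\) is already certified by Proposition~\ref{prop:regv-common-endpoint}.

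For the intermediate endpoints I use the identity \eqref{eq:regv-slack-identity}. Each endpoint consists of complete \(g\)-groups plus at most one phase tail \([ag+\eta,(a+1)g)\), so it is phase-aligned and \(\Xi\ge0\) by Lemma~\ref{lem:regv-slack-identity}. It remains to show that \(\mathscr D(F)\) is nonnegative, i.e.
\[
 \frac{(P+N-Hd/2)^2}{d}+(e_1+v_F)^{\mathsf T}u_F-\frac{N(m_1+N)}{d}\ge0 .
\]
When \(x=1/2\) we have \(c=H/2\) and \(\rho=1/2\), and by \eqref{eq:regv-upper-phase} \(\ell=c-1\), so \(P=\ell d+m_1\) with \(m_1>\mu_1d/Q\). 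Then \(P-Hd/2=m_1-d\), and the square term equals \((m_1+N-d)^2/d\). Since \(H\mid M\), the row map \(f(z)=d-R+\lfloor z/H\rfloor\) sends complete \(Q\)-periods of the layer to complete row \(g\)-blocks in \([d-R,d)\). The periodic tail count \(\Theta_H\) of \eqref{eq:regv-periodic-tail} is then exact, with \(\varepsilon=0\). This gives \(v_F^{\mathsf T}u_F\ge\rho N\) and \(e_1^{\mathsf T}u_F\) computable from the overlap of \([0,m_1)\) with \([d-R,d)\). I will write \(\mathscr D\) as a quadratic in \(N\): the square term contributes \((m_1+N-d)^2/d\), and the linear terms combine as \(\rho N - N(m_1+N)/d\). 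The truncation point \(V'\) (equivalently \(N\)) is chosen so that the deficit \(N(m_1+N)/d-\rho N\) is absorbed by the square. Roughly, one takes \(N\) near \(d-m_1\), or near the value where the two terms balance. Because \(H\ge4\), \(M\ge H\), and \(N\) moves in steps of \(\rho Q\), there is a block-multiple choice with \(\mathscr D\ge0\). The quadratic discriminant is controlled by \(\rho=1/2\) and a step size of at most \(d/M\le d/4\).

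The second endpoint, after adding \(\mathcal B_{H-2}\), is handled the same way. The appended layer maps into rows \([d-2R,d-R)\), disjoint from the first layer's rows. It therefore adds \(\rho|\mathcal B|\)-type contributions to \(v^{\mathsf T}u\), and \(e_1^{\mathsf T}u\) increases only through the part of \([0,m_1)\) meeting those rows. I expect the same completed-square computation to go through. If it does not, I will instead order the chain as \(\varnothing\to\mathcal B_{H-2}\) truncated \(\to\cdots\), whichever makes the square term dominate.

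The main obstacle is the first truncated-layer certificate. At \(x=1/2\) the margin of the two-step bridge vanishes to order \(M\), exactly where the \((M^2-2M)/4\) bound in Proposition~\ref{prop:regv-upper-bridge} degenerates. Choosing \(V'\) must exploit the square term genuinely rather than discard it. If a single truncation is not enough, my fallback is to use the integral-slack alternative of Theorem~\ref{thm:matrix-interpolation} directly on \(\varnothing\subseteq G_0\). The difference has column uniqueness because \(\mathcal L\) and \(\mathcal B\) are both sub-intervals of \(\mathcal K\) in different layers, with \(\mathcal B\subseteq\mathcal L\), so a column index can repeat only across the two layers. I would then have to verify \eqref{eq:column-unique} on \(I\cap J\), and one unit of slack at \(G_0\) via the strict inequality in \eqref{eq:regv-inner-target}.
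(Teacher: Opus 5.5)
\textbf{Verdict: genuine gap.} Your primary chain fails at its \emph{second} endpoint, and neither fallback is established.

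\textbf{Why the second endpoint fails.} The words of \(\mathcal B_{H-2}\) have row images in \([d-R-A/H,\,d-R)\), where \(A=d-m_1\). Part of their support in \(G_0\) comes from the lower-layer columns at those same indices. Your claim that appending \(\mathcal B_{H-2}\) "adds \(\rho|\mathcal B|\)-type contributions" ignores this dependence.

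Meanwhile the first endpoint forces \(V'\) to be large. A contiguous truncation keeps the unsupported half of every \(HQ\)-period, so \(\mathcal L_{H-1}\cap[V',d)\) alone is feasible only while \(d-V'\lesssim 4A^2/(2A+d)\). A cut that far right discards lower columns the upper layer needs, so the two requirements conflict.

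\textbf{Concrete instance.} Take \(q=2\), \((\lambda_1,\lambda_2,\lambda_3)=(5,15,17)\), \(\mu_1=13\). Then \(H=4\), \(g=8\), \(Q=32\), \(d=8192\), \(M=256\), \(m_1=5120\), \(A=3072\), \(s=0\). Here \(\mathcal B=\mathcal L\) and \(\Xi=0\), so \eqref{eq:regv-slack-identity} reads \(H^2d/4-\Phi(F)=(N-A)^2/d+I-N(N+m_1)/d\), with \(I=(e_1+v_F)^{\mathsf T}u_F\).
\begin{itemize}
\item Keeping the last \(n\) source blocks of \(\mathcal L_3\) gives slack \(1152-22n+16\Theta_4(n)\). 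This is nonnegative exactly for \(n\le 82\).
\item After adding \(\mathcal B_2\), the slack becomes \(32\Theta_4(n)+16\max\{0,2n-128\}-22n-192\). The middle term counts upper words whose row images meet the retained lower columns.
\item This second slack equals \(-76\) at \(n=82\) and is negative for every \(n\le 85\).
\end{itemize}
Hence no block-aligned \(V'\) makes both endpoints feasible.

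\textbf{The fallbacks.} The integral-slack fallback on \(\varnothing\subseteq G_0\) fails in the same instance, for two reasons.
\begin{itemize}
\item \(G_0\) has \(N=I=A\), so \(\Phi(G_0)=H^2d/4\) exactly. There is no unit of slack, and \eqref{eq:regv-inner-target} is not strict.
\item The difference \(G_0\) violates both \eqref{eq:column-unique} and \eqref{eq:row-unique}. The columns of \(\mathcal B\) recur in \(\mathcal L\) and are row images of supported lower words. Each such row also receives \(H\) words from one layer.
\end{itemize}
Your reversed ordering (a truncated upper layer first, then the full lower layer) does survive this instance: \(40\) upper blocks give slacks \(272\) and \(16\). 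However, you give no argument for it. The real work is controlling every contact position of \(m_1\): head versus tail contact, \(A\le R\) versus \(A>R\), and \(s\neq0\).

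\textbf{Where the obstacle actually lies.} A short first truncation is trivially feasible, because the square term \(\approx A^2/d\) dominates. The difficulty is the coupling between the two layers. The paper handles it with a four-step schedule that advances in \(HQ\)-blocks from the right. It alternates the lower periodic tails \(T\), whose row images lie in \(\mathcal K\), with the lower and upper copies of the half-packets \(P_k,W_k\). This keeps \(I_r\ge N_r\) at every macro-endpoint up to contact with \(m_1\). Only the final lower completion spends the square term, through bounds of the form \(Ar/d\ge n_2/2\).
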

\begin{proof}
\emph{Four-step construction.}
In \eqref{eq:regv-balanced-long}, put $D=H^2Q$, $A=d-m_1$, and
$p_0=HQ/4$. The lower and upper layers have indices $H-1$ and $H-2$.
Define the periodic lower tails
\begin{equation}
 T=\{j\in\mathcal L:j\bmod(HQ)\ge HQ/2\}.
 \label{eq:balanced-periodic-tails}
\end{equation}
For the $k$th block, counted from the right, the two upper packets are
\begin{align}
 P_k&=\mathcal K\cap[d-(k-\tfrac12)HQ,d-(k-1)HQ),\nonumber\\
 W_k&=\mathcal K\cap[d-kHQ,d-(k-\tfrac12)HQ).
 \label{eq:balanced-upper-packets}
\end{align}
Each has $p_0$ words. A completed block adds its periodic lower tails,
the lower copy of $W_k$, and the two upper packets. In the current block,
the four steps are
\begin{equation}
 \begin{gathered}
 \text{lower tails in the right half}\ \longrightarrow\ (P_k)_{H-2}\\
 \longrightarrow\ \text{remaining lower tails and }(W_k)_{H-1}\\
 \longrightarrow\ (W_k)_{H-2}.
 \end{gathered}
 \label{eq:regv-four-step}
\end{equation}
All lower tails are intersected with $\mathcal L$.
Write $F_{k-1,r}$ for the endpoint after step $r$ in block $k$,
with $F_{k-1,0}$ the preceding endpoint. Thus
\begin{equation}
 F_{k-1,0}\subseteq\cdots\subseteq F_{k-1,4}=F_{k,0}.
 \label{eq:regv-four-step-endpoints}
\end{equation}
At contact in the right half, $A\le(k-1/2)D$, stop after step 2;
otherwise execute steps 3 and 4. Every packet difference lies in one
physical layer.

\emph{Endpoint bounds.}
The row map $f_l(j)=d-R+\lfloor j/H\rfloor$
sends each lower periodic tail to the $\mathcal K$ part of its
corresponding $HQ$ interval. Its lower copy supplies one unit of
support, and the upper copy supplies a second unit once selected.
For $k\ge2$, put $t_{\rm prev}=(k-1)Hp_0$. If the first step adds
$u$ lower tails, then
\begin{align*}
 I_1-N_1&\ge(k-1)(H-3)p_0,\\
 I_2-N_2&\ge[(H-3)(k-1)-1]p_0.
\end{align*}
When step 3 is reached, the first-step tails are complete and their
cumulative mass is $t_{\rm old}=(k-1/2)Hp_0$.
If $t\ge t_{\rm old}$ is the new tail mass, then
\begin{align*}
 I_3-N_3&\ge[k(H-3)+1-H/2]p_0,\\
 I_4-N_4&\ge[k(H-3)-H/2]p_0.
\end{align*}
All four bounds are nonnegative for $H\ge4,k\ge2$.

For a complete first block, $A>D$, put
$T_1=T\cap[d-D,d)$ and $T_0=T\cap[d-D/2,d)$.
Their masses are $t=Hp_0$ and $t_0=Hp_0/2$. The four endpoints are
\[
 (T_0,\varnothing),\quad(T_0,P_1),\quad
 (T_1\cup W_1,P_1),\quad(T_1\cup W_1,P_1\cup W_1).
\]
Their masses are $t_0,t_0+p_0,t+2p_0,t+3p_0$, and the lower
tails give overlap at least $t_0,2t_0,t+t_0,2t$, respectively.
Thus $I_r\ge N_r$: at the last two endpoints the differences are
at least $(H/2-2)p_0$ and $(H-3)p_0$. Continue with the second block.

For a first-block head contact, $D/2<A\le D$, put
$T_0=T\cap[d-D/2,d)$, $t_0=|T_0|=D/8$, and $t=|T|$.
The four pairs of lower and upper column sets are
\[
 (T_0,\varnothing),\quad(T_0,P_1),\quad
 (T\cup W_1,P_1),\quad(T\cup W_1,P_1\cup W_1).
\]
Here $R=d/H$ is a multiple of $HQ$. The upper $P_1$ has row image
in $e_1$ or in $T$, so it contributes at least $p_0$ at the last
two endpoints. Consequently,
\begin{align*}
 N_1&=t_0,& I_1&\ge t_0,\\
 N_2&=t_0+p_0,&I_2&\ge2t_0,\\
 N_3&=t+2p_0,&I_3&\ge t+t_0+p_0,\\
 N_4&=t+3p_0,&I_4&\ge2t+p_0.
\end{align*}
Since $t\ge t_0=(H/2)p_0\ge2p_0$, all satisfy $I_r\ge N_r$.

For $0<A\le D/2$, use $(T,\varnothing)$ followed by $(T,P_1)$ and
then complete the lower layer. When $A\ge HQ/2$, $P_1$ is fully
admissible and $f_l(T)\subseteq\mathcal L\cap P_1$.
Every complete $HQ$ interval has equally many head and tail words,
and a right truncation favors the tail. Hence
$|T|\ge|\mathcal L\setminus T|$ and the lower-completion endpoint has
\[
 I\ge2|T|+|P_1|\ge N.
\]
If $A<HQ/2$, then $T=\mathcal L$ and $f_l(\mathcal L)\subseteq\mathcal L$.
Select the lower layer directly and then interpolate to $G_0$.

\emph{Lower completion.}
Fix the upper packets after contact and select all of $\mathcal L$.
Write $n_1=|\mathcal L|$, $n_2$ for the upper mass, $N=n_1+n_2$,
$r=A-N$, and $\delta=\min(s,Q-s)$. Then
\begin{equation}
 n_1=(A+\delta)/2,\qquad r=(A-\delta)/2-n_2.
 \label{eq:balanced-completion-count}
\end{equation}
At a head contact with $k\ge2$, $n_2=kHQ/2$ and
\[
 r\ge(k-1)D/2+D/4-kHQ/2,\qquad I\ge N-n_2/2.
\]
If $A>R$, the condition $2rA\ge dn_2$ follows from
\[
 2r-Hn_2\ge(HQ/2)[(k-1)H-2k]\ge0.
\]
At a tail contact with $k\ge2$, $n_2=(2k-1)HQ/4$ and
$I\ge N-C$, where $C=(k-1)HQ/4$. In this case
\[
 r-HC\ge(HQ/4)[(k-1)(H-2)-1]>0,
\]
so $Ar\ge dC$ when $A>R$. For $A\le R$, all upper row images
lie in $e_1$ and $I\ge N$.
These inequalities certify the completion endpoint through
\eqref{eq:regv-slack-identity}.

For a first-block head contact, the upper set is
$S_0=P_1\cup W_1=\mathcal K\cap[d-HQ,d)$.
Put $E=2|T|-n_1\ge0$. The lower contribution is $n_1+E$,
and the upper contribution is at least $n_2-C$, with $0\le C\le HQ/4$.
It suffices to prove
\begin{equation}
 Ar/d+E\ge C.\label{eq:first-head-completion}
\end{equation}
For $A\le R$, $C=0$. For $A>R$ and $A\ge D/2+HQ$,
$r\ge D/4$ proves the claim. In the remaining interval
$D/2\le A<D/2+HQ$, the common $q$-power scales give either
$R\le D/4$ or $R=D/2$.
In the former case,
\[
 Ar/d\ge HQ/2-Q\ge HQ/4.
\]
In the latter, write $A=R+t$, $0<t<HQ$; the baseline bound is
$Ar/d\ge HQ/4-Q/2$. If $t\le Q/2$, the unsupported half of the
upper row image lies in $e_1$, so $C=0$.
For $Q/2\le t\le HQ-Q$, periodic tail counting gives $E\ge Q/2$.
For $t\ge HQ-Q$,
\[
 Ar/d\ge\frac{(R-Q)(R+HQ-Q)}{2HR}\ge HQ/4;
\]
the last inequality is $R(H-2)\ge(H-1)Q$, valid for $R=H^2Q/2$.
Thus \eqref{eq:first-head-completion} holds throughout.

All selected upper packets are complete $g$-groups, and the only
partial lower group is complementary to the cutoff $m_1$.
Thus $\Xi\ge0$ at every macro-endpoint. After lower completion,
select the remaining upper words to reach the common endpoint $G_0$.
Each transition is single-layer, so
Theorem~\ref{thm:matrix-interpolation} supplies every intermediate
cardinality.

\end{proof}

\subsubsection{Supercritical complete-group chains}

For \(c>H/2\), the square term certifies most complete-group endpoints. The critical odd-alphabet case requires additional overlap support.
\paragraph*{Noncritical branch}
Put \(z=P-Hd/2\). For even \(q\), \(c>H/2\) implies \(\ell\geq H/2\) and \(z\geq m_1\);
for odd \(q\) with \(\ell\geq(H+1)/2\), we have \(z\geq d/2+m_1\).
At every complete-group endpoint with \(\Xi(F)\ge0\),
Lemma~\ref{lem:regv-slack-identity} gives
\begin{equation}
 \frac{q^{\lambda_3}}4-\Phi(F)
 \geq\frac{z^2+N_F(2z-m_1)}d>0.
 \label{eq:regv-supercritical-easy}
\end{equation}
Thus tail-first complete \(g\)-groups in layers
\(H-1,\ldots,c\) give feasible endpoints;
their single-layer differences interpolate by
Lemma~\ref{lem:column-layer-interpolation}.
\paragraph*{Critical odd-alphabet branch}
For the remaining branch
\begin{equation}
 \ell=(H-1)/2,\qquad 0<m_1<d
 \label{eq:regv-critical-odd}
\end{equation}
put
\begin{equation}
 h=(H-1)/2,\qquad c=(H+1)/2,\qquad\rho=h/H.
 \label{eq:regv-critical-parameters}
\end{equation}
At a phase-aligned complete-group endpoint it suffices to prove
\begin{equation}
\begin{aligned}
 (e_1+v_F)^{\mathsf T}u_F&\geq f_{N_F}(m_1),\\
 f_N(m_1)&=\frac{N(d-m_1)}d-\frac{(d/2-m_1)^2}{d}\\
 &\leq\frac N2+\frac{N^2}{4d}.
\end{aligned}
 \label{eq:regv-critical-target}
\end{equation}

For \(0\leq j<M\), define the complete tail in the \(j\)th source
\(Q\)-block by
\begin{equation}
 K_j=[jQ+cg,(j+1)Q),\qquad
 \mathcal K=\bigcup_{j=0}^{M-1}K_j.
 \label{eq:regv-critical-Kj}
\end{equation}
For \(2\leq n\leq h\), let
\begin{equation}
 G^{\mathrm c}_n=\bigcup_{r=0}^{n-1}\mathcal K_{H-1-r}.
 \label{eq:regv-critical-Gn}
\end{equation}
Modulo-\(H\) tail counting gives
\begin{equation}
 N_n=n\rho d,\qquad
 v_n=n\mathbf{1}_{\mathcal K},\qquad
 \mathbf{1}_{\mathcal K}^{\mathsf T}u_n\geq\rho N_n.
 \label{eq:regv-critical-count}
\end{equation}
Hence
\begin{equation}
 \frac{q^{\lambda_3}}4-\Phi(G^{\mathrm c}_n)
 \geq N_n\left(\frac{3n\rho}{4}-\frac12\right)\geq0.
 \label{eq:regv-critical-Gn-feasible}
\end{equation}

\begin{proposition}[Critical odd-alphabet construction]
\label{prop:regv-critical-construction}
In the branch \eqref{eq:regv-critical-odd}, every cardinality through the
last complete-group endpoint is feasible.
\end{proposition}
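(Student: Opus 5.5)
The chain will run \(\varnothing\rightsquigarrow E\rightsquigarrow G^{\mathrm c}_2\rightsquigarrow G^{\mathrm c}_3\rightsquigarrow\cdots\rightsquigarrow G^{\mathrm c}_h\), where \(E\) is an entry endpoint built from the first complete layer. The last complete-group endpoint is \(G^{\mathrm c}_h\), with all complete \(g\)-groups in layers \(H-1,\ldots,c\). Each endpoint \(G^{\mathrm c}_n\), \(2\le n\le h\), is already certified by \eqref{eq:regv-critical-Gn-feasible}. It remains to supply feasible intermediate endpoints and interpolation.

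\emph{Between consecutive endpoints.} For \(n\geq2\), the difference \(G^{\mathrm c}_{n+1}\setminus G^{\mathrm c}_n=\mathcal K_{H-2-n+1}\) lies in a single physical layer. Lemma~\ref{lem:column-layer-interpolation} then gives column uniqueness and acyclicity once both endpoints are feasible. So only the two certificates are needed, and Theorem~\ref{thm:matrix-interpolation} fills every cardinality between them.

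\emph{The entry segment.} This is where the certificate \eqref{eq:regv-critical-Gn-feasible} fails: at \(n=1\), \(3\rho/4-1/2<0\). I would go from \(\varnothing\) to \(G^{\mathrm c}_2\) through tail-first complete \(g\)-groups. First, take the groups of \(\mathcal K_{H-1}\) in decreasing column order. Then take the groups of \(\mathcal K_{H-2}\) in decreasing column order. At each group endpoint \(F\), I would verify \eqref{eq:regv-critical-target} directly.

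\begin{itemize}
\item Every such endpoint is phase-aligned, so \(\Xi(F)\ge0\) by Lemma~\ref{lem:regv-slack-identity}.
\item By \eqref{eq:regv-critical-target}, it suffices to show \((e_1+v_F)^{\mathsf T}u_F\ge N_F/2+N_F^2/(4d)\).
\item The row images of layer \(H-1\) fall in \([d-R,d)\) via \(f(z)=d-R+\lfloor z/H\rfloor\). A reverse-order prefix of groups therefore lands in a right-justified row interval.
\item By the modulo-\(H\) tail count \eqref{eq:regv-periodic-tail}, at least a fraction \(\rho\) of those rows lie in \(\mathcal K\), and they are hit by \(v_F\) once selected.
\item Rows that are not in \(\mathcal K\) but lie below \(m_1\) are instead covered by \(e_1\).
\end{itemize}

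I expect the count \((e_1+v_F)^{\mathsf T}u_F\gtrsim \rho N_F+(\text{support from }e_1)+(\text{self-overlap once the second layer starts})\). Comparing with \(N/2+N^2/(4d)\) while \(N_F\le 2\rho d\), and using \(\rho=(H-1)/(2H)\), the deficit relative to \(N/2\) is \(N/(2H)\). This deficit has to be recovered from \(e_1\), since \(m_1>0\), and from the second-layer cross term \(v^{\mathsf T}u\), which grows like \(N^2/d\).

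\emph{Main obstacle.} The hardest step is the early part of layer \(H-1\), where \(N_F\) is small and neither the square term nor the cross term helps. There I would use the square term in \eqref{eq:regv-slack-identity} before discarding it. With \(\ell=(H-1)/2\), we have \(P-Hd/2=m_1-d/2\), so the square is \((m_1-d/2+N_F)^2/d\). I expect this to absorb the deficit \(N_F/(2H)\) unless \(m_1\) is close to \(d/2-N_F\).

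In that near-balanced subcase, \(e_1\) covers the lower half of the rows. I would try to show that the images of the first groups of \(\mathcal K_{H-1}\), namely the rows in \([d-R,d)\), are then compensated: either they land in \(\mathcal K\) at rate above \(\rho\) because of the right truncation, as in the argument for \eqref{eq:regv-periodic-tail}, or \(R\le d/2\) forces extra \(e_1\) support. If some transition is certified only with equality, I would instead appeal to the integral-slack alternative in Theorem~\ref{thm:matrix-interpolation}. This would use that \(q\) is odd, so \(q^{\lambda_3}/4\) is nonintegral and \(\mathscr H_3<q^{\lambda_3}/4\) supplies the needed unit.
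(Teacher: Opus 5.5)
\textbf{Gap in the entry segment.} Your route completes $\mathcal K_{H-1}=G^{\mathrm c}_1$ before touching layer $H-2$. That chain passes through infeasible sets, so no sharper estimate can certify it, and the failure is not confined to a delicate boundary case.

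The row images of layer $H-1$ lie in $[d-R,d)$, so $e_1$ contributes nothing when $m_1\le d-R$. The right-truncation bonus disappears at the end of each complete run of $H$ source blocks. When $H\mid M$ one has exactly $(e_1+v_F)^{\mathsf T}u_F=\rho N_F$ at $F=\mathcal K_{H-1}$. Meanwhile $f_{N_F}(m_1)$ exceeds $\rho N_F$ for a wide range of $m_1$, not only for $m_1\approx d/2-N_F$.

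Concretely, take $q=5$, $(\lambda_1,\lambda_2,\lambda_3)=(2,4,5)$, $\mu_1=12$. Then $H=g=M=5$, $d=125$, $\ell=2$, $m_1=50$, and $\Phi(\varnothing)=780$. For $F=\mathcal K_4$ one finds $v_F^{\mathsf T}c_0=130$, $a_0^{\mathsf T}u_F=100$, $v_F^{\mathsf T}u_F=20$. Hence $\Phi(F)=790>q^{\lambda_3}/4=781.25$, and already after eight of its ten groups $\Phi=784$. Equivalently, $(\mu_1,\mu_2,\mu_3)=(12,50,593)$ has Kraft sum below $3/4$, yet $F_2=\mathcal K_4$ leaves only $585$ admissible words of length $5$. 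The same happens for $H=3$: $q=3$, $(2,4,5)$, $\mu_1=4$ gives $\Phi(\mathcal K_2)=63>243/4$.

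Your fallback to integral slack also does not work. A nonintegral $q^{\lambda_3}/4$ yields only $\Phi\le\mathscr H_3$, not $\Phi\le\mathscr H_3-1$ (see the remark after Theorem~\ref{thm:matrix-interpolation}).

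\textbf{Missing idea.} The upper layer must enter from the start, at exactly the columns where the lower packets' row images land, so that those rows receive double support. The paper seeds $K_{M-1}$ in both layers $H-1$ and $H-2$. It then adds $D_b=(\mathcal P_b)_{H-1}\mathbin{\dot\cup}(K_{\tau(b)})_{H-2}$. The $H$ lower tails in $\mathcal P_b$ map onto row $Q$-block $\tau(b)$. The $\mathcal K$-part of that block is covered both by the already selected lower $K_{\tau(b)}$ and by the new upper copy. This gives $(e_1+v)^{\mathsf T}u\ge\frac{H-1}{H+1}N$, which beats $N/2+N^2/(4d)$ for $H\ge5$ since $N<d/2$. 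Because these differences span two layers, interpolation uses column uniqueness with genuine integral slack (real slack at least $N/24>1$), rather than single-layer acyclicity.

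\textbf{The case $H=3$.} Here $h=1$, so your chain $G^{\mathrm c}_2\rightsquigarrow\cdots\rightsquigarrow G^{\mathrm c}_h$ is empty. The last complete-group endpoint is $T$, which includes $U=\mathcal K\cap[m_1,d)$ in the partially admissible layer $c-1$. The paper handles this with a separate interleaved four-step lower/upper schedule, whose third endpoint is the tight check.
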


\begin{proof}
For \(H\geq5\), we first reach \(G^{\mathrm c}_2\), after which
complete-layer promotion applies. For \(H=3\), the two-layer construction
continues directly to the terminal residual.

\emph{Entry bridge and promotion for \(H\geq5\).}
Start with the common tail \(K_{M-1}\) in layers \(H-1\) and \(H-2\).
For \(M=s_MH\) and \(0\leq b\leq s_M-2\), group the ordinary lower tails as
\begin{equation}
 \mathcal P_b=\bigcup_{j=bH}^{(b+1)H-1}K_j,\qquad
 \tau(b)=(H-1)s_M+b.
 \label{eq:regv-critical-packets}
\end{equation}
Complete the lower set to $\bigcup_{j=(s_M-1)H}^{M-1}K_j$, keeping the upper set $K_{M-1}$;
then process $b=s_M-2,\ldots,0$ with difference
\begin{equation}
 D_b=(\mathcal P_b)_{H-1}
 \mathbin{\dot\cup}(K_{\tau(b)})_{H-2}.
 \label{eq:regv-critical-packet-difference}
\end{equation}
The column supports are disjoint since
\begin{equation}
 \tau(b)-(b+1)H=(H-1)(s_M-b)-H\geq H-2>0.
 \label{eq:regv-critical-support-separation}
\end{equation}
The lower packet maps equally into \(H\) target \(g\)-blocks;
exactly the last \(h\) meet the support tail. With \(N=N_{\rm low}+N_{\rm up}\),
\begin{equation}
\begin{aligned}
 N_{\rm low}:N_{\rm up}&=H:1,\\
 (e_1+v)^{\mathsf T}u&\geq\frac{H-1}{H+1}N,
 &N/d&<1/2.
\end{aligned}
 \label{eq:regv-critical-packet-inner}
\end{equation}
For \(H\geq5\), the coefficient is at least \(2/3\) and \(1/2+N/(4d)<5/8\),
giving real slack at least \(N/24\). Since \(g\geq q\) and the first packet
has \(N=(H+1)hg\geq60\), every endpoint has more than one unit of slack.
Apply Theorem~\ref{thm:matrix-interpolation} to
\eqref{eq:regv-critical-packet-difference}. For \(M<H\), the seed
supports the entire lower layer and both completions are single-layer.
After the packets, add the missing upper tails within one layer to
reach \(G^{\mathrm c}_2\), then begin the single-layer transitions \(G^{\mathrm c}_2\to\cdots\to G^{\mathrm c}_h\).

\emph{Entry bridge and promotion for \(H=3\).}
Here \(q=3\), \(Q=3g\), and \eqref{eq:regv-critical-target} becomes
\begin{equation}
 I:=(e_1+v)^{\mathsf T}u\geq
 B(N):=\frac N2+\frac{N^2}{4d}.
 \label{eq:regv-H3-target}
\end{equation}
The lower and upper layers are 2 and 1. Put $U=\mathcal K\cap[m_1,d)$. Let $t$ count
the complete $K_j$ in $U$, namely its last $t$ groups.
For $M=3s\ge27$, process $k=0,\ldots,2M/27-1$ until these upper groups are exhausted.
Set $z=M-9k-1$, $p=M-3k-1$. The four additions are
\[
\begin{array}{c|c|l}
 &\text{layer}&\text{group indices}\\ \hline
1&2&z\\
2&1&p\\
3&2&z-3,\ z-6,\ p-1,\ p-2\\
4&1&p-1,\ p-2.
\end{array}
\]
The lower row map sends $z,z-3,z-6$ to $p,p-1,p-2$, respectively.
Step 1 contains the lower copy of $p$; step 3 supplies the other
two supports, whose indices are $1,0$ modulo 3, whereas source tails
have index 2. All groups are distinct and lower indices are at least
$M/3$. For the cumulative endpoints $F_{k,r}$,
\begin{equation}
 \Delta N/g=(1,1,4,2),\quad\Delta I/g\ge(1,1,2,2).
 \label{eq:regv-H3-four-increments}
\end{equation}
Thus the five endpoints \(F_{k,r}\), \(0\leq r\leq4\), have
\begin{equation}
\begin{array}{c@{\quad}ccccc}
 &0&1&2&3&4\\
 N/g&8k&8k+1&8k+2&8k+6&8k+8\\
 I/g\ \ge&6k&6k+1&6k+2&6k+4&6k+6.
\end{array}
\label{eq:regv-H3-four-endpoints}
\end{equation}
The first, second, and fourth satisfy \(I\geq3N/4\) and \(N<d\),
hence \eqref{eq:regv-H3-target}. At the third, the current block
exists only if \(d\geq27(k+1)g\), and
\begin{equation}
\begin{aligned}
&108(k+1)(2k+1)-(8k+6)^2\\
&\hspace{3em}=4(38k^2+57k+18)>0.
\end{aligned}
\label{eq:regv-H3-tight-endpoint}
\end{equation}
which is the required \(I-N/2\geq N^2/(4d)\).

\emph{Truncation at upper exhaustion.}
Stop if step 2 exhausts the complete upper groups. Otherwise let
$r$ be the smaller of 2 and their remaining number. Step 3 takes
lower groups $p-1,p-2$ and $z-3i$, $1\le i\le r$; step 4 takes upper groups $p-i$,
$1\le i\le r$. With $u=rg$, the third endpoint satisfies
\[
 N_3=(8k+4)g+u,\qquad I_3\ge(6k+2)g+u.
\]
Using $d\ge27(k+1)g$,
\begin{align*}
 I_3-N_3/2-N_3^2/(4d)
 &\ge\frac{g(152k^2+174k+18)}{108(k+1)}>0.
\end{align*}
Step 4 adds $rg$ words and at least $rg$ overlap. As $B'(N)<1$ for
$N<d$, its slack cannot decrease. Each transition is single-layer.

\emph{Completion and continuation.}
Suppose first that $M\ge9$ and the $t\le2M/9$ complete upper groups
are exhausted. Their union is $S=\bigcup_{j=M-t}^{M-1}K_j$.
Complete the lower set to $\mathcal K$. The definition of $t$ gives
$m_1>(M-t)Q-g\ge2d/3$, so every upper row image lies in $e_1$.
The complete lower row profile has mass $g$ per period of length $g$
on $[2d/3,d)$; its mass on $\mathcal K$ is $Mg/3$, and $e_1$
covers at least $M-3t-1$ full periods. Hence at $(\mathcal K,S)$,
\[
 N=(M+t)g,\qquad I\ge(4M/3-2t-1)g,
\]
and
\[
 \frac{I-B(N)}g\ge\frac{3M}4-\frac{8t}3-\frac{t^2}{12M}-1
 \ge\frac{149M}{972}-1>0.
\]
Then complete the upper set to $U$, including its boundary tail.

If $M\ge27$ and $t>2M/9$, complete $k=2M/27$ blocks and then
complete the lower set to $L=\mathcal K\cap[d/3,d)$.
There are $2M/3$ lower groups and $2M/9$ upper groups.
Of these, $2M/9$ lower groups have double support and $2M/27$
upper groups have lower support. Therefore
\[
 \begin{gathered}
 N=12kg,\quad I\ge7kg,\quad d=81kg/2,\\
 I-B(N)\ge kg/9>0.
 \end{gathered}
\]
For $M=9$, start with lower $K_8$, upper $K_8$, lower
$K_5\cup K_7\cup K_6$, and upper $K_7$, stopping if the complete
upper groups are exhausted. If $t\le2$, use the preceding completion.
If $t>2$, complete the lower set to $L$; the endpoint has
$N=8g$, $I\ge5g$, and $I-B(N)\ge11g/27$.

After completion to $L$, complete the upper set and continue as
follows, using pairs of lower and upper column sets:
\[
\begin{array}{ll}
m_1\le d/3:&(L,L)\to(\mathcal K,L)\to(\mathcal K,U),\\
m_1>d/3:&(L,U)\to(\mathcal K,U).
\end{array}
\]
Each endpoint has $N\le2d/3$. Its row images lie in $[d/3,d)$,
where every image in $\mathcal K$ has weight at least two: either
both selected layers cover it, or the upper layer complements $e_1$,
since $\mathbf1_{[0,m_1)}+\mathbf1_U=1$ on $\mathcal K$.
A source $Q$-block maps into $\mathcal K$ exactly when its index is
2 modulo 3. Every terminal string of groups, including a partial
first group, places at least one third of its mass on these indices.
Consequently $I\ge2N/3\ge B(N)$. This also proves the endpoint
$(\mathcal K,U)$ following exhaustion of the complete upper groups.

For $M=3$, select lower and upper copies of $K_2$, then of $K_1$,
then of $K_0$, while the upper group is complete and admissible.
Upon upper exhaustion, complete the lower set to $\mathcal K$.
For $t=0,1,2,3$, the resulting masses $N/g$ are $3,4,5,6$,
and the overlap bounds $I/g$ are $3,3,4,4$, respectively.
Each satisfies $I\ge B(N)$. The intervening endpoint bounds are
$(1,1),(2,2),(3,3),(4,4),(5,4),(6,4)$.
Complete the upper set to $U$ as above.
For $M=1$, select lower $K_0$, then upper $U$.
The overlaps are at least $g$ and $2g$, respectively, and satisfy
$I\ge B(N)$.

Complete groups have zero column-phase contribution; the boundary
tail in $U$ complements the prefix $e_1$, so $\Xi\ge0$ throughout
the endpoint chain. All completions are single-layer transitions.
Theorem~\ref{thm:matrix-interpolation} supplies the intervening
cardinalities and reaches the complete tail core $(\mathcal K,U)$.
This proves the proposition.
\end{proof}

\subsubsection{Terminal residual}

The preceding chains select complete groups. We now absorb the boundary remainders into a common first endpoint, then fill the remaining short gaps.
Put $y=K_1$, $\varepsilon=x-y$, and define
\begin{align}
 \Omega_2&=\{P\le w<Hd:w\bmod Q\ge\mu_1\},\nonumber\\
 T&=\{w\in\Omega_2:w\bmod Q\ge cg\},\nonumber\\
 F^{(1)}&=T\cup\{w\in\Omega_2:w<cd\}.
 \label{eq:terminal-first-set}
\end{align}
The incoming endpoint is \(F_{\rm V}^\star=G_{h-1}\) in the
subcritical case with \(m_1>0\), except that
\eqref{eq:binary-direct-entry} uses $F_{\rm V}^\star=\mathcal L_0$; it is \(F_{\rm V}^\star=T\) in the
critical \(H=3\) case. In all other cases it is
\(\bigcup_{y=c}^{H-1}\mathcal K_y\).
Each lies in \(F^{(1)}\); put
\(\mathcal P_1=F^{(1)}\setminus F_{\rm V}^\star\).
For the usual subcritical entry with \(m_1>0\),
\[
 T\setminus G_{h-1}
 = (\mathcal K\setminus\mathcal L)_c
   \mathbin{\dot\cup}(\mathcal L\setminus\mathcal B)_{c-1}.
\]
These column sets are disjoint, and the incomplete groups in
\(F^{(1)}\setminus T\) have columns outside \(\mathcal K\).
For the binary direct entry, $T\setminus\mathcal L_0=\mathcal K_1$,
so the first packet is again column-unique.
In the supercritical case, \(\mathcal P_1\) lies in layer \(c-1\).
If $\nu=0$, all thresholds are aligned and $\Omega_2=T$.
For $\nu>0$, put $\delta=cg-\mu_1=g-\nu$. The residual after the
first packet is exactly
\begin{equation}
 D_1=\{cd\le w<Hd:\mu_1\le w\bmod Q<cg\}.
 \label{eq:terminal-short-gaps}
\end{equation}
In particular, every row index in $D_1$ is at least $dx=cR$.

\begin{lemma}[First-packet endpoint]
\label{lem:first-packet-endpoint}
The set $F^{(1)}$ satisfies $\Phi(F^{(1)})<H^2d/4$ whenever a
nonempty residual first packet is required. This packet is interpolable.
\end{lemma}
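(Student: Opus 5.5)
The plan is to certify $F^{(1)}$ with the exact phase-slack identity of Lemma~\ref{lem:regv-slack-identity} (note that $q^{\lambda_3}=H^2d$), and then feed the packet $\mathcal P_1=F^{(1)}\setminus F_{\rm V}^\star$ into Theorem~\ref{thm:matrix-interpolation} through alternative~2, the integral slack. Column-side uniqueness \eqref{eq:column-unique} of $\mathcal P_1$ was already observed before the lemma in each entry case (usual subcritical, binary direct entry, supercritical). It therefore remains to prove the strict endpoint bound. We also need it in the integral form $\Phi(F^{(1)})\leq\mathscr H_3-1$, which is automatic when $4\mid H^2d$ and otherwise (odd $q$) needs the real slack to exceed the fractional part $<1$.

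First describe $F^{(1)}$ explicitly from \eqref{eq:terminal-first-set}. It consists of every admissible word in the physical layers $y<c$ (those $w$ with $P\le w<cd$), together with the complete groups $\mathcal K_y$ for $c\le y<H$. Hence $N=|F^{(1)}|=|\Omega_2|-|D_1|$ with $|D_1|=h\,M\,(cg-\mu_1)\cdot(\text{copies per layer})$, computed by counting $D_1$ layer by layer. The column profile is $v=h\mathbf 1_{\mathcal K}+v_{\rm low}$, where $v_{\rm low}$ counts full admissible lower layers. The lower layers contribute columns $z$ with $z\bmod Q\ge\mu_1$, plus the head layer $\ell$ starting at $m_1$. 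The row profile $u$ follows from the row maps $f_y(z)=yR+\lfloor z/H\rfloor$: lower layers fill rows $[0,cR)$ entirely up to the $F_1$ exclusions, while upper complete groups place mass $\rho$ per unit of $\mathcal K$ on rows $\ge cR$.

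Next verify $\Xi(F^{(1)})\ge0$. The upper part is a union of complete groups. Each lower layer restricted to a $Q$-period is $[\mu_1,Q)$, whose phase contribution is $-\psi(\eta)$ per tail only if unmatched. More simply, I would compute $\Xi$ directly: a full lower admissible layer contributes zero to $\Delta_v$ because its column profile is $\e-c_0/H$-proportional (the natural-order shortest layer is the same exclusion). The only phase-misaligned piece is the head layer starting at $m_1$, which exactly cancels $\Delta_0=\psi(\eta)$ as in the lemma's proof. With $\Xi\ge0$, the target becomes
\[
 (e_1+v)^{\mathsf T}u-\frac{N(m_1+N)}d+\frac{(P+N-Hd/2)^2}{d}>0 .
\]

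The main step, and the obstacle, is this inequality with strictness. I would split $(e_1+v)^{\mathsf T}u$ into the lower-row part, rows $<cR$, and the upper-row part, rows $\ge cR$. On the lower rows, every row image of a lower word lies in $e_1$ or in the lower column profile, giving weight at least one per word. On the upper rows, the complete-group tail counting \eqref{eq:regv-periodic-tail} gives at least $\rho N_{\rm up}\cdot(c+h)/h$-type support, as in \eqref{eq:regv-critical-count}. Combining gives $(e_1+v)^{\mathsf T}u\ge N\,(N+m_1)/d+\kappa$ with a surplus $\kappa$ proportional to $|D_1|\,c/H$, which is positive precisely because a nonempty residual packet is required: this forces $\nu>0$, so $\delta=g-\nu\ge1$ and $D_1\ne\varnothing$. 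In the subcritical regime the square term may vanish (balanced $P+N\approx Hd/2$), so strictness must come from $\kappa$. In the supercritical regime, the square term $z^2/d$ with $z\ge m_1>0$ from \eqref{eq:regv-supercritical-easy} already gives strict slack. For odd $q$, I would check that $\kappa\ge g\ge q$ exceeds $1$, so the integral form holds. As a cross-check, the alternative route writes $\Phi(F^{(1)})=\Phi(\Omega_2)-\sum\delta+\chi$ via Lemma~\ref{lem:increment} applied to $D_1$ and the terminal bound \eqref{eq:terminal-endpoint}. Here all rows of $D_1$ are $\ge cR$, while its columns lie in the short gaps, so the increment sum is computable and gives the same conclusion.
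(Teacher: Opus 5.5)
\textbf{There is a genuine gap.} The argument rests on the claim $\Xi(F^{(1)})\ge0$, and that claim is false. A nonempty residual packet forces $\nu>0$, hence $\ell=c-1$, so there are no full lower layers. The only lower contribution is the admissible part of layer $c-1$: its columns are $\{z\ge m_1:\ z\bmod Q\ge\mu_1\}$ and its rows lie in $[(c-1)R,cR)$, not in all of $[0,cR)$. This column set contains the incomplete group $[(c-1)g+\nu,cg)$ in every source $Q$-block beyond $m_1$. Each such group contributes $\sum_{r=\nu}^{g-1}(\bar w-w(r))=-\psi(\nu)\le0$ to $\Delta_v$. These tails have phase $\nu$, not $\eta$, and there may be many of them. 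The only positive term, $\Delta_0=\psi(\eta)$, does not offset them in general, so $F^{(1)}$ is not phase-aligned. A concrete instance is $q=2$, $(\lambda_1,\lambda_2,\lambda_3)=(2,4,5)$, $\mu_1=1$. Then $F^{(1)}=\{5,6,7,10,11,14,15\}$, $\eta=0$, $\Delta_0=0$, and $\Delta_v(F^{(1)})=21/2-11=-1/2<0$. Your supercritical shortcut through \eqref{eq:regv-supercritical-easy} uses the same false premise. It also needs $z\ge m_1$, which excludes the critical odd branch $\ell=(H-1)/2$, where $z=m_1-d/2$ can be negative.

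Even with a correct value of $\Xi$, the main inequality is only asserted. The ``$\rho N_{\rm up}(c+h)/h$-type support'', the surplus $\kappa$ ``proportional to $|D_1|c/H$'', and $\kappa\ge g$ for odd $q$ are not derived. This lemma is tight, so these steps cannot be waved through.

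The paper does not use the phase-slack identity here. It bounds $\Phi(F^{(1)})=a^{\mathsf T}c_1$ directly from three facts:
\begin{itemize}
\item $a\le c\e+(H-c)\mathbf 1_{\mathcal K}$, because off $\mathcal K$ only the first $c$ layers contribute;
\item $|c_1|\le Hd(x-x^2+xy-y^2)$ with $y=K_1$;
\item $\mathbf 1_{\mathcal K}^{\mathsf T}c_1\le\mathbf 1_{\mathcal K}^{\mathsf T}\widetilde c\le(1-x)|\widetilde c|$ for $\widetilde c=C_3(T)\e$.
\end{itemize}
This yields the polynomial $P_B(x,\varepsilon)$, or $P_A$ when $H\mid M$. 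Even so, its maximum exceeds $1/4$ for $(H,c)=(3,2),(4,3)$. Those cases need a finer prefix-profile deficit, and $H=2$ and the scales with $QH>d$ are checked separately. For odd $q$, the integral slack comes from the explicit margin $1/4-P_B>3/(4H^3)$ together with $d\ge3H$. In the supercritical branch the paper uses single-layer acyclicity instead of slack.

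Your backward cross-check does not close the gap either. Transferring the terminal bound from $\Omega_2$ to $F^{(1)}$ needs a lower bound on the total increment of $D_1$ at $F^{(1)}$. But the packets of $D_1$ with columns at least $dx$ have nonpositive increments, so that sign is exactly what would have to be proved.
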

\begin{proof}
Write $a,c_1$ for its prefix and residual profiles, and
$e_0=\mathbf1_{\mathcal K}$. Outside $\mathcal K$, the initial prefix
shadow and selected words occupy only the first $c$ physical layers.
Consequently,
\begin{equation}
 a\le c\e+(H-c)e_0.
 \label{eq:regv-residual-row-bound}
\end{equation}
The residual after all of $\Omega_2$ is selected consists of words
$w<P$ with $w\bmod Q\ge\mu_1$. Its mass is at most $P(1-y)$:
in the partial $Q$-block use $(s-\mu_1)_+\le(1-y)s$.
The additional mass in \eqref{eq:terminal-short-gaps} is
$(H-c)M\delta$. Thus
\begin{equation}
 |c_1|\le Hd(x-x^2+xy-y^2).
 \label{eq:regv-profile-sum-closed}
\end{equation}

Let $\widetilde c=C_3(T)\e$. Since $F^{(1)}\supseteq T$,
$0\le c_1\le\widetilde c$. Counting complete $Q$ tails after the
cutoff $P$ gives $|T|\ge(1-x)(Hd-P)$, hence
\begin{equation}
 |\widetilde c|\le Hd\,x(1-y),\qquad
 e_0^{\mathsf T}\widetilde c\le(1-x)|\widetilde c|.
 \label{eq:enlarged-core-profile}
\end{equation}
For the second inequality, decompose $\widetilde c$ into the
$g$-periodic short-gap profile and a prefix-truncated $g$-periodic
tail profile. The former has exactly fraction $1-x$ of its mass on
$e_0$. In each $Q$-block of the latter, the first $c$ periods are full
whenever any of the last $H-c$ periods is present; the tail average
is therefore at most the full-block average.

Equations~\eqref{eq:regv-residual-row-bound}--
\eqref{eq:enlarged-core-profile} yield
\begin{equation}
 \frac{\Phi(F^{(1)})}{H^2d}
 \le P_B=x[(1-x+x^2)(1-x+\varepsilon)-\varepsilon^2].
 \label{eq:terminal-PB}
\end{equation}
When $H\mid M$, the short-gap profile starts at the $Q$-aligned
index $cR$. The same periodic comparison applies to $c_1$ itself and gives
\begin{equation}
 \frac{\Phi(F^{(1)})}{H^2d}
 \le P_A=(1-x+x^2)[x(1-x)+\varepsilon(x-\varepsilon)].
 \label{eq:terminal-PA}
\end{equation}
Here $P_B-P_A=(1-x)^2\varepsilon^2\ge0$.
For $c\ge2$, maximization at $\varepsilon=1/H$ gives the upper bounds
\begin{align*}
 H\text{ odd}:&\quad
 \frac{(H+1)(3H^3+3H^2-7H+1)}{16H^4},\\
 H\text{ even}:&\quad
 \frac{3H^3+6H^2-4H-8}{16H^3}.
\end{align*}
They are below $1/4$ for $H\ge5$; the subcritical values with
$H\ge3$ are also strict. The remaining small-scale cases
are treated in Appendix~\ref{app:regv-qh-gt-d}.

For $(H,c)=(3,2),(4,3)$ with $H\mid M$, first suppose
$\varepsilon>1/H^2$, equivalently $m_1<(H-1)d/H$.
On the first $c-1$ groups of each $Q$-block in $[d-R,d)$,
the prefix profile is $c-1$, one less than
\eqref{eq:regv-residual-row-bound}. Their residual mass is
$(c-1)d\varepsilon/H$. Subtracting it gives
\[
 \Phi(F^{(1)})/(H^2d)\le P_A-(c-1)\varepsilon/H^3.
\]
For $H=3$ its maximum is $187/756$; for $H=4$ its maximum on
$[0,1/H]$ is $63/256$. If $\varepsilon\le1/H^2$, use $P_A$ directly:
its respective maxima are $161/729$ and $767/4096$.
All four bounds are strictly below $1/4$.

For $H=2$, $P_A<1/4$ throughout the aligned range; the only
enlarged-profile exception has $Q=d$ and is covered by the appendix.
For a subcritical first packet, the column indices are unique.
If $q$ is even, strictness supplies integral slack. If $q$ is odd,
$x\le(H-1)/(2H)$, and differentiation in
\eqref{eq:terminal-PB} gives
\begin{align*}
 1/4-P_B
 &>\frac{H^4-6H^3+16H^2-10H+3}{16H^4}\\
 &\ge\frac{3}{4H^3}.
\end{align*}
The last inequality follows from
$(H-3)(H^3-3H^2+7H-1)\ge0$.
Since $d\ge3H$, the real margin exceeds $9/4$.
The integral-slack alternative therefore applies. In the supercritical
branch the packet lies in one physical layer and is acyclic.
\end{proof}

\paragraph*{The remaining short gaps}
Partition \eqref{eq:terminal-short-gaps} into
\[
 \begin{aligned}
 \mathcal P_2&=\{w\in D_1:j(w)\ge dx\},\\
 \mathcal P_3&=\{w\in D_1:j(w)<dx\}.
 \end{aligned}
\]
After $F^{(1)}$, the residual row profile at $i\ge dx$ is
$g$-periodic, of mass $\delta$ per period. Moreover $a\ge c-1$
and $a=H$ on $\mathcal K$.
Process $\mathcal P_2$ by source $Q$-blocks from right to left.
Each such packet has $\delta$ words and positive linear term at most
$\delta$. For $c\ge2$, its negative linear term is at least
$(c-1)\delta$, so the total increment is nonpositive.

For $c=1,H\mid M$, group the source blocks into $HQ$ intervals.
The first $H-1$ blocks encountered have row image in $\mathcal K$;
the last has row image in the head. Relative to the preceding
interval endpoint, an accumulated packet of mass $N$ and supported
mass $N_t$ has increment at most $N-HN_t$.
During its tail phase $N_t=N$; at its head,
$N_t=(H-1)\delta$, $N\le H\delta$, giving a nonpositive increment.
An initial truncated interval is a terminal string of these blocks.
For $c=1,M<H$, the threshold $dx=Mg$ excludes the short gap in
source block zero. Every remaining block has index $1\le k<M$,
and its row-block index $yM+k$ is nonzero modulo $H$. Its row image
lies in $\mathcal K$, so its increment is at most $(1-H)\delta$.
Every constituent packet is in one physical layer and is interpolable.

Finally, $\mathcal P_3$ has row indices at least $dx$ and column
indices below $dx$. The disjoint-index interpolation lemma connects
the current endpoint to the full endpoint $\Omega_2$.

\begin{theorem}[Closure of Regime V]
\label{thm:regv-closure}
For every parameter tuple in Regime V, every prescribed middle-layer
cardinality under the Kraft bound is attained by a nested feasible chain.
\end{theorem}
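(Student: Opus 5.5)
The proof is an assembly of the Regime~V constructions already established, arranged as one nested chain from \(\varnothing\) to \(\Omega_2\). The argument splits by the position of \(c=\lceil\mu_1/g\rceil\) relative to \(H/2\). In each case every transition between consecutive endpoints is covered by Theorem~\ref{thm:matrix-interpolation}, either through acyclicity or through one unit of integral slack. Since \(\mu_2\le|\Omega_2|\) by \eqref{eq:omega-size}, the chain then contains a feasible set of cardinality \(\mu_2\), and Corollary~\ref{cor:quarter} completes the longest layer.

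\emph{Subcritical case \(c\le H/2\).}
If \(m_1=0\), the complete \(\mathcal K\)-layers chain of the subcritical subsection reaches the core \(T\). If \(m_1>0\), the chain is assembled in three stages.
\begin{enumerate}
\item \emph{Entry.} I enter \(G_0\) using Proposition~\ref{prop:regv-lower-bridge} in the lower phase \eqref{eq:regv-lower-phase}. In the upper phase I use Proposition~\ref{prop:regv-upper-bridge}, together with Lemma~\ref{lem:regv-balanced-entry} in the balanced branch \eqref{eq:regv-balanced-long}. The binary boundary case enters \(F^{(1)}\) directly via the appendix.
\item \emph{Promotion.} I promote through \(G_0\subseteq\cdots\subseteq G_{h-1}=F_{\rm V}^\star\). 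Each endpoint is feasible by Proposition~\ref{prop:regv-common-endpoint}. Each step interpolates by Proposition~\ref{prop:regv-subcritical-promotion}: the column-unique difference \eqref{eq:regv-promotion} combined with the integral slack supplied by the appendix.
\end{enumerate}

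\emph{Supercritical case \(c>H/2\).}
In the noncritical branch, \eqref{eq:regv-supercritical-easy} certifies every tail-first complete-group endpoint in layers \(H-1,\dots,c\). Single-layer differences then interpolate by Lemma~\ref{lem:column-layer-interpolation}. In the critical odd branch \eqref{eq:regv-critical-odd}, Proposition~\ref{prop:regv-critical-construction} supplies the chain up to \(\bigcup_{y\ge c}\mathcal K_y\) when \(H\ge5\), and up to \(T\) when \(H=3\).

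\emph{Terminal stage.} Every branch above ends at the incoming endpoint \(F_{\rm V}^\star\subseteq F^{(1)}\).
\begin{enumerate}
\item \emph{First packet.} Lemma~\ref{lem:first-packet-endpoint} makes \(F^{(1)}\) feasible and the first packet \(\mathcal P_1\) interpolable. The small-scale exceptions are deferred to the appendix.
\item \emph{Aligned case.} If \(\nu=0\), then \(\Omega_2=T\subseteq F^{(1)}\) and the chain is already complete.
\item \emph{Short gaps, case \(\nu>0\).}
\begin{itemize}
\item I process \(\mathcal P_2\) by source \(Q\)-blocks from right to left. Each single-layer packet has nonpositive increment, and \(F^{(1)}\) is feasible, so every cumulative endpoint stays feasible. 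Lemma~\ref{lem:column-layer-interpolation} fills each packet.
\item Finally, \(\mathcal P_3\) has row indices \(\ge dx\) and column indices \(<dx\), so Lemma~\ref{lem:disjoint-interpolation} joins the current endpoint to \(\Omega_2\). Its terminal feasibility comes from \eqref{eq:terminal-endpoint}.
\end{itemize}
\end{enumerate}

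\emph{Main obstacle.} The hard part is bookkeeping, not estimates: every parameter tuple must fall into exactly one branch, and each branch's exit endpoint must coincide with the next stage's entry endpoint. I would check this explicitly at three points.
\begin{itemize}
\item The identification of \(F_{\rm V}^\star\) in each branch, as listed before Lemma~\ref{lem:first-packet-endpoint}.
\item The supercritical exit at layer \(c\), where \(\mathcal P_1\) lies in the single layer \(c-1\).
\item That the case split \(\{m_1=0\}\), lower phase, upper phase, balanced branch, binary boundary, noncritical, and critical odd together exhaust Regime V.
\end{itemize}
The exhaustiveness uses that \(H,M,R\) are \(q\)-powers, so either \(M<H\) or \(H\mid M\). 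It also uses that \(c>H/2\) with \(\ell<(H+1)/2\) forces odd \(q\) and \(\ell=(H-1)/2\).
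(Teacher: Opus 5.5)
Your proposal is correct and follows the paper's own proof, which is the same assembly. The aligned case reaches $T=\Omega_2$. The entry bridges, standard promotion, binary direct entry and supercritical chains reach $F_{\rm V}^\star$. Then Lemma~\ref{lem:first-packet-endpoint} and the packets $\mathcal P_2,\mathcal P_3$ carry the chain to $\Omega_2$, whose feasibility comes from the terminal saturation bound.

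One small correction concerns your exhaustiveness remark. For even $q$ the combination $c>H/2$, $\ell=H/2<(H+1)/2$ does occur, so it does not force odd $q$. It is nevertheless covered, because the paper's noncritical branch handles every even $q$ with $c>H/2$, and only for odd $q$ do these conditions force $\ell=(H-1)/2$.
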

\begin{proof}
The aligned subcritical construction reaches \(T=\Omega_2\).
For \(m_1>0\), the usual entry bridges and standard promotions,
the binary direct entry, and the supercritical constructions reach
$F_{\rm V}^\star$. Lemma~\ref{lem:first-packet-endpoint} and the two
remaining packets complete the chain. The terminal saturation bound
ensures that it covers the prescribed $\mu_2$.
\end{proof}

\section{Proof of the Main Result}

We combine the six regimes with nonoverlapping extension to prove the theorem and give the selection procedure.

\begin{proof}[Proof of Theorem~\ref{thm:main-three-length}]
The cases \(s=1\) and \(s=2\) follow from the definitions and
Theorem~\ref{thm:natural-order}. For \(s=3\) choose \(F_1=F_{\lambda_1}(\mu_1)\).
If \(\lambda_3\geq2\lambda_2\), that theorem supplies \(\mu_2\) compatible level-\(\lambda_2\) words,
and Proposition~\ref{prop:nonoverlap} completes the extension.
Otherwise \(\lambda_2<\lambda_3<2\lambda_2\), and the matrix alternative together with the position
of \(\lambda_1\) relative to \(\alpha,\beta\) gives Table~\ref{tab:six-regimes}.

Theorems~\ref{thm:reverse-uniform-regimes},
\ref{thm:third-terminal-configuration}, and \ref{thm:regv-closure}
cover all six regimes and produce \(|F_2|=\mu_2\) with \(\Phi(F_2)\leq q^{\lambda_3}/4\).
Corollary~\ref{cor:quarter} supplies \(\mu_3\) admissible level-\(\lambda_3\) words.
Numerical indices determine all choices and tie-breaking.
\end{proof}

\paragraph*{Construction procedure}
For three lengths, the proof gives the following procedure.
\begin{enumerate}
\item Select the natural-order shortest layer
\(F_1=F_{\lambda_1}(\mu_1)\).
\item If \(\lambda_3\geq2\lambda_2\), use
Theorem~\ref{thm:natural-order} to select \(F_2\) and proceed to step 4.
Otherwise determine the regime from Table~\ref{tab:six-regimes}.
\item In Regimes I, II, and IV, select \(\mu_2\) words in the prescribed
reverse order. In Regimes III, V, and VI, follow the endpoint chain
until it reaches the required cardinality; within the packet containing
that cardinality, use Theorem~\ref{thm:matrix-interpolation}.
Break ties by numerical index and stop at \(|F_2|=\mu_2\).
\item Scan length-\(\lambda_3\) words in numerical order and retain
the first \(\mu_3\) having no prefix or suffix in \(F_1\cup F_2\).
\end{enumerate}

\begin{remark}[Direct implementation cost]
\label{rem:complexity}
Let \(N_i=q^{\lambda_i}\), with numerical indices stored in unit-cost
machine words. Maintain the two profiles and scan the remaining packet
to minimize \eqref{eq:current-increment} at each greedy step. At most
\(N_2\) steps, each scanning at most \(N_2\) candidates, cost
\(O(N_2^2)\) operations, including interval generation and profile
updates. Prefix and reversed-prefix tries for
\(F_1\cup F_2\) test each longest word in \(O(\lambda_3)\) time.
The total cost is \(O(N_2^2+N_3\lambda_3)\), with
\(O(N_2+q^\beta+\mu_3)\) words of storage for the candidates,
profiles, tries, and output indices. Packet intervals generate the
candidates without a dense matrix.
\end{remark}

\section{Conclusion}

The construction proves the \(q\)-ary \(3/4\) conjecture for at most three
lengths. The overlap matrix, fixed-cardinality interpolation, and
phase-aligned \(g\)-groups control the coupled profiles deterministically.

For four or more lengths, subsequent extensions require several
coupled profiles. A multilevel interpolation principle preserving
several matrix objectives is a natural next step.

\section{Row-Count Compensation at Standard Endpoints}
\label{app:regv-phase-counts}

We prove the row-overlap bound \eqref{eq:regv-inner-common} by combining
ordinary tail surplus with boundary compensation. Use
\eqref{eq:regv-standard-endpoints}, counting rows with multiplicity in $u=u_p$.

\begin{lemma}[Standard endpoint compensation]
\label{lem:regv-local-phase-counts}
For every admissible standard endpoint $G_p$, put
\[
 J_t=[tQ,(t+1)Q),\quad N_t=u(J_t),\quad
 S_t=u(J_t\cap\mathcal K)-\rho N_t.
\]
Then $S_t\ge0$. For $s\le cg$, the boundary defect $L_b$ below
satisfies $L_b+pS_b\ge0$. Hence \eqref{eq:regv-inner-common} holds.
\end{lemma}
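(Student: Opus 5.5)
We first make the row profile $u=u_p$ of $G_p$ explicit, one layer at a time, using the row map of physical layer $y$. A word $yz$ with column $z$ has row index $i=yR+\lfloor z/H\rfloor$. Hence the $Q$-blocks $J_t$ of rows are images of $H$ consecutive source $Q$-blocks of columns from a single layer when $Q\le R$, and of a fraction of one source block otherwise. Inside a complete layer $\mathcal K_y$, every source $Q$-block contributes its last $h$ groups. Under the row map these groups land in the last $h$ residues modulo $H$ of the row positions inside $J_t$, which is the periodic tail count $\Theta_H$ of \eqref{eq:regv-periodic-tail}.

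The bound $S_t\ge0$ then follows from the monotone tail structure. On each $J_t$, the row count is a sum of layer contributions, and each is a terminal string of the column tail array $\mathcal B\subseteq\mathcal L\subseteq\mathcal K$. A terminal string of the set of columns with $z\bmod Q\ge cg$ places at least the fraction $\rho$ of its mass on $\mathcal K$. We prove this with the same right-loaded truncation argument as in the order property of reverse truncation: the indicator of $\mathcal K$ within a period is left-light, and keeping only the largest elements preserves the inequality. Summing over layers gives $u(J_t\cap\mathcal K)\ge\rho N_t$.

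Next we assemble \eqref{eq:regv-inner-common}. We have $v_p=p\mathbf 1_{\mathcal K}+\mathbf 1_{\mathcal L}+\mathbf 1_{\mathcal B}$, so
\[
(e_1+v_p)^{\mathsf T}u\ \ge\ p\,\mathbf 1_{\mathcal K}^{\mathsf T}u+(e_1+\mathbf 1_{\mathcal L}+\mathbf 1_{\mathcal B})^{\mathsf T}u .
\]
For $t$ with $J_t\subseteq[\tau,d)$, both $\mathcal L$ and $\mathcal B$ agree with $\mathcal K$ there, which gives at least $(p+2)\rho N_t$ from the $S_t\ge0$ bound. This is at least $p\rho N_t+2\rho N_t$, the $N_2$ part. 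For $J_t\subseteq[0,\tau)$ we have $e_1=1$ except in the one boundary block $b$, so we get $p\rho N_t+N_t$ from $e_1$ alone, the $N_1$ part. In the case $s>cg$, $\tau=(b+1)Q$ and the boundary block lies in $[0,\tau)$. There $e_1$ covers $[bQ,bQ+s)$ and $\mathcal L$ covers the complementary tail $[m_1,(b+1)Q)\cap\mathcal K$, so $e_1+\mathbf 1_{\mathcal L}\ge1$ on the rows of the selected words, and the block still contributes $\ge N_b$.

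The main obstacle is the case $s\le cg$. Here $\tau=bQ$, and block $b$ is counted in $N_2$, yet $e_1$ covers only $[bQ,bQ+s)$, which lies inside the excluded head $[bQ,bQ+cg)$. On $\mathcal K\cap J_b$ we have $\mathcal L=\mathcal K$ but $\mathcal B$ starts only at $bQ+cg+s$. We define the defect
\[
L_b=(e_1+\mathbf 1_{\mathcal L}+\mathbf 1_{\mathcal B})^{\mathsf T}u|_{J_b}-2\rho N_b ,
\]
which may be negative because of the rows in $[bQ+cg,bQ+cg+s)$ that receive only weight one. The plan is to bound the mass of these rows by the surplus $S_b$ times $p$. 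Block $b$ receives its rows from all $p$ complete layers as well as the boundary layers. The deficient rows form an initial segment of $\mathcal K\cap J_b$ of length $s\le cg$, while the head $e_1$ rows $[bQ,bQ+s)$ carry weight one without being charged $\rho$. Comparing an initial segment of the $\mathcal K$-part against the $\rho$-share of the whole block via the $\Theta_H$ count should give $L_b\ge -pS_b$. If that fails for $p=0$, we use directly that $\mathcal B_{H-2}$ then has no rows in the deficient segment by the column cutoff $V$. Summing all blocks gives $p\rho N_p+N_1+2\rho N_2$.
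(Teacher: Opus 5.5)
There is a genuine gap. Your assembly step is correct. Expanding $v_p=p\mathbf 1_{\mathcal K}+\mathbf 1_{\mathcal L}+\mathbf 1_{\mathcal B}$ block by block reproduces the paper's decomposition $E=p\sum_tS_t+2\sum_{t>b}S_t+L_b$, with no $L_b$ term when $s>cg$, and your $L_b$ is the paper's. The gaps are in the two inequalities that carry the lemma.

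\textbf{The bound $S_t\ge0$.} Your argument sums per-layer bounds, and this fails when $d<HQ$, i.e.\ $R<Q$. This case is common; for example $\lambda_1=\beta$ gives $M=1$. In this case $J_t$ is not ``a fraction of one source block''. A source $Q$-block $j$ of layer $y$ maps to the single row $g$-block of index $yM+j$. Hence $J_t$ collects the images of $H/M$ different layers, each filling $M$ consecutive row $g$-blocks. A layer landing in the first $c$ row $g$-blocks of $J_t$ contributes nothing to $\mathcal K$.

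Concretely, take $q=3$, $(\lambda_1,\lambda_2,\lambda_3)=(2,3,4)$, $\mu_1=1$, so $c=1$, $\rho=2/3$, $\mathcal K=[3,9)$. The endpoint $G_1=\mathcal K_2\cup\mathcal L_1\cup\mathcal B_0$ contains $\mathcal B_0=\{020,021,022\}$, all with row index $2\notin\mathcal K$.

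Thus $S_t\ge0$ is a property of the aggregate sequence of row-$g$-block masses, not of individual layers. That sequence is a nondecreasing terminal string except in a row block containing both boundary layers. There the full groups of the $\mathcal B$ layer precede the empty groups of the $\mathcal L$ layer, so the profile dips. The paper's separate counts (cases $j\le c$ and $j>c$, for both $s\le cg$ and $s>cg$) handle exactly this dip, and your argument does not see it.

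\textbf{The bound $L_b+pS_b\ge0$.} Here you give only a plan (``should give''), and its accounting is off. On $J_b$ every row is charged $2\rho\ge1$, while $\chi$ equals $1$ on $[0,s)$, $0$ on $[s,cg)$, $1$ on $[cg,cg+s)$, and $2$ on $[cg+s,Q)$. So the head rows are deficient as well, not only $[cg,cg+s)$.

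Your $p=0$ fallback is false. In the example above, $G_0=\mathcal L_2\cup\mathcal B_1$ has $\mathcal B_1=\{120,121,122\}$ with row index $5\in[cg,cg+s)=[3,6)$. There $L_0=3\ge0$, but for a reason your proposal lacks, and you need two missing ideas.
\begin{itemize}
\item When the row profile on $J_b$ is a reference right tail, $L_b\ge0$ holds without help from $S_b$. This is because $\chi-2\rho$ has zero reference mean and changes sign only once, upward at $cg+s$. You need this, since $p=0$ occurs.
\item In the shared-block case you must split the reference words into $F_D,F_B,F_C,F_A$ and use $n_A\ge n_B$, $n_D\ge n_C$ together with the averages of $\chi_\eta$. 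You must also show that whenever $L_b$ alone can be negative, the geometry forces $p\ge1$, and $p\ge2$ in the binary two-layer configuration.
\end{itemize}
As written, the proposal proves $S_t\ge0$ only for $d\ge HQ$ and gives no valid bound on $L_b$ at all.
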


\begin{proof}
Let $E=(e_1+v_p)^{\mathsf T}u-p\rho N_p-N_1-2\rho N_2$.
For $s\le cg$, in the local coordinate $r=i-bQ$ define
\begin{align*}
 \chi(r)&=2\mathbf1_{[cg,Q)}(r)+\mathbf1_{[0,s)}(r)
                         -\mathbf1_{[cg,cg+s)}(r),\\
 L_b&=\sum_{r=0}^{Q-1}u_{bQ+r}\chi(r)-2\rho N_b.
\end{align*}
Expanding $e_1+v_p=e_1+p\mathbf1_{\mathcal K}
+\mathbf1_{\mathcal L}+\mathbf1_{\mathcal B}$ gives
\begin{equation}
 E=p\sum_tS_t+2\sum_{t>b}S_t+
 \begin{cases}0,&s>cg,\\L_b,&s\le cg.\end{cases}
 \label{eq:app-regv-compensation}
\end{equation}

\emph{1) Tail surplus.}
Each column $Q$-block contains $H$ groups of size $g$;
$\mathcal K$ retains the last $h$. Completing the source layers to
copies of $\mathcal K$ gives a $g$-periodic reference row profile
$\widehat u$, of mass $w_0=hg$ per period. With
$\widehat u(U)=\sum_{r\in U}\widehat u_r$ and $W=Hw_0$,
\[
 \widehat u([cg,Q))=\rho W,\qquad
 \sum_{r=0}^{Q-1}\widehat u_r\chi(r)=2\rho W.
\]
The second identity follows from
$\widehat u([0,s))=\widehat u([cg,cg+s))$.
Every reference right tail has $\mathcal K$-fraction at least $\rho$.
It also has $\chi$-average at least $2\rho$: the weighted deviation
$\chi-2\rho$ has total zero and changes sign only from nonpositive
to nonnegative, at $cg+s$. A partial first row is allowed since its
words have equal weight.

If $d\ge HQ$, then $Q\mid R$, and every nonempty row-block profile
is a reference right tail. For $d<HQ$, only a block containing
both boundary layers may fail this property. Thus $S_t\ge0$ outside
these blocks, and $L_b\ge0$ if $J_b$ is a right tail.

\emph{2) Boundary compensation.}
Put $M=d/Q<H$. A physical layer occupies $M$ row $g$-periods.
In a block containing both boundary layers, let $jg$ be the local
start of the $\mathcal L$ layer; $M\le j\le H-M$.
For $s\le cg$, partition the reference words in increasing row order
as $F_D,F_B,F_C,F_A$: $F_B$ is the actual $\mathcal B$ part,
$F_C$ completes $\mathcal L$ to $\mathcal K$, and $F_A$ consists
of the $\mathcal L$ part and later complete source layers. Writing
$n_X=|F_X|$, the layer cutoffs give
\[
 \begin{aligned}
 n_A&=(H-j-b)w_0,& n_B&=(M-b)w_0-s,\\
 n_C&=bw_0,& n_D&=(j-M+b)w_0+s.
 \end{aligned}
\]
In particular, $n_A\ge n_B$ and $n_D\ge n_C$.
The $F_B$ part ends at $jg$ and $F_A$ starts in period $j+b$.
Unless the entire $\mathcal K$-tail is selected or all selected
rows lie in it, direct counting gives
\[
 HS_t\ge
 \begin{cases}
 [h(c-M)+b(h-c)]w_0,&j\le c,\\
 [h(j-M)+b(H-2c)]w_0+hs,&j>c.
 \end{cases}
\]
For $s>cg$, put $e=s-cg\in(0,w_0)$. The two selected masses are
$(M-b-1)w_0$ and $(H-j-b)w_0-e$; the corresponding lower bounds are
\[
 \begin{cases}
 [h(c-M)+(b+1)(h-c)]w_0,&j\le c,\\
 [h(j-M)+b(H-2c)+h]w_0-ce,&j>c.
 \end{cases}
\]
All four bounds are nonnegative. This proves $S_t\ge0$ and
settles $s>cg$; also $L_b=2S_b$ when $s=0$.

Now take $0<s\le cg$ in $J_b$. For a reference word $w$, let
$r(w)$ be its local row index. For $\eta\in\{0,1,2\}$, define
$\chi_\eta=\chi+\eta\mathbf1_{[cg,Q)}$ and let
$\langle\chi_\eta\rangle_F$ denote its average on a nonempty set $F$.
The reference mean gives the complementary identity
\[
 L_b+\eta S_b=(2+\eta)\rho(n_C+n_D)
       -\sum_{w\in F_C\cup F_D}\chi_\eta(r(w)).
\]
Filling $F_C$ gives a right tail. Hence $L_b\ge0$ if
$n_C=0$ or $\langle\chi\rangle_{F_C}\le2\rho$.
Otherwise $\chi=2$ on $F_A$; $n_B=0$ is immediate. The following bounds apply respectively to
$\rho=1/2$, $\rho\le3/4$ with $\langle\chi\rangle_{F_B}\ge1$,
and $\rho\ge3/4$:
\[
 \begin{aligned}
 2n_A&\ge n_A+n_B,\\
 2n_A+n_B&\ge\tfrac32(n_A+n_B),\\
 2n_C+n_D&\le\tfrac32(n_C+n_D).
 \end{aligned}
\]
In the last case, $\chi=2$ on $F_B$ gives the result directly;
otherwise $\chi\le1$ on $F_D$, and the complementary identity applies.

Thus assume $1/2<\rho<3/4$ and
$\langle\chi\rangle_{F_B}<1$. Then $F_B$ starts before $cg$,
and $\chi_\eta\le1$ on $F_D$. Moreover, $p\ge1$: otherwise
$b=M-1$, $j=H-M$, and $H-M-1<c$; since $M$ is a proper
$q$-power divisor of $H$, this forces $q=2$, $M=c=H/2$,
contrary to $\rho>1/2$. As $\chi_1\le3$ on $F_C$,
\[
 \sum_{F_C\cup F_D}\chi_1
 \le3n_C+n_D\le2(n_C+n_D),
\]
so $L_b+S_b\ge0$ for $\rho\ge2/3$.

For $1/2<\rho<2/3$, put
$F_-=\{w\in F_B:r(w)<cg\}$, $F_+=F_B\setminus F_-$,
and $n_\pm=|F_\pm|$.
If $n_+=0$, then
$n_B+n_D\le(1-\rho)W$, $n_A+n_C\ge\rho W$, and
$n_An_D\ge n_Bn_C$ imply $(1-\rho)n_A\ge\rho n_B$,
hence $L_b\ge0$.
Otherwise
\[
 n_-+n_D=(1-\rho)W,\qquad n_A+n_C+n_+=\rho W.
\]
If $n_D\ge n_-+n_C$, then
\[
 \begin{aligned}
 (1-\rho)(n_A+n_+)&\ge2\rho n_-+(2\rho-1)n_C,\\
 (1-\rho)n_A&\ge\rho n_-,\\
 3n_A+2n_+&\ge3\rho(n_A+n_B),
 \end{aligned}
\]
using $n_A\ge n_+$ and $2-3\rho>0$. Since $\chi_1=3$
on $F_A$ and $\chi_1\ge2$ on $F_+$, this proves $L_b+S_b\ge0$.

\emph{3) Two-layer count.}
Suppose $n_D<n_-+n_C$. A complete reference source layer before
$F_B$ would give $n_D\ge|\mathcal K|+n_C+s\ge n_-+n_C$.
Thus $F_B$ is in the first of the $k=HQ/d$ source layers in $J_b$,
with row image in $[0,Q/k)$. Crossing $cg>Q/3$ requires $k=2$,
so $q=2$ and $M=H/2$. Row $Q$-blocks pair physical layers
$(2r,2r+1)$; hence adjacent boundary layers share a block only for
even $p$. Since $p\ne0$, $p\ge2$.
Their indices in $J_b$ are $2b,2b+1$, and admissibility gives
$2b\ge\ell=c-1$. Consequently,
\[
 n_A=(H/2-b)w_0,\quad n_B=n_A-s,\quad n_+=(H/2-c)w_0.
\]
Since $\chi_2=4$ on $F_A$ and $\chi_2\ge3$ on $F_+$,
\begin{align*}
 L_b+2S_b&\ge4n_A+3n_+-4\rho(n_A+n_B)\\
 &=(H-2c)(4b/H-1/2)w_0+4\rho s\ge0.
\end{align*}
Here $H/3<c<H/2$ implies $H\ge8$ at the binary scale, and
$b\ge\lfloor c/2\rfloor\ge H/8$.
Each branch uses at most $p$ complete layers, so
\eqref{eq:app-regv-compensation} gives $E\ge0$.
\end{proof}

\section{Integral Slack at Standard Endpoints}
\label{app:standard-integral-slack}
For the standard promotions, we prove that $G_p$, $1\le p\le h-1$,
with $m=m_1>0$ satisfies
\begin{equation}
 \Phi(G_p)\le\lfloor H^2d/4\rfloor-1.
 \label{eq:standard-unit-slack}
\end{equation}
Use the notation of Regime V and put
\[
 X=P+N-Hd/2,\quad
 \Gamma=(e_1+v)^{\mathsf T}u-N(N+m)/d.
\]
At these endpoints $\Xi=0$. The common endpoint proof gives
\begin{align}
 H^2d/4-\Phi(G_p)&=X^2/d+\Gamma,\nonumber\\
 N+m&=(p+2)\rho d-(2\rho-1)\tau,\label{eq:unit-count}\\
 \Gamma&\ge(2\rho-1)(N\tau/d-u([0,\tau))).\nonumber
\end{align}
If $c=H/2$, then $X=pd/2$ and the square is at least $d/4\ge1$.
Suppose $c<H/2$, and set $a=H-2c\ge1$, $h=H-c$.

\subsection{Internal cutoffs}
The physical-layer masses are nondecreasing, from zero through
$|\mathcal B|,|\mathcal L|$ to the complete-layer mass $\rho d$.
Within each layer, the $Q$-block masses are also nondecreasing.
For nondecreasing masses $z_0,\ldots,z_{H-1}$ with sum $N$, their
piecewise linear prefix mass $L(t)$ satisfies
\begin{equation}
 Nt-L(t)\ge(z_{H-1}-z_0)\min(t,1-t).
 \label{eq:quantitative-tail}
\end{equation}
Indeed, a jump of size $\Delta$ at index $k$ contributes
$\Delta\min\{(H-k)t,k(1-t)\}$, which is at least
$\Delta\min(t,1-t)$. Sum over the jumps. The actual row prefix at
$\tau$ is bounded by $L(\tau/d)$, because $H\tau$ is aligned with
the source $Q$-blocks.

For $0<\tau<d$ and $p+2<H$, \eqref{eq:quantitative-tail} gives
$\Gamma\ge a\rho g\ge1$. If $p+2=H$, then
\[
 X/d=H/2-1-(a/H)(\tau/d).
\]
For $H\ge4$, this is at least $(H-2)^2/(2H)\ge1/2$.
For $H=3,c=1$, the endpoint count and $|\mathcal B|\le|\mathcal L|$
give $\rho d-|\mathcal B|\ge m/2$. Hence
$\Gamma\ge\nu/2$ and $X^2/d\ge d/36$.
Here $M\ge3$, $d\ge27$, so their sum exceeds one.

\subsection{Periodic counts at the two ends}
Put $S=u(\mathcal K)-\rho N$. Each complete $\mathcal K$ layer
maps a source $Q$-block to a row $g$-block of mass $hg$.
The set $\mathcal K$ retains the last $h$ row blocks in each period
of $H$. Full periods contribute zero to $S$; a terminal string
of $t$ complete row blocks, with $r=t\bmod H$, contributes
\begin{equation}
 S_{\rm full}=hg\{\min(h,r)-\rho r\}.
 \label{eq:unit-periodic-excess}
\end{equation}

\emph{Right boundary: $\tau=d$.} Here $\mathcal B=\varnothing$ and
$\mathcal L=[m,d)$, of mass $L_*=HM(g-\nu)<hg$.
Writing $r=pM\bmod H$ gives
\[
 \Gamma=pS,\quad S=hg\{\min(h,r)-\rho r\}
       +(\mathbf1_{\{r<h\}}-\rho)L_*.
\]
For $r<h$, $\Gamma\ge pcM(g-\nu)\ge1$.
For $r\ge h$, $S\ge\rho(hg-L_*)$. Since $hg-L_*$ is a positive
integer, it suffices that $p\ge2$. If $p=1$ and $r\ne0$, the common
$q$-power scales give $r=M<H$, hence $M\le H/2<h$, a contradiction.

\emph{Left boundary: $\tau=0$.} Here $m\le cg$, $\mathcal L=\mathcal K$, and
$\mathcal B=\mathcal K\setminus J$, where $J=[cg,cg+m)$.
Put $n=p+2\ge3$, $t_0=(H-n)M$, $r=nM\bmod H$, and
$\varepsilon=\mathbf1_{\{t_0\bmod H\ge c\}}$.
Deleting $J$ from layer $H-n$ of the $n$ complete layers removes mass
$m$ from row block $t_0$. Since $v=n\mathbf1_{\mathcal K}-\mathbf1_J$,
\begin{align}
 S&=hg\{\min(h,r)-\rho r\}-(\varepsilon-\rho)m,\nonumber\\
 \Gamma&=nS+u([0,m))-u(J).\label{eq:unit-left-cut}
\end{align}
For $n=H$, the complete row patterns on $[0,m)$ and $J$ have equal
mass; deletion changes their difference by at least $-m$.
Thus $S=\rho m$ and $\Gamma\ge(h-1)m\ge1$.

For $n<H,t_0\ge H$, both short-interval terms in
\eqref{eq:unit-left-cut} vanish. If $\varepsilon=0$, $S\ge\rho m$;
if $\varepsilon=1$, $1\le r\le h$ and
\[
 S=\frac cH(hgr-m)\ge\frac{c(H-2c)g}{H}\ge1.
\]
For the last bound, $c(H-2c)\ge H-2$ when $H$ is even, and
$c(H-2c)\ge(H-1)/2$ when $H$ is odd; correspondingly $g\ge2$ or $3$.
If $0<t_0<c$, the prefix average of the row pattern gives
$u(J)\le hm$, and
\[
 \Gamma\ge\rho(H-n)(nhgM-m)\ge1.
\]
Finally, for $c\le t_0<H$, put $z=m/g\le c$.
Here $r=H-t_0\le h$, $\varepsilon=1$, $n\ge\max(3,r)$, and
\begin{align*}
 \Gamma/g
 &\ge\frac{nc}{H}(hr-z)-h(z-h+r)_+\\
 &\ge\frac{\max(3,r)c}{H}(hr-c)-h(r-a)_+\ge1.
\end{align*}
To check the last inequality, write $H=2c+a$, $h=c+a$.
If $r\le a$, the expression is at least $3ca/H\ge1$.
For $r=2>a$, it equals or exceeds $(c^2+3c-1)/(2c+1)\ge1$.
For $r\ge3,r>a$, multiply the expression minus one by $H$:
\begin{align*}
 &c^2\{r(r-3)+2a\}
 +c\{a[r(r-3)+3a]-2\}\\
 &\hspace{4em}-a^2(r-a)-a.
\end{align*}
The coefficients of $c^2$ and $c$ are positive, so put $c=1$.
Writing $r=a+t$ leaves
\[
 (a+1)t^2+(a^2-a-3)t+(a+1)(a^2-2).
\]
For $a=1$, this is $(2t+1)(t-2)\ge0$; for $a=2$ it is
$3t^2-t+6>0$; for $a\ge3$ all coefficients are positive.
This proves \eqref{eq:standard-unit-slack} in every case.

\section{The Binary Upper-Phase Entry}
\label{app:regv-binary-bridge}
Here $H=q=2$, $c=1$, $\rho=1/2$, and $2\mid M=d/Q$.
The scale $Q=d$ is covered by the $M<H$ argument in
Proposition~\ref{prop:regv-upper-bridge}.
Write $m=m_1=a+s$, $a=bQ$, $0\le s<Q$.
Usually we interpolate through $\mathcal L_1$ or $\mathcal B_0$ to
$G_0=\mathcal L_1\mathbin{\dot\cup}\mathcal B_0$.
The boundary case
\begin{equation}
 a=d/2-Q,\qquad Q/3<s<Q
 \label{eq:binary-direct-entry}
\end{equation}
instead enters the terminal residual through $\mathcal L_0$.
All three single-layer choices are admissible and phase-aligned.
For such a choice $F$, put $N=|F|$, $I=(e_1+v_F)^{\mathsf T}u_F$,
and $\mathscr S(F)=d-\Phi(F)$. Then
\begin{equation}
 \mathscr S(F)\ge\frac{(m+N-d)^2-N(N+m)}d+I.
 \label{eq:binary-bridge-slack}
\end{equation}

\emph{The usual bridge.} If $m\ge d/2$, choose $F=\mathcal B_0$.
Its row indices lie below $d/2$, so $I=N$; since $N+m\le d$,
\eqref{eq:binary-bridge-slack} is nonnegative.
For $m<d/2$ outside \eqref{eq:binary-direct-entry}, choose
$F=\mathcal L_1$. Each source $Q$-block maps to a row $g$-block.
The successive row-block masses form a nondecreasing terminal string,
and $\mathcal K$ retains the last block of each pair. All row indices
exceed $m$, so at least half the mass lies in $\mathcal L$.
Thus $I\ge N/2$, with
\[
 N=\begin{cases}(d-a)/2,&s\le Q/2,\\
 (d-a+Q-2s)/2,&s>Q/2.
 \end{cases}
\]
Equation~\eqref{eq:binary-bridge-slack} gives the respective lower bounds
\begin{align*}
 \Psi_1(a,s)&=\frac{2a^2+(6s-3d)a+d^2-6ds+4s^2}{4d},\\
 \Psi_2(a,s)&=\frac{2a^2+(2Q+2s-3d)a+d^2-3dQ+2Qs}{4d}.
\end{align*}
Both decrease with $a$ on $0\le a\le d/2-Q$.
For $a\le d/2-2Q$, their values at $d/2-2Q$ are
\[
 \begin{gathered}
 \frac{d(2Q-3s)+4(2Q-s)(Q-s)}{4d}>0,\\
 \frac{ds+2Q(2Q-s)}{4d}>0,
 \end{gathered}
\]
respectively. At $a=d/2-Q$ and $s\le Q/3$,
\[
 \Psi_1(a,s)=\frac{d(Q-3s)+2(Q-s)(Q-2s)}{4d}\ge0.
\]
Lemma~\ref{lem:column-layer-interpolation} fills both steps to $G_0$.

\emph{Direct entry into the terminal residual.}
Under \eqref{eq:binary-direct-entry}, select $F=\mathcal L_0$.
Its row indices lie in $[0,d/2)$.
For $Q/3<s\le Q/2$, the weight $e_1+v_F$ equals one there
except on $[m,a+Q/2)$, where it vanishes.
The row multiplicity on this interval is two. Hence
\[
 N=d/4+Q/2,\qquad I=N-(Q-2s),
\]
which gives
\[
 I-\frac{N(N+m)}d
 =\frac{3s-Q}{2}
  +\frac{(d-2Q)(d-2Q+4s)}{16d}\ge0.
\]
For $Q/2<s<Q$, the weight is one throughout $[0,d/2)$, so
\[
 I=N=d/4+Q-s,\qquad N+m=3d/4.
\]
Thus $F$ is feasible in both cases. Interpolate within its layer, then
add the other layer's complete $\mathcal K$ and the current layer's
remaining incomplete groups to reach $F^{(1)}$ in
\eqref{eq:terminal-first-set}. Their columns lie inside and outside
$\mathcal K$, respectively, so the difference is column-unique.
Lemma~\ref{lem:first-packet-endpoint} gives $\Phi(F^{(1)})<d$;
integrality and Theorem~\ref{thm:matrix-interpolation} fill this step.
Continue with the terminal residual.

\section{Exceptional Terminal-Residual Scales}
\label{app:regv-qh-gt-d}

The enlarged-profile estimate needs a separate check only for
\((H,x)=(2,\tfrac12),(3,\tfrac23),(4,\tfrac34)\) with \(QH>d\).
Here \(Q=d\) for \(H=2,3\), while \(Q\in\{d,d/q\}\) for \(H=4\).
Retain the complete-group endpoint $F_{\rm V}^\star$ and add the
first residual packet. Below, $F$ denotes the cumulative set
$F^{(1)}$ in \eqref{eq:terminal-first-set}; its profiles include
all preceding selections.

\emph{The binary scale.}
For \(H=2\), \(Q=d\), and \(0<K_1\leq1/4\), the cumulative prefix
profile is \(\e+e_0\). The residual parts have masses $\mu_1$ and
$d/2-\mu_1$, with weights 1 and 2, respectively. Thus
\begin{equation}
 \Phi(F)=d-dK_1=d-\mu_1<d=\frac{q^{\lambda_3}}4.
 \label{eq:app-regv-exceptional-H2}
\end{equation}

\emph{The ternary scale.}
For \(H=3\), \(Q=d\), and \(1/3<K_1\leq1/2\), direct substitution gives
\begin{align}
 \frac{\Phi(F)}{q^{\lambda_3}}
 &\leq\frac19
 \min\{4-5K_1,-9K_1^2+8K_1\}\nonumber\\
 &\leq\frac{16}{81}<\frac14.
 \label{eq:app-regv-exceptional-H3}
\end{align}

\emph{The two \(H=4\) scales.}
Here \(x=3/4\).  If \(Q=d\) and \(1/2<K_1\leq5/8\), the same substitution
yields
\begin{equation}
 \frac{\Phi(F)}{q^{\lambda_3}}
 \leq\frac{-16K_1^2+16K_1-1}{16}<\frac{3}{16}.
 \label{eq:app-regv-exceptional-H4a}
\end{equation}
If \(Q=d/q\) and \(1/2<K_1\leq9/16\), then
\begin{equation}
 \frac{\Phi(F)}{q^{\lambda_3}}
 \leq\frac{-16K_1^2+15K_1+1/8}{16}<\frac{29}{128}.
 \label{eq:app-regv-exceptional-H4b}
\end{equation}
All four displayed bounds are strictly below \(1/4\), completing the
exceptional cases.

\end{document}